\documentclass[a4paper,UKenglish]{lipics-v2018}
\usepackage{thmtools,thm-restate}
\hideLIPIcs %for HAL

\nolinenumbers 
\usepackage{stmaryrd}
\usepackage{enumerate}
\usepackage{bbm}       % to get \mathbb like for numbers and other symbols
\usepackage{mathtools} % to get \underbracket

% tikz
\usepackage[outline]{contour}     % to have halo around text
\usepackage{tikz}
\usetikzlibrary{patterns}
\usetikzlibrary{shapes.geometric}

\bibliographystyle{plainurl}% the recommended bibstyle

\usepackage{diego-macro}

% Author macros::begin %%%%%%%%%%%%%%%%%%%%%%%%%%%%%%%%%%%%%%%%%%%%%%%%
%!TEX root = main.tex

% General
\renewcommand{\epsilon}{\varepsilon}
\newcommand{\ignore}[1]{}

% Preliminaries: Automata

% Preliminaries: RPQ

% Distance automata
\newcommand{\cost}{\textit{cost}}

\usepackage[backgroundcolor=orange!50, textsize=small, disable]{todonotes}
\usepackage{comment}

\newcommand{\sidediego}[1]{\todo[backgroundcolor=blue!20]{{\bf Diego} #1}}

\newcommand{\cent}{\text{\textcent}}

\theoremstyle{plain}
\newtheorem{proposition}[theorem]{Proposition}

\newtheorem*{claim*}{Claim}
\newtheorem*{lemma*}{Lemma}

\newtheorem*{theorem*}{Main Theorem}

\title{Boundedness of Conjunctive Regular Path Queries}
%\titlerunning{A Sample LIPIcs Article} %optional, in case that the title is too long; the running title should fit into the top page column

\author{Pablo Barcel{\'o}}{Department of Computer Science, University of Chile \& IMFD Chile}{}{}{}
\author{Diego Figueira}{CNRS \& LaBRI, Universit{\'e} de Bordeaux, France}{}{}{}
\author{Miguel Romero}{University of Oxford, UK}{}{}{}

\Copyright{Pablo Barcel{\'o}, Diego Figueira, and Miguel Romero}%mandatory, pØlease use full first names. LIPIcs license is "CC-BY";  http://creativecommons.org/licenses/by/3.0/

\authorrunning{P.~Barcel{\'o}, D.~Figueira, and M.~Romero} %mandatory. First: Use abbreviated first/middle names. Second (only in severe cases): Use first author plus 'et. al.'

\subjclass{\ccsdesc[500]{Theory of computation~Database query languages (principles)} \, \ccsdesc[300]{Theory of computation~Quantitative automata}}
% mandatory: Please choose ACM 2012 classifications from https://www.acm.org/publications/class-2012 or https://dl.acm.org/ccs/ccs_flat.cfm . \eg, cite as "General and reference $\rightarrow$ General literature" or \ccsdesc[100]{General and reference~General literature}. 

\keywords{regular path queries, boundedness, limitedness, distance automata}%mandatory

%Editor-only macros:: begin (do not touch as author)%%%%%%%%%%%%%%%%%%%%%%%%%%%%%%%%%%
\EventEditors{Ioannis Chatzigiannakis, Christos Kaklamanis, Daniel Marx, and Don Sannella}
\EventNoEds{4}
\EventLongTitle{45th International Colloquium on Automata, Languages, and Programming (ICALP 2018)}
\EventShortTitle{ICALP 2018}
\EventAcronym{ICALP}
\EventYear{2018}
\EventDate{July 9--13, 2018}
\EventLocation{Prague, Czech Republic}
\EventLogo{eatcs}
\SeriesVolume{80}
\ArticleNo{}
% Editor-only macros::end %%%%%%%%%%%%%%%%%%%%%%%%%%%%%%%%%%%%%%%%%%%%%%%

\funding{Barcel\'o is partially funded by the Millennium Institute for Foundational Research on Data and Fondecyt 1170109, and 
Figueira by ANR project DELTA, grant ANR-16-CE40-0007. This project has received funding from the European Research Council (ERC) under the European Union's Horizon 2020 research and innovation programme (grant agreement No 714532). The paper reflects only the authors' views and not the views of the ERC or the European Commission. The European Union is not liable for any use that may be made of the information contained therein.} 

\acknowledgements{We are grateful to Thomas Colcombet for helpful discussions and valuable ideas in relation to the results of Section~\ref{sec:distance}.}

\begin{document}

\maketitle
\begin{abstract}

  We study the boundedness problem for unions of conjunctive regular path queries with inverses (UC2RPQs). This is the problem of, given a UC2RPQ, checking whether it is equivalent to a union of conjunctive queries (UCQ). 
  We show the problem to be \expspace-complete, thus coinciding with the complexity of containment for UC2RPQs. As a corollary, when a UC2RPQ is bounded, it is equivalent to a UCQ of at most triple-exponential size, and in fact we show that this bound is optimal. 
  We also study better behaved classes of UC2RPQs, namely \textsl{acyclic UC2RPQs of bounded thickness}, and \textsl{strongly connected UCRPQs}, whose boundedness problem are, respectively, \pspace-complete and $\Pi_2^P$-complete.
  % Better complexity bounds for boundedness can be obtained for acyclic UC2RPQs of bounded thickness and for strongly connected  UCRPQs. 
%  For the former the boundedness problem is \pspace-complete, for the latter $\Pi_2^P$-complete.  
  %Upper bounds are obtained by applying distance automata techniques, some of which we believe to be of independent interest. 
  Most upper bounds exploit results on limitedness for distance automata, in particular extending the model with alternation and two-wayness, %yielding some results 
  which may be of independent interest.
  
%  In particular, \pspace-hardness holds already for boolean conjunction of a fixed set of RPQ's; and it also implies that even the problem of equivalence with a CQ is \pspace-hard.
%  \sidediego{Tenemos algo (algun bound) para el problema de equivalencia a un CQ?}
%Whenever possible, an equivalent UCQ can be produced, which is of size at most triply-exponential in the UCRPQ ---and as we show this is a tight bound.
\end{abstract}

\section{Introduction}    
	\label{sec:introduction}

Boundedness is an important property of formulas in 
logics with fixed-point features. At the intuitive level, a formula $\phi$ in any such logic is bounded 
if its {\em fixed-point depth}, \ie, the number of iterations that are needed to evaluate 
$\phi$ on a structure ${\bf A}$, is fixed (and thus it is independent of ${\bf A}$). In databases and knowledge 
representation, 
 boundedness is regarded as an interesting theoretical phenomenon with relevant practical implications \cite{Var,BHLW16}. 
In fact, while several applications in these areas require the use of recursive features, 
actual real-world systems are either not designed or not optimized to cope with the computational demands that such features impose. 
Bounded formulas, in turn, can be reformulated in non-recursive logics, such as FO, 
or even as a {\em union of conjunctive queries} (UCQ) when $\phi$ itself is positive. UCQs form the core 
of most systems for data management and ontological query answering, and, in addition, are the focus of 
advanced optimization techniques.   
It has also been experimentally verified in some contexts 
that recursive features encountered in practice are often used in a somewhat `harmless' way, and that  
many of such queries are in fact bounded \cite{LSW15}. Thus, checking if a recursive 
formula $\phi$ is bounded, and building an equivalent non-recursive formula $\phi'$ when the latter holds, are important  
optimization tasks. 

The study of boundedness for \emph{Datalog} programs, \ie, the least fixed-point extension of the class of UCQs, received a lot of attention during the late 80s and early 90s. Two seminal results established that checking boundedness is undecidable in general for Datalog \cite{GMSV93}, but becomes decidable 
for {\em monadic} Datalog, \ie, those programs  in which each intensional predicate is monadic \cite{CGKV88}. The past few years have seen a resurgence of interest in boundedness problems. This is due, in part, to the development of the theory of {\em cost automata} over trees (both finite and infinite) in a series of landmark results, in particular relating to its {\em limitedness} problem. 
In a few words, cost automata are generalizations of finite automata associating a \emph{cost} from $\mathbb{N}\cup\{\infty\}$ to every input tree (instead of simply accepting or rejecting). The limitedness problem asks, given a cost automata, whether there is a uniform bound on the cost over all (accepting) input trees.
  %In a few words, a cost automata over trees is an automata 
%extended with counters that can be incremented or decremented. Unlike usual automata, a run of a cost automata over an input tree $T$ 
%not only decides to accept or reject $T$, but also outputs a tuple of natural numbers (the values of its counters  
%while computing over $T$). 
%The limitedness problem asks, given a cost automata, whether there is a uniform bound on the value of its runs over every tree. 
  Some deep results establish that checking limitedness is decidable for 
  well-behaved classes of cost automata over trees  \cite{CL08,VB11,VB12,BenediktCCB15}. Remarkably, for several logics of interest the boundedness problem can be reduced to the limitedness for cost automata in such well-behaved classes. Those reductions have enabled powerful decidability results for the boundedness problem. 
  As an example, it has been shown in this way that boundedness is decidable for monadic second-order logic (MSO) over structures of bounded treewidth \cite{BOW14}, which corresponds to an extension of Courcelle's Theorem, and also for the {\em guarded negation} fragment of least fixed-point logic (LFP), even in the presence of ungarded parameters \cite{BBB16}. Cost automata have also been used to study the complexity of boundedness for guarded Datalog programs \cite{BenediktCCB15,BBLP18}. 
% Interestingly, the decidability of checking boundedness for these classes can be obtained  
%  by using standard tree automata over trees, yet the cost automata techniques seem essential to obtain optimal complexity bounds. 

{\em Graph databases} is a prominent area of study within database theory, in which the use of recursive queries is crucial \cite{B13,AABHRV17}. A graph database is a finite edge-labeled directed graph. The most basic querying mechanism for graph databases corresponds to the class of {\em regular path queries} (RPQs), which check whether two nodes of the graph are connected by a path whose label belongs to a given regular language. RPQs are often extended with the ability to traverse edges in both directions, giving rise to the class of {\em two-way} RPQs, or 2RPQs \cite{CGLV02}. The core of the most popular recursive query languages for graph databases is defined by {\em conjunctive} 2RPQs, or C2RPQs, which are the closure of 2RPQs under conjunction and existential quantifications \cite{CGLV00}. We also consider {\em unions} of C2RPQs, or UC2RPQs. It can be shown that a UC2RPQ is bounded iff it is equivalent to some UCQ. In spite of the inherent recursive nature of UC2RPQs, their boundedness problem has not been studied in depth. 
Here we develop such a study by showing the following: 
\begin{itemize} 
\item The boundedness problem for UC2RPQs is \expspace-complete. The lower bound holds even for CRPQs. This implies that boundedness 
is not more difficult than {\em containment} for UC2RPQs, which was shown to be \expspace-complete in \cite{CGLV00}.  
\item From our upper bound construction it follows that if a UC2RPQ is bounded, then it is equivalent to a UCQ of triple-exponential size. We show that this bound is optimal.  
\item Finally, we obtain better complexity bounds for some subclasses of UC2RPQs; namely, for acyclic UC2RPQs of bounded \emph{thickness}, in which case boundedness becomes \pspace-complete, and for strongly connected UCRPQs, for which it is $\Pi_2^P$-complete. \sidediego{Tenemos lower bounds sobre ``size of equivalent UCQ'' en estos casos?}
\end{itemize} 

It is important to stress that UC2RPQs can be easily translated into guarded LFP with ungarded parameters, for which boundedness was shown to be decidable  by applying sophisticated cost automata techniques as mentioned above. However, the complexity of the boundedness problem for such a logic is currently not well-understood -- and it is at least {\sc 2Exptime}-hard \cite{BenediktCCB15} -- and hence this translation does not yield, in principle, optimal complexity bounds for our problem. To study the boundedness for UC2RPQs, we develop instead
 techniques especially tailored to UC2RPQs. 
In fact, since the recursive structure of UC2RPQs is quite tame, their boundedness problem can be translated % in exponential time 
into the  limitedness problem for a much simpler automata model than cost automata on trees; namely, {\em distance} automata on finite words. 
 Distance automata are nothing more than usual NFAs with two sorts of transitions: costly and non-costly. Such an automaton is limited if there is an integer $k \geq 1$ such that every word accepted by the NFA has an accepting run with at most $k$ costly transitions.
%These are nothing else than usual NFAs  equipped with a single counter that can only be incremented. Limitedness for distance automata thus 
%asks if there is an integer $k \geq 1$ such that every word accepted by the NFA has an accepting run of cost at most $k$. 
A beautiful result in automata theory established the decidability of the limitedness problem for distance automata \cite{hashiguchi1982limitedness}, actually in \pspace \cite{LeungP04}. While this continues being a difficult result, by now we have quite transparent proofs of this fact (see, \eg, \cite{Kirsten05}). 
%Our translation immediately yields then the aforementioned \expspace upper bound for boundedness of UC2RPQs.    
We exploit our translation to obtain tight complexity upper bounds for boundedness of UC2RPQs. Some of the proofs in the paper require extending the study of limitedness to {\em alternating} and {\em two-way} distance automata, while preserving the \pspace bound for the limitedness problem. We believe these results to be of independent interest.

\section{Preliminaries}   
	\label{sec:preliminaries}

We assume familiarity with {\em non-deterministic finite automata} (NFA), 
{\em two-way} NFA (2NFA), and {\em alternating finite automata} (AFA) over finite words. We often blur the distinction between an NFA $\+A$ and the language $L(\+A)$ it defines; similarly for regular expressions. 

\smallskip 
\noindent
 {\bf {\em Graph databases and conjunctive regular path queries.}} 
A \defstyle{graph database} over a finite alphabet $\A$ is a finite edge-labelled graph $G = (V, E)$ over $\A$, where $V$ is a finite set of vertices and $E \subseteq V \times \A \times V$ is the set of labelled edges. We write $u \xrightarrow{a} v$ to denote an edge $(u,a,v) \in E$. \sidediego{no usamos también $u \xrightarrow{w} v$ para palabras $w$?}
 We define the alphabet $\A^\pm:= \A\dcup \A^{-1}$ that extends $\A$ with the set $\A^{-1} := \{a^{-1} \mid a \in \A\}$ of  ``inverses'' of symbols in $\A$. 
%For convenience we define the graph database $G \cup G^{-1}$ as $(\A \dcup \A^{-1},V,E')$ where $\A^{-1} = \set{a^{-1} : a \in \A}$ and $E' = E \cup \set{(v,a^{-1},u) : (u,a,v) \in E}$.
%\paragraph*{Regular Path Queries}
An {\em oriented path} from $u$ to $v$ in a graph database $G=(V,E)$ over alphabet $\A$ is a pair $\pi=(\sigma,\ell)$ where $\sigma$ and $\ell$ are (possibly empty) sequences $\sigma = (v_0,a_1,v_1),(v_1,a_2,v_2),\dots,(v_{k-1},a_k,v_k)\in V\times\A\times V$, and $\ell=\ell_1,\dots,\ell_k \in \{-1,1\}$, for $k \geq 0$, such that $u=v_0$, $v=v_k$, and for each $1 \leq i \leq k$, we have that 
$\ell_i=1$ implies $(v_{i-1},a_i,v_i) \in E$; and $\ell_i=-1$ implies $(v_{i},a_i,v_{i-1}) \in E$. 
The {\em label} of $\pi$ is the word $b_1 \dots b_k \in (\A^\pm)^*$, where $b_i=a_i$ if $\ell_i=1$; otherwise $b_i=a_i^{-1}$. 
%A {\em directed path} in $G$ is a sequence $\pi := (v_0,a_1,v_1),\dots,(v_{k-1},a_k,v_k)$, for $k \geq 0$, such that 
%$(v_{i-1},a_i,v_i) \in E$, for each $1 \leq i \leq k$. The {\em label} of $\pi$ is the word $a_1 \dots a_k \in \A^*$. 
When $k = 0$ the label of $\pi$ is the empty word $\epsilon$. 
If $\ell_i=1$ for every $1\leq i\leq k$, we say that $\pi$ is a \emph{directed path}.  
Note that in this case, the label of $\pi$ belongs to $\A^*$.

A \defstyle{regular path query} (RPQ)  over   $\A$ %is an expression of the form $q = x \xrightarrow{L} y$, where 
%$x, y$ are (not necessarily distinct) variables and $L \subseteq \A^*$ is a regular language, 
is a regular language $L \subseteq \A^*$, %\sidemiguel{cambie a la definicion mas comun. Esto tambien evita los casos x=y, que es un caso de borde que ignoramos en la seccion 4.}
which we assume given as an NFA. The evaluation of $L$ on a graph database $G = (V,E)$ over $\A$, written $L(G)$, 
is the set of pairs $(u,v)\in V\times V$ such that there is a directed path from $u$ to $v$  
 in $G$ whose label belongs to $L$.
 2RPQs extend RPQs with the ability to traverse edges in both directions. Formally, a 2RPQ $L$ over 
 $\A$ is simply an RPQ over  $\A^\pm$. 
 The evaluation $L(G)$ of $L$ over a graph database $G = (V,E)$ over $\A$ is 
% $L(G^{\pm})$, 
% where $L$ is seen as an RPQ over $\A^\pm$ and $G^{\pm}$ 
%is the graph database over $\A^{\pm}$ obtained from $G$ by adding the edge $(v,a^{-1},u)$ 
%for each edge $(u,a,v)$ in $G$.  
the set of pairs $(u,v)\in V \times V$ such that there is an oriented path from $u$ to $v$ in $G$ whose label belongs to $L$. 
%Intuitively, an oriented path is obtained from a directed path $\pi$ over $\A$ by orienting each edge either forward or backwards, and its label 
%is obtained from the label of $\pi$ by replacing each symbol $a\in \A$ corresponding to a backward edge by $a^{-1}$. 

 \defstyle{Conjunctive 2RPQs} (C2RPQs) are obtained by taking  
 the closure of 2RPQs under conjunction and existential quantification, \ie, 
 a C2RPQ over $\A$ is an expression $\gamma := \exists \bar z \,\big((x_1 \xrightarrow{L_1} y_1) \wedge \dots \wedge (x_m \xrightarrow{L_m} y_m)\big)$, 
 where each $L_i$ is a 2RPQ over $\A$ and $\bar z$ is a tuple of variables among those in 
 $\{x_1,y_1,\dots,x_m,y_m\}$. We say that $\gamma$ is a CRPQ if each $L_i$ is an RPQ. 
 If $\bar x = (x_1,\dots,x_n)$ is the tuple of {\em free variables} of $\gamma$, \ie, those that 
 are not existentially quantified in $\bar z$, then the evaluation $\gamma(G)$ of the C2RPQ $\gamma$ over a graph database $G$
is the set of all tuples $h(\bar x) = (h(x_1)\dots,h(x_n))$,
 %\footnote{As usual, if $\bar x = (x_1,\dots,x_n)$ we write 
 %$h(\bar x)$ for $(h(x_1),\dots,h(x_n))$.}, 
 where $h$ ranges over all mappings $h : \{x_1,y_1,\dots,x_m,y_m\} \to V$ such that $(h(x_i),h(y_i)) \in L_i(G)$ for each $1 \leq i \leq m$.

A {\em union} of C2RPQs (UC2RPQ) is an expression of the form $\Gamma := \bigvee_{1 \leq i \leq n} \gamma_i$, where the $\gamma_i$'s are C2RPQ, all of which have exactly the same free variables. 
The evaluation $\Gamma(G)$ 
of $\Gamma$ over a graph database $G$ is $\bigcup_{1 \leq i \leq n} \gamma_i(G)$. We often write $\Gamma(\bar x)$ to denote that 
$\bar x$ is the tuple of free variables of $\Gamma$. 
%(or, equivalently, of an arbitrary C2RPQ $\gamma$ in $\Gamma$). 
A UC2RPQ $\Gamma$ is {\em Boolean} if it contains no free variables. %Hence, 
%$\Gamma(G) = \set{}$ or $\Gamma(G) = \set \set$, in which case we write
%$\Gamma(G) = \bot$ and $\Gamma(G) = \top$, respectively.

Given UC2RPQs $\Gamma$ and $\Gamma'$, we write $\Gamma \subseteq \Gamma'$ if $\Gamma(G) \subseteq \Gamma'(G)$ for each graph database 
$G$. Hence, $\Gamma$ and $\Gamma'$ are {\em equivalent} if $\Gamma \subseteq \Gamma'$ and $\Gamma' \subseteq \Gamma$, \ie, $\Gamma(G) = \Gamma'(G)$ for every $G$. 
% to having inverses. A \defstyle{two-way CRPQ} (C2RPQ) over an alphabet $\A$ is a CRPQ over the extended alphabet $\B = \A \dcup \A^{-1}$, where $\A^{-1} = \set{a^{-1}: a \in \A}$. The evaluation of a C2RPQ $Q$ on a graph database $G$ over $\A$ is the evaluation of $Q$  on $G \cup G^{-1}$ seen as a CRPQ over $\B$.
%
%A \defstyle{Boolean C2RPQ} is a C2RPQ $Q$ whose every variable is existentially quantified, and thus $Q(G) = \set{}$ or $Q(G) = \set \top$, in which case we say that $Q$ is \textsl{false} or \textsl{true} in $G$, respectively.

\smallskip 
\noindent
 {\bf {\em Boundedness of UC2RPQs.}} 
 CRPQs, and even UC2RPQs, can easily be expressed in \emph{Datalog}, the least fixed-point extension of the class of {\em union of conjunctive queries} (UCQs). 
 Hence, we can directly define the boundedness of a UC2RPQ in terms of the boundedness of its equivalent Datalog program, which is a well-studied problem \cite{Var}. The latter, however, 
 coincides with being equivalent to some UCQ \cite{Nau}.
% A particularly important relational query language for our purposes corresponds to the class of \emph{conjunctive queries} (CQs), 
%which are formulas definable in the $\{\exists,\wedge\}$-fragment of first-order logic (FO). 
In the setting of graph databases, a {\em conjunctive query} (CQ) over $\A$ is simply a CRPQ over $\A$ of the form $\exists \bar z \bigwedge_{1 \leq i \leq m} 
(x_i \xrightarrow{a_i} y_i)$ where the $a_i$s range over $\A \cup \set{\epsilon}$. %, \ie, a CRPQ in which each regular language is a single symbol. 
%Thus, in prenex normal-form a CQ over a vocabulary of the form $\sigma_\A$ is simply a FO formula of the form $\exists \bar z \bigwedge_{1 \leq i \leq m} 
%E_{a_i}(x_i,y_i)$. Hence, CQs over graph databases 
%are subsumed by the class of CRPQs; in fact, a CQ is nothing else than a CRPQ of the form $\exists \bar z \bigwedge_{1 \leq i \leq m} 
%(x_i \xrightarrow{a_i} y_i)$, \ie, a CRPQ in which each regular language is a single symbol. 
Notice that atoms of the form $x \xrightarrow{\epsilon} y$ correspond to \emph{equality atoms} $x=y$. Analogously, one can define unions of CQs (UCQs). Note that, modulo equality atoms, a CQ over $\A$ can be seen as a graph database over $\A$. Hence, we shall slightly abuse notation and use in the setting of CQs, notions defined for graph databases (such as oriented paths). 
%An important such notion will be that of oriented path in a CQ from $x$ to $y$.  
%We can now define the notion of boundedness. 

%, which are subsumed by UC2RPQs.  
%Note that the ability of CRPQs to recursively traverse edges in a graph database makes them strictly more 
%expressive than CQs, and even incomparable with respect to FO. 
%In turn, 

%In fact, if $\A = \{a\}$ then even the RPQ 
%$x \xrightarrow{a^*} y$ is not expressible in FO; this RPQ defines the transitive closure query over directed graphs, which cannot be expressed in FO; cf., \cite{Lib04}. 

% We are interested in the notion of boundedness, which is a property defined for logics with recursive features, such as Datalog. 
% In particular, we have that a Datalog program $\Pi$ is {\em bounded} if its recursive depth is fixed, and thus independent 
% of the input relational structure. %in other words, if there is a fixed bound $k \geq 1$ 
% %such that for every relational structure ${\cal M}$ the evaluation of $\Pi$ on $\cal M$ can be computed after 
% %at most $k$ recursive steps. 
% Interestingly, boundedness for Datalog programs coincide with being equivalent to a UCQ; cf., \cite{Nau}. 
%% Due to Rossman's preservation theorem and the fact that Datalog programs are preserved under homomorphisms, Datalog boundedness also coincides with 
%% the notion of being equivalent to an FO formula \cite{Rossman08}. 
 
 %Since each UC2RPQ $\Gamma$ can be expressed as a Datalog program $\Pi$,  %from Proposition \ref{prop:crpq-dat}, 
 % We can then provide the following definition. 
 
 % \begin{definition}[Boundedness]
 %
 % \end{definition}
 A UC2RPQ $\Gamma$ is {\em bounded} if it is equivalent to some UCQ $\Phi$. In this article we study the complexity of the problem
  {\sc Boundedness}, which takes as input a UC2RPQ $\Gamma$ and asks whether $\Gamma$ is bounded.  
 
  \begin{example} 
 \label{ex:example1}\sidediego{note to self: check}
Consider the Boolean UCRPQ $\Gamma=\gamma_1\lor \gamma_2$ over the alphabet $\A=\{a,b,c,d\}$ such that 
$\gamma_1=\exists x,y\, (x \xrightarrow{L_b} y \land x \xrightarrow{L_{b,d}} y)$ and $\gamma_2=\exists x,y\, (x \xrightarrow{L_d} y\land x \xrightarrow{L_{b,d}} y)$, 
where $L_b:=a^+b^+c$, $L_d:=ad^+c^+$, and $L_{b,d}:=a^+(b+d)c^+$. 
For $e\in \A$, recall that $e^+$ denotes the language $e(e^*)$. 
As we shall explain in Example~\ref{ex:example2}, we have that $\gamma_1$ and $\gamma_2$ are unbounded. 
However, $\Gamma$ is bounded, and in particular, it is equivalent to the UCQ $\Phi=\phi_1\lor \phi_2$, where 
$\phi_1$ and $\phi_2$ correspond to $\exists x,y\, (x \xrightarrow{abc} y)$ and $\exists x,y\, (x \xrightarrow{adc} y)$, respectively.\qed
  \end{example}

\medskip
\noindent
{\bf {\em Organization of the paper.}} We present characterizations of boundedness for UC2RPQs in Section \ref{sec:char} 
and an application of those to pinpoint the complexity of  {\sc Boundedness} for RPQs in Section \ref{sec:sec-rpq-bound}. 
Distance automata and results about them are given in Section \ref{sec:distance}. We analyze the complexity of 
{\sc Boundedness} for general UC2RPQs in Section \ref{sec:c2rpq-bound} and present some classes of UC2RPQs with 
better complexity of {\sc Boundedness} in Section \ref{sec:nice-c2rpq}. We finish with a discussion 
in Section \ref{sec:discussion}. Due to space constraints many proofs are in the appendix. 

 \section{Characterizations of Boundedness for UC2RPQs}
 \label{sec:char} 
 
 In this section we provide two simple characterizations of when a UC2RPQ is bounded that will be useful to analyze the complexity of {\sc Boundedness}. 
Let $\phi(\bar x)$ and $\phi'(\bar x)$ be CQs over $\A$ with variable sets ${\cal V}$ and ${\cal V}'$, respectively. 
Let $=_\phi$ and $=_{\phi'}$ be the binary relations induced on ${\cal V}$ and ${\cal V}'$ by the equality atoms of 
$\phi$ and $\phi'$, respectively, and $=_\phi^*$ and $=_{\phi'}^*$ be their reflexive-transitive closure. 
A {\em homomorphism} from $\phi$ to $\phi'$ is a mapping 
$h : {\cal V} \to {\cal V'}$ such that: (i) $x =_\phi^* y$ implies $h(x)=_{\phi'}^*h(y)$; (ii) $h(\bar x) = \bar x$; and 
(iii) for each atom $x \xrightarrow{a} y$ in $\phi$ with $a \in \A$, there is an atom $x' \xrightarrow{a} y'$ in $\phi'$ such that $h(x) =_{\phi'}^* x'$ and $h(y) =_{\phi'}^* y'$. 
We write $\phi \to \phi'$ if such a homomorphism exists. 
%We write $\lambda \to \lambda'$ if such a homomorphism exists. 
It is known that $\phi \to \phi'$ iff $\phi' \subseteq \phi$ \cite{CM77}. 

An \defstyle{expansion} of a C2RPQ $\gamma(\bar x)$ over $\A$ is a CQ $\lambda(\bar x)$ over $\A$ with minimal number of variables and atoms such that 
(i) $\lambda$ contains each variable of $\gamma$, (ii) for each atom $A=x \xrightarrow{L} y$, there is an oriented path $\pi_A$ in $\lambda$ from $x$ to $y$ with label $w_A\in L$ 
whose intermediate variables (\ie, those not in $\{x,y\}$) are distinct from one another, and (iii) intermediate variables of different oriented paths $\pi_A$ and $\pi_{A'}$ are disjoint. 
Note that the free variables of $\lambda$ and $\gamma$ coincide. 
Intuitively, the expansion $\lambda$ is obtained from $\gamma$ by choosing for each atom $A=x \xrightarrow{L} y$ a word $w_A\in L$, 
and ``expanding'' $x \xrightarrow{L} y$ into the ``fresh oriented path'' $\pi_A$ from $x$  to $y$ with label $w_A$.
When $w_A=\epsilon$ then $\lambda$ contains the equality atom $x=y$. 
An expansion of a UC2RPQ $\Gamma$ is an expansion of some C2RPQ in $\Gamma$. 
Observe that a (U)C2RPQ is always equivalent to the (potentially infinite) UCQ given by its set of expansions. 
Even more, it is equivalent to the UCQ defined by its {\em minimal} expansions, as introduced below.

If $\lambda$ is an expansion of a UC2RPQ $\Gamma$, 
we define the \emph{size} of $\lambda$, denoted by $\|\lambda\|$, to be the number of (non-equality) atoms in $\lambda$. 
We say that $\lambda$ is {\em minimal}, if there is no
expansion $\lambda'$ such that $\lambda' \to \lambda$ and $\|\lambda'\| < \|\lambda\|$.  
Intuitively, an expansion is minimal if its answers cannot be covered by a smaller expansion.  
We can then establish the following. 
%The following is a simple consequence of Lemma~\ref{lemma:cont-crpq}. 

\begin{lemma} \label{lemma:equiv-minimal}
Every UC2RPQ $\Gamma$ is equivalent to the (potentially infinite) UCQ given by its set of minimal expansions. 
\end{lemma}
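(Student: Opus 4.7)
The plan is to show both containments between $\Gamma$ and the UCQ $\Psi_{\min}$ consisting of all minimal expansions of $\Gamma$, leveraging the already-observed fact that $\Gamma$ is equivalent to the (potentially infinite) UCQ $\Psi$ of all its expansions, together with the classical Chandra--Merlin characterization stated in the preceding paragraph: $\phi \to \phi'$ iff $\phi' \subseteq \phi$.

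The easy direction is $\Psi_{\min} \subseteq \Gamma$: since every minimal expansion is in particular an expansion, we have $\Psi_{\min} \subseteq \Psi \equiv \Gamma$.

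For the reverse direction $\Gamma \subseteq \Psi_{\min}$, I would fix a graph database $G$ and a tuple $\bar t \in \Gamma(G)$, and then construct a minimal expansion $\lambda^*$ witnessing $\bar t \in \lambda^*(G)$. Since $\Gamma \equiv \Psi$, there is at least one expansion $\lambda$ with $\bar t \in \lambda(G)$. If $\lambda$ is not minimal, the definition of minimality yields an expansion $\lambda'$ with $\lambda' \to \lambda$ and $\|\lambda'\| < \|\lambda\|$; by Chandra--Merlin, $\lambda' \to \lambda$ implies $\lambda \subseteq \lambda'$, so $\bar t \in \lambda'(G)$. Iterating this argument produces a sequence of expansions $\lambda = \lambda_0, \lambda_1, \lambda_2, \ldots$ all containing $\bar t$ in their evaluation on $G$, with strictly decreasing sizes $\|\lambda_0\| > \|\lambda_1\| > \cdots$ in $\mathbb{N}$. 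The sequence must terminate in finitely many steps at some expansion $\lambda^* = \lambda_k$ that is minimal, and by construction $\bar t \in \lambda^*(G)$, so $\bar t \in \Psi_{\min}(G)$.

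There is essentially no real obstacle; the only subtle point to be careful about is the direction of the homomorphism/containment correspondence (a homomorphism $\lambda' \to \lambda$ gives $\lambda \subseteq \lambda'$, not the other way around), which is exactly what makes the iteration go through: each time we replace an expansion by a strictly smaller one, we only \emph{add} tuples to the evaluation, preserving membership of $\bar t$. The well-foundedness of $(\mathbb{N},<)$ on sizes $\|\cdot\|$ ensures termination, and this is what is needed even though the set of all expansions is in general infinite.
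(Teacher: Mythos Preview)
Your proof is correct and essentially mirrors the paper's argument. The paper's proof is terser: it simply cites the containment characterization for UC2RPQs from \cite{CGLV00} (for UC2RPQs $\Gamma,\Gamma'$, $\Gamma \subseteq \Gamma'$ iff every expansion of $\Gamma$ receives a homomorphism from some expansion of $\Gamma'$) and says the lemma is a straightforward consequence. Unpacking that consequence amounts to exactly your iteration: given any expansion $\lambda$, repeatedly replace it by a strictly smaller expansion mapping into it until a minimal one is reached, using well-foundedness of $\mathbb{N}$. You phrase this at the level of a concrete tuple $\bar t$ and Chandra--Merlin for CQs rather than via the UC2RPQ containment lemma, but the underlying mechanism is the same.
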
 

We can now provide our basic characterizations of boundedness.  
%It %essentially expresses that a UC2RPQ $\phi$ is bounded iff it is equivalent to the disjunction defined by 
%a finite number of its expansions.

\begin{proposition} \label{prop:basic} 
 The following conditions are equivalent for each UC2RPQ $\Gamma$. 
 \begin{enumerate} 
 \item $\Gamma$ is bounded.
 \item There is $k \geq 1$ such that for every expansion $\lambda$ of $\Gamma$ 
  there exists an expansion $\lambda'$ of $\Gamma$ with $\|\lambda'\| \leq k$ such that 
  $\lambda \subseteq \lambda'$ (\ie, such that $\lambda' \to \lambda$).
  \item $\Gamma$ has finitely many minimal expansions. \sidediego{acá tenemos que trabajar modulo ``renaming of variables'', tal como definimos las expansiones siempre hay infinitas...}
  %it has finitely many minimal expansions.
\end{enumerate} 
\end{proposition}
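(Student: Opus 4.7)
The plan is to establish the three equivalences cyclically, $(1) \Rightarrow (2) \Rightarrow (3) \Rightarrow (1)$, with the main tools being Lemma~\ref{lemma:equiv-minimal} and the classical homomorphism characterization of CQ containment, namely that $\phi \to \phi'$ iff $\phi' \subseteq \phi$. The ``trivial'' direction will be $(3) \Rightarrow (1)$, which is immediate from Lemma~\ref{lemma:equiv-minimal}: a finite set of minimal expansions, taken as a disjunction, is a bona fide UCQ equivalent to $\Gamma$, so $\Gamma$ is bounded.

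For $(1) \Rightarrow (2)$, I would assume $\Gamma \equiv \Phi = \phi_1 \lor \cdots \lor \phi_n$ and, for each $\phi_i$ viewed as a graph database with distinguished nodes $\bar x$, note that $\bar x \in \Phi(\phi_i) = \Gamma(\phi_i)$ holds trivially via the identity mapping. This witnesses that some $\gamma \in \Gamma$ admits a homomorphism into $\phi_i$ sending atoms $x \xrightarrow{L} y$ to oriented paths in $\phi_i$ with labels in $L$. Reading off the labels of those paths as the words $w_A$ produces an expansion $\lambda_i$ of $\gamma$ together with a homomorphism $\lambda_i \to \phi_i$ that fixes $\bar x$. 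Choosing each $\lambda_i$ of minimal size and setting $k := \max_i \|\lambda_i\|$, for an arbitrary expansion $\lambda$ of $\Gamma$ the same reasoning applied to the database $\lambda$ gives $\bar x \in \Gamma(\lambda) = \Phi(\lambda)$ (since $\lambda \subseteq \Gamma$ as $\lambda$ is an expansion, and the identity mapping witnesses $\bar x \in \lambda(\lambda)$); hence some $\phi_i \to \lambda$, and composing gives $\lambda_i \to \lambda$ with $\|\lambda_i\| \leq k$.

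For $(2) \Rightarrow (3)$, I would show that every minimal expansion $\mu$ of $\Gamma$ satisfies $\|\mu\| \leq k$: by (2) applied to $\lambda := \mu$, there is an expansion $\lambda'$ with $\lambda' \to \mu$ and $\|\lambda'\| \leq k$, and the minimality of $\mu$ forces $\|\mu\| \leq \|\lambda'\|$. Since $\A$ and the set of free variables of $\Gamma$ are fixed and finite, there are, up to renaming of intermediate variables, only finitely many CQs with at most $k$ non-equality atoms, so the collection of minimal expansions is finite, proving (3). The side remark in the statement about working modulo variable renaming is precisely what allows this counting.

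The main obstacle will be the careful bookkeeping in $(1) \Rightarrow (2)$: converting the abstract containment $\phi_i \subseteq \Gamma$ into a concrete expansion $\lambda_i$ of $\Gamma$ that maps into $\phi_i$, and verifying that the composition of the two homomorphisms $\lambda_i \to \phi_i$ and $\phi_i \to \lambda$ preserves all required properties (in particular, fixing the free variables and respecting equality atoms through the reflexive-transitive closures $=_\phi^*$). This reduction from set-theoretic containment to a concrete homomorphism is essentially the content of the homomorphism theorem adapted to expansions of C2RPQs, and is the place where the precise definition of ``expansion'' (with its fresh intermediate variables) matters.
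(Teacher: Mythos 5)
Your proof is correct and follows essentially the same route as the paper: the same cyclic decomposition, with $(3)\Rightarrow(1)$ from Lemma~\ref{lemma:equiv-minimal}, $(1)\Rightarrow(2)$ by extracting small expansions $\lambda_i$ for each disjunct $\phi_i$ and composing homomorphisms, and $(2)\Rightarrow(3)$ by bounding the size of minimal expansions. The only difference is that the paper invokes the containment-via-expansions result of Calvanese et al.\ (Lemma~\ref{lemma:cont-crpq}) as a black box, whereas you re-derive its content inline through the canonical-database argument; both are sound.
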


\begin{example}
\label{ex:example2}
Consider the Boolean UCRPQ $\Gamma=\gamma_1\lor\gamma_2$ over $\A=\{a,b,c,d\}$ from Example~\ref{ex:example1}. 
To see that $\gamma_1$ is unbounded (the case of $\gamma_2$ is similar) we can apply Proposition \ref{prop:basic}. 
Indeed, the expansions of $\gamma_1$ corresponding to $\{\exists x,y\, (x \xrightarrow{ab^nc} y \land x \xrightarrow{adc} y): n\geq 1\}$ 
are all minimal. On the other hand, $\Gamma$ is bounded as its minimal expansions correspond to $\exists x,y\, (x \xrightarrow{abc} y \land x \xrightarrow{abc} y)$ and 
$\exists x,y\, (x \xrightarrow{adc} y \land x \xrightarrow{adc} y)$.\qed
\end{example}

\section{Boundedness for Existentially Quantified RPQs}
	\label{sec:sec-rpq-bound}

As a first application of Proposition \ref{prop:basic}, we study {\sc Boundedness} for CRPQs consisting of a single RPQ; that is, 
RPQs or existentially quantified RPQs. 
Let $v,w$ be words over $\A$. 
Recall that a word $v$ is a \defstyle{prefix} \resp{\defstyle{suffix} and \defstyle{factor}}
of $w$ if $w\in v\cdot \A^*$ \resp{$w\in\A^*\cdot v$ and $w\in\A^*\cdot v \cdot \A^*$}. If in addition we have $v\neq w$, 
then we say that $v$ is a \emph{proper} prefix \resp{suffix and factor} of $w$. 
For a language $L\subseteq \A^*$, we define its \emph{prefix-free} sub-language $L_{\text{pf}}$ to be the set of words $w\in L$ such that 
$w$ has no proper prefix in $L$. Similarly, we define $L_{\text{sf}}$ and $L_{\text{ff}}$ with respect to the suffix and factor relation. 
We have the following:

\begin{proposition} \label{prop:basic-rpq} 
The following statements hold. 
	\begin{enumerate}
		\item An RPQ $L$ is bounded iff $L$ is finite. 
		\item A CRPQ $\exists y (x \xrightarrow{L} y)$ \resp{$\exists x (x \xrightarrow{L} y)$} with $x\neq y$ is bounded iff $L_{\text{pf}}$ \resp{$L_{\text{sf}}$} is finite. 
		\item A Boolean CRPQ $\exists x,y ( x \xrightarrow{L} y)$ with $x\neq y$ is bounded iff $L_{\text{ff}}$ is finite. 
	\end{enumerate}
\end{proposition}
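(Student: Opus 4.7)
The plan is to apply Proposition~\ref{prop:basic}(3), which says that a UC2RPQ is bounded iff (modulo renaming of variables) it has only finitely many minimal expansions. In each of the three cases the query has a single atom $x \xrightarrow{L} y$, so the first step is to observe that every expansion has the form $\lambda_w$ for some $w \in L$: a directed path of length $|w|$ from $x$ to $y$ labelled by $w$ on fresh intermediate variables (if $\epsilon \in L$ there is additionally the degenerate expansion consisting of the equality atom $x = y$, which contributes at most one extra minimal element and can be treated separately). Thus the expansions are indexed by $L$ itself, $\|\lambda_w\| = |w|$, and boundedness reduces to counting how many words of $L$ yield minimal expansions.

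The heart of the argument will be to characterise when $\lambda_{w'} \to \lambda_w$. Since $L \subseteq \A^*$, every atom of either expansion is forward-directed; moreover $\lambda_w$ is a simple directed path on distinct fresh variables. Hence any homomorphism $\lambda_{w'} \to \lambda_w$ is forced to trace a contiguous forward sub-walk of $\lambda_w$ whose label is precisely $w'$, the only freedom being where to place the images of $x$ and $y$; and this freedom is dictated by whether $x, y$ are free or existentially quantified. In case~(1), with both $x$ and $y$ free, the walk must start at $x$ and end at $y$, forcing $w = w'$; hence every $\lambda_w$ is already minimal and the set of minimal expansions is finite iff $L$ is finite. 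In case~(2), for $\exists y\,(x \xrightarrow{L} y)$ the walk must start at $x$ but may end at any vertex of $\lambda_w$, so $\lambda_{w'} \to \lambda_w$ iff $w'$ is a prefix of $w$, whence $\lambda_w$ is minimal iff $w \in L_{\text{pf}}$; the symmetric case with $y$ free gives suffixes. In case~(3) both endpoints are existentially quantified, so the walk may start and end anywhere, whence $\lambda_{w'} \to \lambda_w$ iff $w'$ is a factor of $w$, and therefore $\lambda_w$ is minimal iff $w \in L_{\text{ff}}$.

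The final step is to combine these three characterisations with Proposition~\ref{prop:basic}(3), producing the three stated equivalences. The main (mild) obstacle I foresee is rigorously ruling out that a homomorphism could ``fold'' or backtrack on $\lambda_w$, sending two consecutive edges of $\lambda_{w'}$ to non-consecutive edges of $\lambda_w$; this is ultimately a routine consequence of $\lambda_w$ being a simple directed path on distinct fresh variables with all edges oriented forward, but it is the one place where one has to be explicit. The boundary case $\epsilon \in L$ is straightforward: the equality-atom expansion neither dominates nor is dominated by any $\lambda_w$ with $w \neq \epsilon$ (since those have $x \neq y$), so it contributes at most one extra minimal expansion and does not affect the finite/infinite dichotomy.
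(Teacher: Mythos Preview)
Your approach is exactly the paper's: invoke Proposition~\ref{prop:basic}(3) after exhibiting a bijection between minimal expansions and, respectively, $L$, $L_{\text{pf}}$, $L_{\text{sf}}$, $L_{\text{ff}}$; your justification of those bijections is in fact more explicit than what the paper gives.

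One small correction to your boundary-case analysis: in cases (2) and (3) the equality-atom expansion $\lambda_\epsilon$ \emph{does} map into every $\lambda_w$ (send the quantified endpoint(s) to the image of the free one, or to any single vertex in the Boolean case), so when $\epsilon\in L$ it is the \emph{unique} minimal expansion rather than an isolated extra one. This is harmless, since it matches $L_{\text{pf}}=L_{\text{sf}}=L_{\text{ff}}=\{\epsilon\}$ whenever $\epsilon\in L$, and the finite/infinite dichotomy is preserved.
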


 \begin{theorem} \label{theo:finite-lang} 
The problem of, given an NFA accepting the language $L$, checking whether $L_{\text{pf}}$ is finite is \pspace-complete. 
The same holds if we replace $L_{\text{pf}}$ by $L_{\text{sf}}$ or $L_{\text{ff}}$. 
\end{theorem}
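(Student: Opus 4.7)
For the \pspace upper bound for $L_{\text{pf}}$, I would implicitly construct a DFA $\mathcal{D}$ recognizing $L_{\text{pf}}$ via an augmented subset construction on the given NFA $\mathcal{A}=(Q,\Sigma,\delta,q_0,F)$: the states are the $F$-disjoint subsets of $Q$ plus two sinks $acc$ and $rej$; the initial state is $\{q_0\}$ (the trivial case $q_0\in F$, for which $L_{\text{pf}}=\{\epsilon\}$, is handled separately); transitions follow $\delta$ as long as the image stays $F$-disjoint and go to $acc$ otherwise; and $acc$ transitions to $rej$ on every letter. Then $\mathcal{D}$ recognizes exactly $L_{\text{pf}}$; it has $2^{O(|Q|)}$ states, but $L_{\text{pf}}$ is infinite iff $\mathcal{D}$ admits a reachable accept-leading cycle, and this is checkable in \pspace by guessing, on-the-fly, an $F$-disjoint subset $S$ and three words $u,v,z$ such that $\delta(\{q_0\},u)=S$, $\delta(S,v)=S$ with $v\neq\epsilon$, and $\delta(S,z)\cap F\neq\emptyset$, while intermediate subsets (along $u$, $v$, and all but the last step of $z$) remain $F$-disjoint. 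Savitch's theorem then closes the loop.

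For the \pspace lower bound for $L_{\text{pf}}$, I would reduce from NFA universality. Given NFA $M$ over $\Sigma$, I would construct polynomially an NFA $B$ over $\Sigma$ augmented with a fresh control letter so that $L(B)_{\text{pf}}$ is infinite iff $L(M)\neq\Sigma^*$. The idea is to shape $B$'s subset DFA so that it admits a reachable $F_B$-disjoint cycle leading subsequently to an $F_B$-hitting subset precisely when $M$'s subset DFA has an $F_M$-disjoint subset reachable from $\{q_0^M\}$, i.e.\ when some word is outside $L(M)$. The cycle is provided by a dedicated letter that acts as the identity on $F_B$-disjoint subsets of the subset DFA, the exit is performed by $\Sigma$-transitions faithfully simulating $M$, and the initial state and accept set of $B$ are tuned so that trivial cycles at $\{q_0^B\}$ do not yield an infinite $L_{\text{pf}}$ for free.

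The main obstacle is the lower bound. Naive reductions such as $L(B)=L(M)\cdot\#$, $L(B)=\Sigma^*\cdot L(M)$, or $L(B)=L(M)\cdot\Sigma^*$ collapse $L(B)_{\text{pf}}$ to a language whose finiteness is already \textsf{NL}-complete (typically reducing to finiteness of $L(M)$ or of $L(M)_{\text{pf}}$ on an NFA of the same size), losing the desired exponential gap. The correct reduction must genuinely exploit the exponential blow-up of the subset construction, by coupling \emph{cycles} of $B$'s subset DFA to the universality question for $M$, rather than mere reachability; this coupling is the delicate design choice.

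Finally, the cases of $L_{\text{sf}}$ and $L_{\text{ff}}$ follow from $L_{\text{pf}}$ by direct reductions. For $L_{\text{sf}}$, use the identity $(L^R)_{\text{sf}}=(L_{\text{pf}})^R$, which transfers both bounds via NFA reversal. For $L_{\text{ff}}$, prefixing $L$ with a fresh symbol $\triangleleft$ yields $(\triangleleft\cdot L)_{\text{ff}}=\triangleleft\cdot L_{\text{pf}}$, since any proper factor of $\triangleleft u$ containing $\triangleleft$ must be a proper prefix; this transfers the bounds to the factor-free setting.
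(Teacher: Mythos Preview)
Your upper bound for $L_{\text{pf}}$ is correct and essentially the same as the paper's: both build an exponential-size automaton for $L_{\text{pf}}$ (the paper via ``NFA for has-proper-prefix-in-$L$, complement, intersect''; you via a direct $F$-disjoint subset construction) and then test finiteness on-the-fly in polynomial space. Your reversal reduction for $L_{\text{sf}}$ matches the paper's exactly, and your $\triangleleft$-prefixing trick correctly transfers the \emph{hardness} to $L_{\text{ff}}$ (note it does not transfer the upper bound, which needs the analogous direct argument; this is minor).

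The genuine gap is your lower bound for $L_{\text{pf}}$. You propose a reduction from NFA universality but give only desiderata, not a construction, and you flag the coupling as ``delicate''. The problem is structural: non-universality of $M$ gives you a word $u$ with $\delta_M(\{q_0\},u)\cap F_M=\emptyset$, but for an $F_B$-disjoint cycle to sit on a path that is itself $F_B$-disjoint you need a word \emph{all of whose prefixes} are rejected---and non-universality does not provide that (take $L(M)=\{\epsilon\}$). The natural instantiations of your plan (self-loop letter on all states, or on non-final states, with $F_B=F_M$ or its complement, with or without a fresh initial state) all collapse to testing something strictly weaker than universality (typically emptiness, or ``$L(M)_{\text{pf}}$ contains a non-empty word''). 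I do not see how to tune the initial state and accept set to repair this without determinising $M$.

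The paper sidesteps this by reducing from linear-space NTM acceptance rather than universality. The language is $L=\#R\#\,\cup\,\#(R\cent+\epsilon)T$, where $R$ generates all configuration sequences from the initial one and $T$ is the (finite) set of illegal configuration pairs. A word in $\#R\#$ that encodes a \emph{valid} run automatically has no proper prefix in $L$, precisely because validity means no factor lies in $T$; conversely, if the machine accepts (hence all runs are short), any long word in $\#R\#$ must contain an error and thus has a short prefix in $\#(R\cent+\epsilon)T$. The ``all prefixes are rejecting'' condition is thus obtained for free from the semantics of computation histories---this is the idea your plan is missing.
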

 
 \begin{proof} 
 We focus on upper bounds, the lower bounds are in the appendix. 
Given an NFA $\cal A$ accepting the language $L$, we can construct an NFA ${\cal B}$ of polynomial size in $\cal A$ that accepts precisely those words that have a proper prefix in $L$. 
By complementing and intersecting with $\cal A$, we obtain an NFA ${\cal B}'$ of exponential size in $\cal A$ that accepts the language $L_{\text{pf}}$. 
Hence, we only need to check whether the language accepted by ${\cal B}'$ is finite, which can be done \textsl{on-the-fly} in \nlogspace w.r.t. $\cal B'$, 
and hence in \pspace. 
 The other two cases are analogous.  
 \end{proof} 
 
 By applying Theorem \ref{theo:finite-lang} and Proposition \ref{prop:basic-rpq}, we can now pinpoint the complexity of {\sc Boundedness} for CRPQs with a single RPQ. 
 
 \begin{corollary} \label{coro:comp-rpq} 
The following statements hold. 
\begin{enumerate} 
\item {\sc Boundedness} for RPQs is \nlogspace-complete. 
\item {\sc Boundedness} for CRPQs of the form $\exists y (x \xrightarrow{L} y)$, with $x\neq y$, is \pspace-complete. The same holds for CRPQs 
$\exists x (x \xrightarrow{L} y)$ and Boolean CRPQs $\exists x,y (x \xrightarrow{L} y)$, where $x\neq y$. 
\end{enumerate} 
\end{corollary}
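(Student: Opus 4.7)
The plan is to derive both items by combining Proposition~\ref{prop:basic-rpq} (which converts boundedness of a single-RPQ CRPQ into finiteness of some sub-language of $L$) with Theorem~\ref{theo:finite-lang} and, for item (1), a standard on-the-fly test for NFA finiteness. All reductions involved are trivially computable in logarithmic space, so they preserve the target complexity classes.

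For item (1), Proposition~\ref{prop:basic-rpq}(1) reduces boundedness of an RPQ $L$, given as an NFA $\mathcal{A}$, to checking whether $L(\mathcal{A})$ is finite. The \nlogspace upper bound is the classical test: $L(\mathcal{A})$ is infinite iff there is a state $q$ of $\mathcal{A}$ that is reachable from some initial state, lies on a non-trivial cycle, and is co-reachable to a final state; guessing $q$ and verifying these three reachabilities on-the-fly fits in \nlogspace, placing finiteness in $\text{co}\nlogspace = \nlogspace$ by Immerman--Szelepcs\'enyi. For the matching \nlogspace lower bound, I would reduce $s$--$t$ reachability in a DAG (which is $\nlogspace$-complete) to non-finiteness of a one-letter NFA: take the vertices as states and the edges as $a$-transitions, make $s$ initial and $t$ the only final state, and add a self-loop at $t$. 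Since the underlying graph is a DAG outside of this self-loop, $L(\mathcal{A})$ is infinite iff $t$ is reachable from $s$; equivalently, $L(\mathcal{A})$ is finite iff $t$ is unreachable, and invoking Immerman--Szelepcs\'enyi again gives \nlogspace-hardness of boundedness.

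For item (2), the \pspace upper bound is immediate from Theorem~\ref{theo:finite-lang} together with Proposition~\ref{prop:basic-rpq}(2)--(3): given the NFA for $L$, decide whether $L_{\text{pf}}$, $L_{\text{sf}}$, or $L_{\text{ff}}$ is finite (depending on which of the three CRPQ shapes is on input), which is in \pspace. For the matching \pspace lower bound, the hardness results for the three finiteness problems supplied by Theorem~\ref{theo:finite-lang} transfer to boundedness via the same characterisations: given an NFA for $L$, the corresponding CRPQ $\exists y\,(x\xrightarrow{L} y)$, $\exists x\,(x\xrightarrow{L} y)$, or $\exists x,y\,(x\xrightarrow{L} y)$ is constructible in logspace, and by Proposition~\ref{prop:basic-rpq} its boundedness coincides exactly with finiteness of $L_{\text{pf}}$, $L_{\text{sf}}$, or $L_{\text{ff}}$ respectively.

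I do not expect any serious obstacle, since the corollary is essentially a packaging of Proposition~\ref{prop:basic-rpq} and Theorem~\ref{theo:finite-lang}. The only mildly delicate points are the direction of the reduction in the \nlogspace-hardness argument (it is cleanest to reduce reachability to \emph{non}-finiteness and invoke Immerman--Szelepcs\'enyi) and checking that the constructions used to pass between NFAs and CRPQs are genuine logspace reductions rather than merely polynomial-time ones, but both are routine.
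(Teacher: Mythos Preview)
Your proposal is correct and follows the same approach as the paper, which simply states the corollary as an immediate consequence of Proposition~\ref{prop:basic-rpq} and Theorem~\ref{theo:finite-lang}. You have in fact filled in more detail than the paper does: for item (1) the paper leaves implicit the well-known fact that NFA finiteness is \nlogspace-complete, whereas you spell out both the on-the-fly upper bound and a concrete logspace reduction for hardness.
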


It is not clear, though, how usual automata techniques, as the ones applied in the proof of
Theorem \ref{theo:finite-lang}, can be used to solve {\sc Boundedness} for 
more complex CRPQs. 
To solve this problem 
we develop an approach based on distance automata, as introduced next. Our 
approach also handles inverses and unions, thus 
dealing with arbitrary UC2RPQs.

\section{Distance Automata}   
	\label{sec:distance}

Distance automata \cite{hashiguchi1982limitedness} (equivalent to weighted automata over the $(\min,+)$-semiring  \cite{droste2009handbook}, min-automata \cite{BojanczykT09}, or $\set{\epsilon,ic}$-B-automata \cite{Colcombet09}) are 
an extension of finite automata which associate to each word in the language a natural number or `cost'. They can be represented as non-deterministic finite automata with two sorts of transitions: costly and non-costly. For a given distance automaton, the cost of a run on a word is the number of costly transitions, and the cost of a word $w \in \A^*$ is the minimum cost of an accepting run on $w$. We will use this automaton model to encode boundedness as the problem of whether there is a uniform bound on the cost of words, known as the \textsl{limitedness problem}.

Formally, a \defstyle{distance automaton} (henceforth \defstyle{DA}) is a tuple $\+A = (\A, Q, q_0, F, \delta)$, where $\A$ is a finite alphabet, $Q$ is a finite set of states, $q_0 \in Q$ is the initial state, $F \subseteq Q$ is the set of finals states and $\delta \subseteq Q \times \A \times \set{0,1} \times Q$ is the transition relation. A word $w \in \A^*$ is {\em accepted} by $\+A$ if there is an {\em accepting run} of $\+A$ on $w$, \ie, a (possibly empty) sequence of transitions $\rho = (p_1,a_1,c_1,r_1) \dotsb (p_n,a_n,c_n,r_n) \in \delta^*$ with the usual properties: (1) if $\rho = \epsilon$ then $q_0 \in F$ and $w = \epsilon$, (2) $p_1 = q_0$ and $r_n \in F$, (3) for every $1\leq i<n$ we have $r_i=p_{i+1}$, and (4) $w = a_1 \dotsb a_n$. The {\em cost} of the run $\rho$ is $\cost(\rho)=c_1 + \dotsb + c_n$ (or $0$ if $\rho=\epsilon$); and the cost \defstyle{$\cost_{\+A}(w)$} of a word 
$w$ accepted by $\+A$ is the minimum cost of an accepting run of $\+A$ on $w$. For convenience, 
we assume the cost of words not accepted by $\+A$ to be $0$.

The \defstyle{limitedness problem} for DA is defined as follows: given a DA $\+A$, determine 
whether  $\sup_{w \in \A^*} \cost_{\+A}(w) < \infty$. This problem is known to be \pspace-complete.

\begin{theorem}\label{thm:da-pspace}\cite{Leung91,LeungP04}
The following statements hold: 
  \begin{enumerate}
  \item  The limitedness problem for DA is
    \pspace-complete.
  \item If a DA with $n$ states is limited, then $\sup_{w \in \A^*} \cost_{\+A}(w) \leq 2^{O(n^3)}$.
\end{enumerate}
\end{theorem}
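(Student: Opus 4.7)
The plan is to reduce limitedness for a DA $\+A$ to a decidable structural property of a finite \emph{stabilization monoid} $\+M$ built from the cost-matrices of $\+A$, following the general approach of Hashiguchi, Leung and Simon.

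First, I would associate to each word $w \in \A^*$ its cost-matrix $M_w \in (\mathbb{N} \cup \set{\infty})^{Q \times Q}$, where $M_w[p,q]$ is the minimum cost of a run from $p$ to $q$ labelled $w$. These matrices are closed under the tropical $(\min,+)$-product and compatible with concatenation, with $\cost_{\+A}(w) = \min_{f \in F} M_w[q_0,f]$. I would then truncate entries to a finite set $\set{0, 1, \sharp, \infty}$, where $\sharp$ represents ``finite but unboundedly large under iteration''. The resulting finite monoid $\+M$ comes equipped with a stabilization operation $e \mapsto e^\sharp$ on idempotents that captures the limit of $e^k$ as $k \to \infty$, and a straightforward counting gives $|\+M| \leq 2^{O(n^2)}$.

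Second, using Simon's factorization-forest theorem lifted to $\+M$, I would prove the characterization: $\+A$ is not limited iff there exist $s, e, t \in \+M$, with $e$ idempotent, such that $s$ is the image of a word from $q_0$ to the source of $e$, $t$ is the image of a word from the target of $e$ to a state in $F$, and the relevant diagonal entry of $e^\sharp$ equals $\sharp$. Unlimitedness is thus witnessed by a single ``blowing-up'' idempotent loop. From this, \pspace-membership follows via an on-the-fly guess-and-check algorithm: each element of $\+M$ fits in $O(n^2)$ bits, so one can nondeterministically walk along the Cayley graph of $\+M$ to guess $s$, $e$ and $t$ while keeping only the current monoid element in memory; the idempotency of $e$ and the $e^\sharp$-condition are verified by a bounded computation inside $\+M$. \pspace-hardness I would obtain by reducing from NFA universality, wiring the DA so that the cost blows up precisely when the underlying NFA rejects some word.

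Third, for the $2^{O(n^3)}$ upper bound on the cost, I would use the quantitative form of Simon's theorem: if $\+A$ is limited, every input word admits a factorization forest over $\+M$ of depth $O(|\+M|) = 2^{O(n^2)}$, and the limitedness assumption (no blowing-up idempotent) ensures that each level of the forest contributes only $O(n)$ extra cost. A careful accounting of depth times per-level contribution then yields the claimed bound. The main obstacle will be to rigorously prove that the absence of a blowing-up idempotent in $\+M$ implies a uniform cost bound -- this is the true content of the theorem. The $\sharp$-identification inside $\+M$ merges ``a few costly transitions'' with ``potentially unboundedly many'', and one must argue by induction over the factorization forest, together with a Ramsey-style pigeon-hole to locate long idempotent sub-factorizations inside any product, that no genuinely unbounded behaviour can hide under this coarsening. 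This is precisely where Simon's factorization-forest theorem does the conceptual heavy lifting.
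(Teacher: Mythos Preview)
The paper does not give a proof of this theorem: it is quoted from \cite{Leung91,LeungP04} (and the introduction also points to \cite{Kirsten05} for a ``transparent'' proof). There is thus no in-paper proof to compare against.

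Your plan is essentially the standard argument from the cited literature: pass to cost-matrices over the tropical semiring, collapse to a finite stabilization monoid over $\set{0,1,\sharp,\infty}$, characterize unlimitedness by the presence of an accessible and co-accessible idempotent whose stabilization produces a $\sharp$, derive \pspace-membership by on-the-fly search in that finite monoid, and extract the quantitative bound via Simon's factorization forests. As a high-level architecture this is correct and is precisely the route taken by Leung, Simon, and later Kirsten and Colcombet.

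Two points deserve more care. First, your formulation of the unlimitedness criterion mixes monoid elements with runs: $s,e,t$ are $Q\times Q$ matrices, not objects with a ``source'' and ``target''; the clean statement is that $\+A$ is unlimited iff the closure of $\set{M_a : a\in\A}$ under product and stabilization contains some $m$ with $m[q_0,f]=\sharp$ for a final $f$. Second, your arithmetic for item~(2) is not yet a proof: factorization-forest depth is linear in $|\+M|=2^{O(n^2)}$, but the claim that each level contributes only $O(n)$ additional cost is unjustified --- at an idempotent node the cost of the product of the children is bounded in terms of the structure of that idempotent, not by a uniform $O(n)$, and pinning this down to obtain the stated exponent is exactly where Leung's analysis does the technical work. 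Your sketch names the right tools but does not close this quantitative gap.
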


We use two extensions of DA: \emph{alternating} and \emph{two-way}.  Two-way DA is defined as for NFA, extending the cost function accordingly. The cost of a word is still the minimum over the cost of all (potentially infinitely many) runs. Alternating DA is defined as usual by having two sorts of states: universal and existential. Existential states can be seen as computing the minimum among the cost of all possible continuations of the run, and universal states as computing the maximum (or supremum if the automaton is also two-way). As we will see, these extensions preserve the above \pspace upper bound for the limitedness problem.

\newcommand{\ldel}{{\vdash}}%
\newcommand{\rdel}{{\dashv}}%
\newcommand{\yend}{\textit{end}}%
\newcommand{\nend}{\overline{\textit{end}}}%
Formally, an \defstyle{alternating two-way DA with epsilon transitions} (A2DA$^\epsilon$) over $\A$ 
%is essentially a DA where the set of states $Q = Q_\exists \dcup Q_\forall$ is partitioned into two sets and the transitions read letters from 
%the free group 
%$\A^\pm = \A \dcup \A^{-1}$ or $\epsilon$, and 
%where we identify each word with its reduced form: for example, $a\cdot b \cdot c \cdot  c^{-1} \cdot b^{-1} = a$, $|a\cdot a^{-1}|=0$, and $a \cdot b \cdot b^{-1}$ is a prefix of $a\cdot c$. 
is a tuple $\+A = (\A, Q_\exists, Q_\forall, q_0, F, \delta)$ is a A2DA$^{\epsilon}$ if $q_0 \in Q_\exists$, $F \subseteq Q_\exists$ and 
\[
\delta \subseteq (Q_\exists \cup Q_\forall) \times (\A^\pm \cup \set\epsilon) \times \set{\yend,\nend} \times \set{0,1} \times (Q_\exists \cup Q_\forall);
\]
where $\yend$ indicates that after reading the letter we arrive at the end of the word (\ie, either the leftmost or the rightmost end) and 
$\nend$ indicates that we do not. 
When the automaton $\+A$ is two-way, it is convenient to think of its head as being \emph{between} the letter positions of the word, so an $\yend$-flagged transition can be applied only if it moves the head to be right before the first letter of the word, or right after the last one.

For any given word $w \in \A^*$, consider the edge-labelled graph $G_{\+A,w}=(V,E)$ over $\delta$, where $V = Q \times \set{0, \dotsc, |w|}$, with $Q=Q_\exists \cup Q_\forall$, and $E \subseteq V \times \delta \times V$ consists of all edges $(q,i) \xrightarrow{(q,a,e,c,p)} (p,j)$ such that $e = \yend$ iff $j = 0$ or 
$j = |w|$ and either (a) $i<|w|$, $a = w[i+1]$, and $j = i+1$; (b) $i>0$, $a = (w[i])^{-1}$, and $j=i-1$; or (c)  $a = \epsilon$ and $j = i$.

An \defstyle{accepting run of $\+A$ on $w$ from $(q,i) \in Q \times \set{0, \dotsc, |w|}$} is a finite (possibly empty) edge-labelled directed rooted tree\footnote{That is, a tree-shaped finite edge-labelled graph over $\delta$ with edges directed in the root-to-leaf sense.} 
$t$ over $\delta$ and a labelling $h$ from the nodes of $t$ to the nodes of $G_{\+A,w}$, such that if $t$ is empty then $q \in F$, 
and otherwise $h$ maps the root of $t$ to 
$(q,i)$, every leaf of $t$ to $F \times \set{0, \dotsc, |w|}$, and
for every node $x$ of $t$: 
%and otherwise there is a mapping 
%$h$ from the nodes of $t$ to the nodes of $G_{\+A,w}$ mapping the root of $t$ to 
%$(q,i)$, every leaf of $t$ to $F \times \set{0, \dotsc, |w|}$, and satisfying the following 
%for every node $x$ of $t$: 
\begin{itemize}
\item if $(x,\alpha,y)$ is an (labeled) edge in $t$ for some $y$, then $(h(x),\alpha,h(y))$ is an edge in $G_{\+A,w}$;
	\item if $h(x) \in Q_\forall \times \set{0, \dotsc, |w|}$, then for every edge $(h(x),\alpha,c)$ in $G_{\+A,w}$, there is an edge $(x,\alpha,y)$ in $t$ so that $h(y) = c$;
	%then for every successor $y$ of $h(x)$ in $G_{\+A,w}$ there is a child $x'$ of $x$ so that $h(x') = y$;
	\item if $h(x) \in Q_\exists \times \set{0, \dotsc, |w|}$, then $x$ has at most one child.
\end{itemize}

Each branch of $t$ with label $(q_1, a_1,e_1,c_1,p_1), \dotsc, (q_n, a_n,e_n,c_n,p_n)$ has an associated cost of $c_1 + \dotsb + c_n$; and the cost associated with $t$ is the maximum among the costs of its branches, or $0$ if $t$ is empty. %\footnote{In this definition, whenever there are two transitions $(q,a,c,e,p),(q,a,c',e,p)$ with $q \in Q_\forall$, the automaton behaves as if being non-deterministic among these: accepting runs contain only one of these transitions. But of course this is not a problem as one can assume, wlog., that there are no pairs of transitions of this kind.} 
The cost $\cost_{\+A}(w,q,i)$ is the minimum cost of an accepting run on $w$ from $(q,i)$, or $0$ if none exists; $\cost_{\+A}(w)$ is defined as $\cost_{\+A}(w,q_0,0)$.

An A2DA$^{\epsilon}$ with $\delta \subseteq Q \times (\A \cup \set\epsilon) \times \set{\yend,\nend} \times \set{0,1} \times Q$ is an \defstyle{alternating DA} with $\epsilon$ transitions (ADA$^\epsilon$). An A2DA$^\epsilon$ with $Q_\forall = \emptyset$ is a \defstyle{two-way DA} with $\epsilon$ transitions (2DA$^\epsilon$). An A2DA with both the aforementioned conditions is (equivalent to) a DA with $\epsilon$ transitions (DA$^\epsilon$). Notice that in the last two cases, accepting runs can be represented as words from $\delta^*$ rather than trees. By A2DA (resp., ADA, 2DA, DA) we denote a A2DA$^\epsilon$ (resp., ADA$^\epsilon$, 2DA$^\epsilon$, DA$^\epsilon$) with no $\epsilon$-transitions. Note that DA as just defined is in every sense equivalent to the distance automata model we have defined at the beginning of this section ---this is why we overload the same `DA' name.

\smallskip

We first observe that 2DA can be transformed into DA while preserving both the language and limitedness problems by adapting the standard ``crossing sequence'' construction for translating 2NFA into NFA \cite{shepherdson1959reduction}. This fact will be useful for proving the \expspace upper bound for {\sc Boundedness} of general UC2RPQs in Section~\ref{sec:c2rpq-bound}.

\begin{proposition}\label{prop:2da-da}
There is an exponential time procedure which for every 2DA $\+A$ over $\A$ produces a  
DA $\+B$ over $\A$ such that the languages accepted by $\+A$ and $\+B$ are the same, and 
$\cost_{\+B}(w) \leq \cost_{\+A}(w) \leq f(\cost_{\+B}(w))$ for every $w \in \A^*$, where $f$ is a polynomial function that depends on the
 statespace of $\+A$.
 %\sidediego{OJO! Cambié el bound a ``un polinomio''. Chequear en donde se usa esta proposición.}
\end{proposition}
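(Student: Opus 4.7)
The plan is to adapt the classical crossing-sequences construction of Shepherdson~\cite{shepherdson1959reduction}, which simulates a 2NFA by a one-way NFA, and to decorate it with cost information. A state of $\+B$ will be a \emph{crossing sequence}, i.e., a sequence $(q_1,\dotsc,q_k)$ of states of $\+A$ tagged with alternating direction flags (rightward/leftward), recording the successive states at which $\+A$ crosses a given boundary between two consecutive positions of the input word. Since any minimal accepting run of a 2DA may be assumed to visit each $(\text{state},\text{position})$ pair at most once, every such crossing sequence has length $k\leq 2|Q|$, whence $|Q_{\+B}|\leq 2^{O(|Q|\log|Q|)}$, which is exponential in $|\+A|$.

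A transition of $\+B$ will read a letter $a$ and move from a crossing sequence $\sigma$ (describing crossings to the left of the current boundary) to one $\sigma'$ (to the right) whenever there exists a \emph{consistent bridge}: a set of $\+A$-transitions labelled either $a$ or $a^{-1}$, respecting the $\yend$/$\nend$ end-flags, and interlocking $\sigma$ with $\sigma'$ in the standard way. I will include one $\+B$-transition for every valid bridge, assigning to it cost $1$ when the bridge contains some $\+A$-transition of cost $1$ and cost $0$ otherwise. The initial state of $\+B$ is the singleton crossing sequence $(q_0)$ placed at the left end of the word, and the accepting states are the crossing sequences that encode a computation terminating in some state of $F$ at one of the two ends. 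Language equivalence $L(\+A)=L(\+B)$ then follows exactly as in the classical construction, since costs do not affect acceptance.

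The two cost inequalities I will prove by translating runs in both directions. Given an accepting $\+A$-run of cost $c$, projecting it onto the crossing sequence at each boundary yields an accepting $\+B$-run in which every $\+B$-transition of cost $1$ can be charged to a distinct costly $\+A$-transition, giving $\cost_{\+B}(w)\leq\cost_{\+A}(w)$. Conversely, any accepting $\+B$-run of cost $c$ canonically assembles a full accepting $\+A$-run whose cost equals the sum, over all letter-positions, of the number of costly $\+A$-transitions in the bridge at that position, and is therefore bounded by $2|Q|\cdot c$ since each bridge contains at most $2|Q|$ transitions. This yields $\cost_{\+A}(w)\leq f(\cost_{\+B}(w))$ with $f(x)=2|Q|\cdot x$. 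The transition table of $\+B$ can be enumerated in time polynomial in $|Q_{\+B}|$, giving the claimed exponential-time construction.

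The main obstacle will be the technical bookkeeping needed to make ``consistent bridge'' precise — tracking direction flags, end-flags, and the interlocking of the two crossing sequences — and to observe that including \emph{all} valid bridges, rather than collapsing them to a single $\+B$-transition per pair $(\sigma,\sigma')$, is what ensures that cheapest $\+B$-runs track cheapest $\+A$-runs, so that the lower bound $\cost_{\+B}(w)\leq\cost_{\+A}(w)$ actually holds. The remaining correctness is essentially that of the classical two-way-to-one-way translation.
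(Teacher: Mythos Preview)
Your proposal is correct and follows essentially the same approach as the paper: both adapt Shepherdson's crossing-sequence construction, assigning cost $1$ to a one-way transition whenever the bridge of simulated two-way transitions contains a costly one, and derive the linear bound $f(x)=O(|Q|)\cdot x$ from the bounded length of crossing sequences. The paper's presentation differs only cosmetically, encoding states as sets of pairs $(q,p)$ representing right-loops rather than as ordered sequences, but the cost bookkeeping and the resulting inequalities are the same.
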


%Note that the exponential bound on the size of the resulting DA in the statement above cannot be improved while preserving the language, since two-way NFA are exponentially more succinct than one-way NFA. However, note 

The universality problem for NFAs is known to be \pspace-complete \cite{Kozen77}. The upper bound actually extends to 2NFA and 
even AFA. 
%Recall that the universality problem for two-way, and even alternating NFA, is in \pspace. 
We show that, likewise, the limitedness problem remains in \pspace for A2DA$^\epsilon$. This result will be useful to show in Section~\ref{sec:nice-c2rpq} that {\sc Boundedness} for the class of acyclic UC2RPQs of bounded thickness is in \pspace.

\begin{theorem}\label{thm:a2da-limitedness-pspace}
  The limitedness problem for A2DA$^\epsilon$ is \pspace-complete.
 % \sidediego{agregué transiciones epsilon}
\end{theorem}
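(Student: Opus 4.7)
The lower bound follows immediately from Theorem~\ref{thm:da-pspace}(1), since DA is a special case of A2DA$^\epsilon$ (no alternation, no two-wayness, no $\epsilon$-transitions). The substantive work is the upper bound. The overall plan is to reduce limitedness of an A2DA$^\epsilon$ to limitedness of a DA, carried out so as to stay in PSPACE rather than incurring a doubly-exponential blow-up from naively cascading the feature-removal constructions.

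Concretely, I would proceed in three stages. First, eliminate $\epsilon$-transitions in polynomial time via a cost-aware $\epsilon$-closure construction, yielding an equivalent A2DA. Second, remove alternation using a subset-style construction analogous to the one for AFA universality, where each state of the resulting 2DA encodes a frontier of an alternating run tree; this blows up the state space by an exponential factor but each new state still admits a polynomial-size description. Third, apply the crossing-sequence construction of Proposition~\ref{prop:2da-da} to obtain an equivalent DA, whose states are crossing profiles over states of the previous 2DA, and are again polynomial-size describable. The resulting DA is doubly-exponential in size but is never materialised: its transitions and its acceptance condition can be computed in PSPACE from the original A2DA$^\epsilon$ via these symbolic representations.

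The crux is to execute the Leung--Podelski PSPACE algorithm underlying Theorem~\ref{thm:da-pspace} on this succinctly-presented DA while remaining in polynomial space. The algorithm manipulates cost matrices indexed by states, and a matrix of doubly-exponential dimension cannot be stored explicitly. The main obstacle---and the technical heart of the proof---is showing that the algorithm can be implemented so that at any moment it only accesses polynomially many matrix entries, each obtainable in PSPACE by on-the-fly symbolic simulation of the three constructions above, and that the key subroutines (stability tests, stabilisation-monoid operations, iterated products) admit such oracle access without ever storing a matrix in full. Once this succinct implementation is in place, composing it with the reductions yields the desired PSPACE procedure for A2DA$^\epsilon$ limitedness, matching the lower bound.
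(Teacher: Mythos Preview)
Your high-level plan---reduce A2DA$^\epsilon$ to DA and invoke Theorem~\ref{thm:da-pspace}---is the right one, but the specific reductions you chose produce a DA with \emph{doubly-exponentially many states}, and the \pspace argument cannot be rescued from there in the way you sketch. The algorithm behind Theorem~\ref{thm:da-pspace} uses space polynomial in the number of states: the bound in item~(2) is $2^{O(n^3)}$ with $n$ the state count, and the procedure guesses a word letter by letter while maintaining, for \emph{each} state, the minimal cost of reaching it (capped at that bound). With a doubly-exponential state set, merely recording which states are currently reachable already exceeds polynomial space, however succinctly individual states are described. Your proposal that the Leung--Podelski computation can be run touching only polynomially many matrix entries at a time is precisely the statement that would need proof, and there is no evidence it holds: the stabilisation monoid has size exponential in the state count and its elements are full matrices over that state set. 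There is also a gap one step earlier: the frontier/subset construction that removes alternation from AFA does not transfer directly to distance automata, since a frontier records which states are active but not the cost accumulated on each branch, and the cost semantics (max at universal states, min at existential ones) requires exactly that information.

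The paper sidesteps both issues by a different decomposition whose key idea is to blow up the \emph{alphabet} rather than the state space. Alternation is removed (A2DA $\to$ 2DA, and later ADA $\to$ DA) by enriching each input letter with a choice function $Q_\forall \to \delta$ that resolves universal branching; this makes the alphabet exponential but leaves the state set untouched. Two-wayness is then traded back for alternation (2DA $\to$ ADA$^\epsilon$) by a polynomial-time zig-zag-tree construction. The resulting DA has \emph{polynomially many states} over an exponential alphabet whose letters are polynomially encodable and whose transition relation is polynomial-time checkable. Because the bound of Theorem~\ref{thm:da-pspace}(2) depends only on the number of states, it remains singly exponential, and the simple guess-a-letter-and-track-costs algorithm runs in genuine polynomial space.
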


The novelty of this result is the \pspace upper bound. 
In fact, decidability follows from known results, and in particular \cite[Theorem~14]{BenediktCCB15} claims \exptime-membership in the 
more challenging setup of infinite trees. However, this is obtained via an involved construction spanning through several papers.
The proof of Theorem~\ref{thm:a2da-limitedness-pspace}, instead, is obtained by the composition of the following reductions:
  \[
  \text{lim.~A2DA$^\epsilon$} \xrightarrow{(1)} \text{lim.~A2DA} \xrightarrow{(2)}  \text{lim.~2DA} \xrightarrow{(3)} \text{lim.~ADA$^\epsilon$} \xrightarrow{(4)} \text{lim.~ADA} \xrightarrow{(5)} \text{lim.~DA.}
  \]
Reductions (1), (3) and (4) are in polynomial time, while reductions (2) and (5), which are basically the same, are in exponential time. Specifically, reductions (2) and (5) preserve the statespace but the size of the alphabet grows exponentially in the number of states and linearly in the size of the source alphabet. However, the alphabet and transition set resulting from these reductions can be succinctly described: letters are encoded in polynomial space, and checking for membership in the transition set is polynomial time computable. 

In summary, the composition (1)+(2)+(3)+(4)+(5) yields a DA with the following characteristics: (i) it has a polynomial number of states $Q$; (ii) it runs on an exponential alphabet $\A$ ---and every letter is encoded in polynomial space---; and (iii) one can check in polynomial time whether a tuple $t\in Q \times \A \times \set{\yend,\nend} \times \set{0,1} \times Q$ is in its transition relation.
This, coupled with Theorem~\ref{thm:da-pspace}, item (2) (which offers a bound depending only on the number of states), provides a polynomial space algorithm for the limitedness of A2DA$^\epsilon$: We can non-deterministically check the existence of a word with cost greater than the singly-exponential bound $N$ using only polynomial space, by guessing one letter at a time and keeping the set of reachable states together with the associated costs, where each cost is encoded in binary using polynomial space if it is smaller than $N$, or with a `$\infty$' flag otherwise. %The algorithm accepts if a final state with cost marked $\infty$ is reached. 
The algorithm accepts if at least one final state is reached and the costs of all reachable final states are marked $\infty$. 
Since \npspace=\pspace (Savitch's Theorem), Theorem~\ref{thm:a2da-limitedness-pspace} follows.

We now provide a brief description of the reductions used in the proof of Theorem \ref{thm:a2da-limitedness-pspace}. 
\begin{description}
	\item[(1) From A2DA$^\epsilon$ to A2DA] This is a trivial reduction obtained by simulating $\epsilon$-transitions by reading $a \cdot a^{-1}$ for some $a \in \A$.
	\item[(2) From A2DA to 2DA] 
	Given a A2DA $\+A = (\A, Q_\forall, Q_\exists, q_0, F, \delta)$, we build a 2DA $\+B$ over a larger alphabet $\B$, where we trade alternation for extra alphabet letters. The alphabet $\B$ consists of triples $(f^\rightarrow,a,f^\leftarrow)$, where $a \in \A$ and $f^\rightarrow,f^\leftarrow : Q_\forall \to \delta$. The idea is that $f^\rightarrow,f^\leftarrow$ are ``choice functions'' for the alternation: whenever we are to the left (resp., right) of a position of the word labelled $(f^\rightarrow,a,f^\leftarrow)$ in state $q \in Q_\forall$, instead of exploring all transitions departing from $q$ and taking the maximum cost over all such runs (this is what alternation does in $\+A$), $\+B$ chooses to just take the transition $f^\rightarrow(q)$ (resp., $f^\leftarrow(q)$). Note that $\B$ is exponential in the number of states but not in the size of $\A$. In this way, we build a 2DA $\+B$ having the same set of states as $\+A$ but with a transition function which is essentially deterministic on the states of $Q_\forall$. In the end we obtain that
	\begin{itemize}
		\item for every $w \in \B^*$, $\cost_{\+B}(w) \leq \cost_{\+A}(w_\A)$; and
		\item for every $w\in \A^*$ there is $\widetilde w\in \B^*$ so that $\widetilde w_\A=w$ and $\cost_{\+A}(w) = \cost_{\+B}(\widetilde w)$,%\footnote{This can be alternatively seen as the existence of a positional strategy for the universal player for obtaining the cost $\cost_{\+A}(w)$ when the automata is seen as a two-player game.}
	\end{itemize}
	where $w_\A$ and $\widetilde w_\A$ denote the projections onto the alphabet $\A$. This implies that the limitedness problem is preserved.

	\item[(3)+(4) From 2DA to ADA]
	We show a polynomial-time translation from 2DA to ADA which preserves limitedness. In the case of finite automata, there are language-preserving reductions from 2NFA to AFA with a quadratic blowup in the statespace \cite{birget1993state,PitermanV03}. However, these translations, when applied blindly to reduce from 2DA to ADA, preserve neither the cost semantics nor the limitedness of languages. On the other hand, \cite{BlumensathCKPB14} shows an involved construction that results in a reduction from 2DA to ADA on \textsl{infinite trees}, which preserves limitedness but it is not polynomial in the number of states. We show a translation from 2DA to ADA which serves our purpose: it preserves limitedness and it is polynomial time computable.
	The translation is close to the language-preserving reduction from 2NFA to AFA of \cite{PitermanV03}, upgraded to take into account the cost of different alternation branches, somewhat in the same spirit as the \textsl{history summaries} from \cite{BlumensathCKPB14}. 
%	\sidediego{comenté la poca explicación que había sobre los detalles de esta reducción, me pareció ocupaba mucho (porque contiene una figura también)}
	% The reduction exploits the structure of the runs of 2-way finite automata, which can be described as ``a tree  of zig-zags'', borrowing the wording of \cite{PitermanV03} (\cf~Figure~\ref{fig:2way-as-tree}). The idea is then to explore the tree top-down by spawning new threads at every branching and using only a statespace of $Q \cup Q^2$. In our case we have to keep some extra information on which branches contain at least one costly transition in order to preserve limitedness. Further, to keep a polynomial number of states the resulting ADA needs to use $\epsilon$-transitions, but these transitions can be easily eliminated at the expense of adding one extra letter to the alphabet.
	% \begin{figure}[t]
	% 	\includegraphics[width=\textwidth]{img/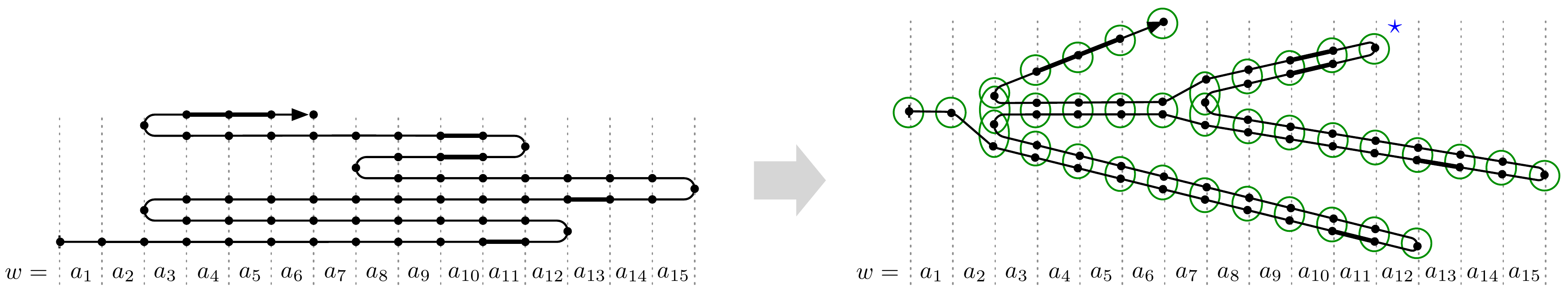}
	% 	\caption{A run of 2DA seen as a tree of zig-zags. Tree nodes are depicted as circles, and there is an (implicit) edge from any circle to its right neighbor(s). Thick strokes represent costly transitions.}\label{fig:2way-as-tree}
	% \end{figure}

\item[(5) From ADA to DA] This is exactly the same reduction as (1), noticing that the alphabet will still be single-exponential in the original A2DA$^\epsilon$.
	
\end{description}

%\section{Boundedness for RPQ}   
%	\label{sec:rpq-bound}

\section{Complexity of Boundedness for UC2RPQs}   
	\label{sec:c2rpq-bound}

Here we show that {\sc Boundedness} for UC2RPQs is \expspace-complete. We do so 
by applying distance automata results presented in the previous section on top of the semantic characterizations presented in Section \ref{sec:char}. The lower bound applies even for 
CRPQs. We further show that there is a triply exponential tight bound for the size of the equivalent UCQ of a UC2RPQ (and even CRPQ), 
whenever this exists. This is summarized in the following theorem. 
If $\Gamma$ is a UC2RPQ, we write $\|\Gamma\|$ for the length of an arbitrary reasonable encoding of $\Gamma$ ---in particular, encodings in which regular languages are described through NFA or regular expressions. 
%$||Q||$ for 
%the number of RPQs in $Q$, and if $\Gamma$ is a UCQ we write $|| \Gamma ||$ for $\sum_{Q \in \Gamma} ||Q||$. 

\begin{theorem} \label{theo:main}
The following statements hold. 
\begin{enumerate}
\item {\sc Boundedness} for UC2RPQs is \expspace-complete. The problem remains \expspace-hard even for Boolean CRPQs.
\item If a UC2RPQ $\Gamma$ is bounded, there is a UCQ $\Phi$ that is equivalent to $\Gamma$ and such that $\Phi$ has at 
most triple-exponentially many CQs, each one of which is at most of double-exponential size with respect to $\|\Gamma\|$.
\item There is a family $\{\Gamma_n\}_{n \geq 1}$ of Boolean CRPQs such that for each $n\geq 1$ it is the case that: 
(1) $\|\Gamma_n\|=O(n)$, (2) $\Gamma_n$ is bounded, and (3) every UCQ that is equivalent to $\Gamma_n$ has at least triple-exponentially many 
CQs with respect to $n$. 
\end{enumerate} 
\end{theorem}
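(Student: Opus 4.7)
The plan for the \expspace upper bound in item (1) is to reduce boundedness of $\Gamma$ to the limitedness of a distance automaton of exponential size. By Proposition~\ref{prop:basic}, $\Gamma$ is bounded iff the ``covering cost'' $f(\lambda):=\min\{\|\lambda'\| : \lambda' \text{ is an expansion of } \Gamma \text{ and } \lambda' \to \lambda\}$ is uniformly bounded across expansions $\lambda$ of $\Gamma$. I would design a 2DA whose input alphabet encodes an expansion $\lambda$ (the concatenation of the chosen words $w_A \in L_A$ for each atom $A$ of some $\gamma \in \Gamma$, with two-wayness letting the head revisit segments to honour the shared-variable structure), and whose cost on a valid encoding of $\lambda$ equals $f(\lambda)$: the automaton nondeterministically guesses a dominator $\lambda'$ together with a homomorphism $h\colon \lambda' \to \lambda$, and charges one unit per atom of $\lambda'$. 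The state space is exponential in $\|\Gamma\|$, since it must track NFA states for the $L_i$ of the guessed $\lambda'$ together with the bookkeeping for $h$. Proposition~\ref{prop:2da-da} then converts this into a DA of exponential size, and Theorem~\ref{thm:da-pspace}(1) decides its limitedness in polynomial space in its size, hence in \expspace in $\|\Gamma\|$.

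For the \expspace lower bound of item (1), I would reduce from an \expspace-hard problem such as CRPQ containment~\cite{CGLV00}, or directly encode acceptance by an \expspace Turing machine through a single Boolean CRPQ whose expansions represent candidate computations on an exponential tape (whose length is enforced by a binary counter of polynomial bit-length) and whose minimal expansions are unbounded precisely when the machine does not accept. For item (2), the same construction yields a DA of $2^{O(\|\Gamma\|)}$ states, and Theorem~\ref{thm:da-pspace}(2) then bounds its cost---and hence the size of every minimal expansion of $\Gamma$---by $2^{2^{O(\|\Gamma\|)}}$. Counting the CQs of doubly-exponential size over the polynomial alphabet of $\Gamma$ then shows that there are at most $2^{2^{2^{O(\|\Gamma\|)}}}$ minimal expansions, so Lemma~\ref{lemma:equiv-minimal} yields an equivalent UCQ with triply-exponentially many CQs, each of doubly-exponential size.

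The most delicate part is item (3): exhibiting a family $\{\Gamma_n\}$ of bounded Boolean CRPQs of size $O(n)$ whose equivalent UCQ is forced to have at least triply-exponentially many CQs. The main obstruction is that with polynomially many atoms and polynomial-size NFAs, each atom can enforce words of length only up to $2^{O(n)}$ via a binary counter of polynomial bit-length, so a naive product of atom-choices delivers only $2^{2^{O(n)}}$ distinct expansions, one exponential short. The plan is to break this barrier by exploiting the conjunctive structure itself: arrange the atoms of $\Gamma_n$ so that each minimal expansion has the geometry of a $2^{O(n)}\times 2^{O(n)}$ grid, with shared variables enforcing local consistency between cells while each of the doubly-exponentially many cells can carry one of doubly-exponentially many symbols; the number of labellings is then $2^{2^{O(n)}\cdot 2^{O(n)}}=2^{2^{2^{O(n)}}}$, and one must argue via rigidity of the encoding that distinct labellings give hom-incomparable minimal expansions. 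The key difficulty is reconciling this combinatorial richness with \emph{boundedness}: the grid must be of fixed doubly-exponential size so that every minimal expansion has bounded size, and any longer expansion must fold homomorphically onto a canonical grid expansion---so the RPQs must simultaneously cap the expansion length (preventing unbounded iteration) and preserve pairwise hom-incomparability of the triply-exponentially many canonical ones. I expect to adapt the succinctness techniques underlying the \expspace-hardness proof for CRPQ containment, lifting them by one exponential level via a careful nested-counter encoding.
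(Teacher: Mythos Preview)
Your treatment of items (1) and (2) is essentially the paper's argument: build an exponential-size 2DA over encodings of expansions whose cost is the size of the smallest dominating expansion, turn it into a DA via Proposition~\ref{prop:2da-da}, and invoke Theorem~\ref{thm:da-pspace}; the hardness comes from a tiling/machine encoding, and the size bound in (2) is read off the limitedness bound. Good.

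For item~(3), however, your analysis of the ``main obstruction'' contains the error. You write that ``each atom can enforce words of length only up to $2^{O(n)}$'', but this is precisely what item~(2) denies: minimal expansions can (and here will) have \emph{doubly} exponential size. The RPQ languages are infinite; what caps a minimal expansion is not the NFA size but the interaction with the other atoms, and that cap is $2^{2^{O(n)}}$. Once you see this, the naive count already suffices: a single path of length $2^{2^{O(n)}}$ over a fixed alphabet admits $2^{2^{2^{O(n)}}}$ distinct labellings. There is no need to manufacture a two-dimensional expansion, and indeed you cannot: an expansion of a CRPQ with $O(n)$ atoms is a union of $O(n)$ paths sharing endpoints, never a $2^{O(n)}\times 2^{O(n)}$ grid. (Your arithmetic is also off: such a grid has $2^{O(n)}$ cells, not doubly-exponentially many, so even granting the construction you would land one exponential short.)

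The paper's solution is exactly what your last sentence gestures at, and it is much simpler than the grid detour. One reuses the very CRPQ $\gamma$ from the \expspace-hardness reduction (Lemma~\ref{lem:tiling-encoding}): its minimal expansions are, up to polynomial noise, the solutions of the underlying $2^n$-tiling instance, encoded as words in the language $E$. To get triply-exponentially many minimal expansions one designs a specific tiling instance $T_n$ with triply-exponentially many solutions: force every solution to have exactly $2^{2^n}$ rows (a ``nested counter'' storing the row index in binary across the $2^n$ columns), and let each row carry one free bit from $\set{a,b}$. Then solutions biject with maps $\set{0,\dots,2^{2^n}-1}\to\set{a,b}$, giving $2^{2^{2^n}}$ of them, all encoded as words of length $\Theta(2^n\cdot 2^{2^n})$. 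Distinct solutions yield hom-incomparable minimal expansions of $\gamma$, so by Lemma~\ref{lemma:cont-crpq} any equivalent UCQ needs at least that many disjuncts. Boundedness is automatic because $T_n$ has only finitely many solutions.
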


\subsection{Upper bounds}
\label{sec:upper-bound}
Our upper bound proof builds on top of techniques developed by Calvanese et al.~\cite{CGLV00} for studying the {\em containment problem for UC2RPQs}: 
Given UC2RPQs $\Gamma,\Gamma'$, is it the case that $\Gamma \subseteq \Gamma'$? 
%As mentioned in Lemma
%\ref{lemma:cont-crpq}, we have that $\Gamma \subseteq \Gamma'$ iff for every expansion ${\cal E}$ of $\Gamma$ there exists an expansion ${\cal E}'$ of 
%$\Gamma'$ such that ${\cal E}' \to {\cal E}$. 
It is shown in \cite{CGLV00} that from $\Gamma,\Gamma'$ it is possible to construct exponentially sized 
NFAs $\+A_{\Gamma,\Gamma'}$ and $\+A'_{\Gamma,\Gamma'}$, such that $\Gamma \subseteq \Gamma'$ iff there is a word in 
$\+A_{\Gamma,\Gamma'} \cap \overline{\+A'_{\Gamma,\Gamma'}}$. 
%By known results, the latter is solvable in \expspace w.r.t. the combined size of $(\+A_{\Gamma,\Gamma'},\overline{\+A'_{\Gamma,\Gamma'}})$. 
It is a well-known result that the latter is solvable in 
\nlogspace on the combined size of $(\+A_{\Gamma,\Gamma'},\overline{\+A'_{\Gamma,\Gamma'}})$, \ie, in \expspace. 
We modify this construction to study the boundedness of a given UC2RPQ $\Gamma$. In particular, we  %show that it is possible to 
construct from $\Gamma$ in exponential time a DA $\+D_\Gamma$ such that $\Gamma$ is bounded iff $\+D_\Gamma$ is limited. 
The result then follows from Theorem \ref{thm:da-pspace}, 
which establishes that limitedness for $\+D_\Gamma$ can be solved in polynomial space 
on the number of its states, and thus in \expspace. 

\begin{proposition} \label{prop:bounded-2da}
There is a single-exponential time procedure that takes as input a UC2RPQ $\Gamma$ and constructs a DA $\+D_\Gamma$ such that 
$\Gamma$ is bounded iff $\+D_\Gamma$ is limited. 
\end{proposition}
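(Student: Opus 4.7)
The plan is to instantiate the semantic characterization of boundedness from Proposition~\ref{prop:basic}: $\Gamma$ is bounded iff there is some $k\ge 1$ such that for every expansion $\lambda$ of $\Gamma$ there is an expansion $\lambda'$ of $\Gamma$ with $\lambda'\to\lambda$ and $\|\lambda'\|\le k$. So I will design $\+D_\Gamma$ to read (an encoding of) an expansion $\lambda$ of $\Gamma$ and have cost equal to
\[
\min\bigl\{\|\lambda'\| \,:\, \lambda'\text{ is an expansion of }\Gamma\text{ with }\lambda'\to\lambda\bigr\}.
\]
If $\lambda$ is not a valid expansion encoding, the automaton will simply reject, so no cost is incurred. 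With this, limitedness of $\+D_\Gamma$ is exactly the statement in Proposition~\ref{prop:basic}(2), giving the equivalence.

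To build $\+D_\Gamma$ I will adapt the CGLV word-encoding of expansions used in~\cite{CGLV00} for the containment problem. Fix an order on the atoms of each disjunct $\gamma_i$ of $\Gamma$ and encode an expansion $\lambda$ of some $\gamma_i$ as a word over $\A^\pm$ augmented with separator symbols that mark atom boundaries and identify shared variables (in particular the free variables). The CGLV construction gives two singly-exponential NFAs: (i) an NFA $\+A_\Gamma$ accepting exactly those words that encode valid expansions of $\Gamma$, and (ii) an NFA $\+A'_\Gamma$ that, while scanning the encoding of $\lambda$, existentially guesses some disjunct $\gamma_j$, an expansion $\lambda'$ of $\gamma_j$ one atom at a time, and a homomorphism $h:\lambda'\to\lambda$ witnessed along the traversal. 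The key structural feature I need from $\+A'_\Gamma$ is that its state explicitly tracks which atom of $\lambda'$ is currently being simulated, so the moments at which ``a new atom of $\lambda'$ is opened'' are identifiable in the transition relation.

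Now I turn $\+A'_\Gamma$ into a distance automaton by declaring \emph{costly} exactly those transitions that initiate the simulation of a fresh atom of the guessed $\lambda'$; all other transitions are non-costly. Then on an input that encodes a valid $\lambda$, the cost of any accepting run is precisely $\|\lambda'\|$ for the $\lambda'$ guessed by that run, and the cost of the word is the required minimum. To kill inputs that are not valid encodings, I take the synchronous product with $\+A_\Gamma$ (with all non-costly transitions), yielding a DA $\+D_\Gamma$ with at most singly-exponentially many states, constructible in exponential time. By the characterization of Proposition~\ref{prop:basic}, $\+D_\Gamma$ is limited iff $\Gamma$ is bounded.

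The main obstacle will be the bookkeeping that ensures the cost of a run is \emph{exactly} $\|\lambda'\|$ and nothing else. Two pitfalls must be avoided: (a) a single atom of $\lambda'$ could be unfolded by $\+A'_\Gamma$ over several segments of the traversal of $\lambda$ (so naive per-letter counting would overcount), and (b) several atoms of $\lambda'$ could be merged if we allowed the guessed homomorphism to identify atoms sloppily (so naive counting would undercount). Both are handled by letting the state of $\+A'_\Gamma$ carry (i) a pointer to the current atom of $\lambda'$ being simulated, together with the target endpoints in $\lambda$ already committed for its variables, and (ii) a clean ``atom-opening'' transition that is the unique place where cost~$1$ is charged. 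This is precisely the information that CGLV already stores in their product construction; the only real addition is to isolate these atom-opening transitions as the costly ones, so the construction remains singly exponential as required.
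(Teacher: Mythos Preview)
Your high-level plan---encode expansions as words, let the automaton guess a small expansion $\lambda'$ together with a homomorphism into the input $\lambda$, and charge cost~$1$ per atom of $\lambda'$---is exactly the paper's strategy. The gap is in your description of the CGLV automaton $\+A'_\Gamma$. That automaton is \emph{not} a one-way NFA: checking $\lambda'\to\lambda$ requires, for each atom $x\xrightarrow{L}y$ of the guessed disjunct, finding an oriented path in $\lambda$ from $h(x)$ to $h(y)$ with label in $L$. Such a path visits positions of the encoding $w_\lambda$ in arbitrary order (it may jump between segments via the variable-identification transitions and traverse edges backwards), so the automaton that follows it is a \emph{2NFA}. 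Your assumption that there are cleanly isolated ``atom-opening'' transitions in a one-way NFA, so that marking them costly yields cost exactly $\|\lambda'\|$, is not supported by the CGLV construction.

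The paper deals with this as follows: it first builds a polynomial-size \emph{2DA} $\+D_2^\gamma$ from the 2NFA $\+B_2^\gamma$ by making the transitions that actually guess an atom of $\lambda'$ costly (these are the ``type~(1)'' transitions) and the variable-jump transitions non-costly; on this 2DA the cost is exactly the size of the guessed $\lambda'$. Then it invokes Proposition~\ref{prop:2da-da} (the crossing-sequence translation from 2DA to DA) to obtain a one-way DA $\+C_2^\gamma$ of exponential size. Crucially, this translation does \emph{not} preserve cost exactly: it only guarantees $\cost_{\+C_2^\gamma}(w)\le\cost_{\+D_2^\gamma}(w)\le f(\cost_{\+C_2^\gamma}(w))$ for a polynomial $f$, because a single one-way transition may simulate several two-way transitions and is assigned the \emph{maximum} of their costs, not the sum. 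This polynomial distortion is harmless for limitedness, which is all that is needed. Your proposal would be fixed by inserting this two-way step and the 2DA$\to$DA conversion; without it, the claim that the one-way DA computes $\|\lambda'\|$ exactly is unjustified.
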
  

\begin{proof}
Similarly as done in \cite{CGLV00}, the DA $\+D_\Gamma$ will run over encodings of expansions of the UC2RPQ $\Gamma$, \ie, words
over the alphabet $\A_1 := \A^{\pm} \cup {\cal V} \cup \{\$\}$, 
where $\A$ is the alphabet of $\Gamma$, ${\cal V}$ is the set of variables of $\Gamma$, and $\$$ is a fresh symbol. 
If $\gamma = \exists \bar z \bigwedge_{1 \leq i \leq m} (x_i \xrightarrow{L_i} y_i)$ is a C2RPQ in $\Gamma$ and $\lambda$ is the expansion of 
$\gamma$ obtained by expanding $x_i \xrightarrow{L_i} y_i$ into an oriented 
path $\pi_i$ from $x_i$ to $y_i$ with label $w_i\in L_i$, then 
we encode $\lambda$ as the word 
\begin{align*}
	w_{\lambda} = \$ x_1 w_1 y_1 \$ x_2 w_2 y_2 \$ \, \dotsb \, \$ x_m w_m y_m \$ ~ \, \in \, ~ \A_1^* %\label{eq:encoding-expansion}
\end{align*}
Note how the subword $x_iw_iy_i$ encodes the oriented path $\pi_i$. 
Every position $j\in \{1,\dots,|w_\lambda|\}$ with $w_\lambda[j]\neq \$$ represents a variable in $\lambda$: either $x_i$ or $y_i$ if $w_\lambda[j]=x_i$ or $w_\lambda[j]=y_i$, respectively; 
or the $(\ell+1)$-th variable in the oriented path $\pi_i$ if $w_\lambda[j]$ is the $\ell$-th symbol in the subword $w_i$. 
Hence different positions in $w_\lambda$ could represent the same variable in $\lambda$, \eg, in the encoding $\$xabcy\$$, 
the 5\emph{th} position containing a `$c$' and the 6\emph{th} position containing a `$y$', represent the same variable, namely, the last vertex $y$ of the oriented path.
It is then easy to build, in polynomial time, an NFA $\+A_1$ over $\A_1$ recognizing the language of all such encodings of expansions of 
$\Gamma$. Our automaton $\+D_\Gamma$ is the product of $\+A_1$ and the DA $\+C_\Gamma$ defined below.
In particular, $\+D_\Gamma$ is limited iff $\+C_\Gamma$ is limited over words of the form $w_\lambda$, for $\lambda$ an expansion of $\Gamma$.

Fix a disjunct $\gamma$ of $\Gamma$. 
As in \cite{CGLV00}, 
we consider words over the alphabet $\A_2 := \A_1 \times (2^{{\cal V}} \cup \{\#\})$ of the form $(\ell_1,\alpha_1)\cdots(\ell_n,\alpha_n)$, 
such that $w_\lambda=\ell_1\cdots\ell_n$, for some expansion $\lambda$ of $\Gamma$, and the $\alpha_i$'s are \emph{valid $\gamma$-annotations}, \ie, (1) $\alpha_i=\#$ if $\ell_i=\$$, (2) $\alpha_1,\dots,\alpha_n\in 2^{{\cal V}}$
induce a partition of the variable set ${\cal V}_\gamma$ of $\gamma$, and (3) for each free variable $x \in \+V_\gamma$ there is some
$(\ell_i,\alpha_i)$ such that $\ell_i=x$ and $x\in \alpha_i$. 
It is easy to construct an NFA $\+B_1^\gamma$ of exponential size that given $w=(\ell_1,\alpha_1)\cdots(\ell_n,\alpha_n)$ with $w_\lambda=\ell_1\cdots\ell_n$, 
checks if the $\alpha_i$'s are valid $\gamma$-annotations. 
Note that if the latter holds, then the annotations encode a mapping $h_w$ from ${\cal V}_\gamma$ to the variables of $\lambda$ such that $h_w(\bar x)=\bar x$, 
where $\bar x$ are the free variables of $\gamma$.

Now, given $w=(\ell_1,\alpha_1)(\ell_2,\alpha_2)\cdots(\ell_n,\alpha_n)$ with $w_\lambda=\ell_1\cdots\ell_n$ and the $\alpha_i$'s being valid $\gamma$-annotations, 
it is shown in \cite{CGLV00} that one can construct in polynomial time a 2NFA $\+B_2^\gamma$ 
that checks the existence of an expansion $\lambda'$ of $\gamma$ and a homomorphism $h$ from $\lambda'$ to $\lambda$ consistent with $h_w$. 
For each atom $x \xrightarrow{L} y$ of $\gamma$, the automaton $\+B_2^\gamma$ guesses 
an oriented path $\pi$ in $\lambda$ from $h_w(x)$ to $h_w(y)$ with label $w'\in L$, directly over the encoding $w_\lambda$ 
starting at a position $j_x$ and ending at a position $j_y$ in $\{0,\dots,n\}$ (recall that the head moves in $\{0,\dots,n\}$) with $j_x,j_y>0$, $w[j_x]=(\ell,\alpha)$, $w[j_y]=(\ell',\alpha')$, 
$x\in \alpha$ and $y\in \alpha'$. 
Note that we have two types of transitions: (1) transitions that consume $a\in \A^\pm$ and actually guess an atom of $\pi$, and 
(2) transitions to ``jump'' from position $j$ to $j'$ in $\{0,\dots,n\}$ representing \emph{equivalent} variables of $\lambda$. 
The latter means that $j,j'>0$ and either $w_\lambda[j]$ and $w_\lambda[j']$ represents exactly the \emph{same} variable of $\lambda$, or 
$w_\lambda[j]$ and $w_\lambda[j']$ represent variables $z,z'$ of $\lambda$ such that $z=^{*}_{\lambda} z'$, where $=^{*}_{\lambda}$ 
is the reflexive-transitive closure of the relation induced by the equality atoms in $\lambda$. %\sidemiguel{Esto ultimo es algo que los italianos no necesitan hacer, ya que codifican las clases de equivalencias explicitamente} 

Let $\+D_2^\gamma$ be the 2DA obtained from the 2NFA $\+B_2^\gamma$ by setting to $0$ and $1$ the cost of transitions of type (2) and (1), respectively.  
Hence, for a word $w$ such that the projection of $w$ to $\A_1$ is $w_\lambda$, and the one to $(2^{{\cal V}} \cup \{\#\})$ is a valid $\gamma$-annotation, we have that $\cost_{\+D_2^\gamma}(w)$
is precisely the minimum size of an expansion $\lambda'$ that can be mapped to $\lambda$ via a homomorphism compatible with $h_w$. 
By Proposition \ref{prop:2da-da}, we can construct in exponential time on $\+D_2^\gamma$
a DA $\+C_2^\gamma$ accepting the same language as $\+D_2^\gamma$ and having an exponential number of states, so that for every word $w'$, we have 
$\cost_{\+C_2^\gamma}(w') \leq \cost_{\+D_2^\gamma}(w') \leq f(\cost_{\+C_2^\gamma}(w'))$ for some polynomial function $f$. 
Let $\exists\+C^\gamma$ be the result of taking the product of $\+B_1^\gamma$ and $\+C_2^\gamma$ and then projecting over the alphabet $\A_1$. 
For every expansion $\lambda$ of $\Gamma$, if $\lambda'$ is a minimal size expansion of $\gamma$ such that $\lambda'\to\lambda$, then 
we obtain that $\cost_{\exists\+C^\gamma}(w_\lambda) \leq \|\lambda'\| \leq f(\cost_{\exists\+C^\gamma}(w_\lambda))$. 
We define our desired $\+C_\Gamma$ to be the union of $\exists\+C^\gamma$ over all $\gamma$ in $\Gamma$. 
We have that for every expansion $\lambda$, if $\lambda_{min}$ is a minimal size expansion of $\Gamma$ such that $\lambda_{min}\to\lambda$, 
then $\cost_{\+C_\Gamma}(w_\lambda) \leq \|\lambda_{min}\| \leq f(\cost_{\+C_\Gamma}(w_\lambda))$. 
By Proposition \ref{prop:basic}, item (2), $\Gamma$ is bounded iff $\|\lambda_{min}\|$ is bounded over all $\lambda$. 
The latter condition holds iff $\+C_\Gamma$ is limited over words $w_\lambda$, for all expansion $\lambda$. 
By definition, the latter is equivalent to $\+D_\Gamma$ being limited. 
Summing up, we obtain that $\Gamma$ is bounded iff $\+D_\Gamma$ is limited, as required. 
Note that the whole construction can be done in exponential time. 
\end{proof}

 As a corollary to Proposition \ref{prop:bounded-2da} and Theorem \ref{thm:da-pspace} we obtain the desired upper bound for 
 part (1) of Theorem \ref{theo:main}. 
 
 \begin{corollary} 
 {\sc Boundedness} for UC2RPQs is in \expspace. 
 \end{corollary}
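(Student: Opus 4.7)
The plan is to compose the two ingredients just developed. First I would invoke Proposition~\ref{prop:bounded-2da} to construct, in single-exponential time in $\|\Gamma\|$, a DA $\+D_\Gamma$ with the key property that $\Gamma$ is bounded iff $\+D_\Gamma$ is limited. Since the construction runs in exponential time, the resulting automaton $\+D_\Gamma$ has at most exponentially many states (and is over an exponentially large alphabet).

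Next I would apply Theorem~\ref{thm:da-pspace}(1), which states that the limitedness problem for DA is decidable in \pspace in the size of the input DA. Feeding $\+D_\Gamma$ to this algorithm consumes polynomial space in $\|\+D_\Gamma\|$, which is exponential in $\|\Gamma\|$; hence the overall procedure uses exponential space in $\|\Gamma\|$.

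Putting the two steps together yields an \expspace algorithm: build $\+D_\Gamma$ (exponential time, hence exponential space), then decide its limitedness (\pspace in the exponentially sized input, hence \expspace in $\|\Gamma\|$). By the equivalence from Proposition~\ref{prop:bounded-2da}, the algorithm correctly decides boundedness of $\Gamma$. There is essentially no obstacle here beyond a routine composition argument, since all the heavy lifting has been carried out in Proposition~\ref{prop:bounded-2da} (which itself relies on Proposition~\ref{prop:2da-da} and on the characterization from Proposition~\ref{prop:basic}) and in Theorem~\ref{thm:da-pspace}. The only mild subtlety worth noting is that one does not need to materialize $\+D_\Gamma$ explicitly before running the limitedness procedure: since each transition of $\+D_\Gamma$ can be computed on demand in polynomial space from $\Gamma$, the construction and the limitedness check can be interleaved, which keeps the overall space usage within the desired \expspace bound.
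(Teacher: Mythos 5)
Your proposal is correct and follows essentially the same route as the paper: construct the distance automaton $\+D_\Gamma$ of Proposition~\ref{prop:bounded-2da} in exponential time and then decide its limitedness via Theorem~\ref{thm:da-pspace}, giving \pspace{} in an exponentially sized input and hence \expspace{} overall. The on-the-fly remark is harmless but unnecessary, since exponential space already suffices to materialize $\+D_\Gamma$ explicitly.
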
 

\smallskip
\noindent
{\bf {\em Size of equivalent UCQs.}} Here we prove part (2) of Theorem \ref{theo:main}. %namely, that if a UC2RPQ $\Gamma$ is bounded then it is equivalent 
%to a UCQ with triple-exponentially many CQs, each one of which is of double exponential size. 
Since $\Gamma$ is bounded we have 
from Proposition \ref{prop:bounded-2da} that $\+D_\Gamma$ is limited. 
Then, from Theorem 
\ref{thm:da-pspace} we obtain that the maximum cost that it takes $\+D_\Gamma$ over a word is $N$, where $N$ is exponential on the number of states of $\+D_\Gamma$, 
and thus double-exponential on $\|\Gamma\|$ by construction. Therefore, for every expansion $\lambda$ of $\Gamma$, if $\lambda_{min}$ is a minimal size expansion $\Gamma$ 
such that $\lambda_{min}\to \lambda$, then $\|\lambda_{min}\| \leq f(N)$,  where $f$ is the polynomial function of the proof of Proposition \ref{prop:bounded-2da}. 
In particular, all minimal expansions of $\Gamma$ 
are of size $\leq$ $f(N)$. By Lemma \ref{lemma:equiv-minimal}, the UC2RPQ $\Gamma$ is equivalent to the union of all its minimal expansions. 
The number of such minimal expansions is thus at most exponential on $f(N)$, and hence triple-exponential on $\|\Gamma\|$.  

\subsection{Lower bounds} 
\label{subsec:expspace-hard}

We reduce from the $2^n$-tiling problem, that is, a tiling problem restricted to $2^n$ many columns, which is  \expspace-complete (see, \eg, \cite{CGLV00}). We show that for every $2^n$-tiling problem $T$ there is a CRPQ $\gamma$, computable in polynomial time from $T$, whose number of minimal expansions is essentially the number of solutions to $T$ in the following sense.

\newcommand{\lemTilingEncoding}{For every $2^n$-tiling problem $T$ with $m$ solutions there is a Boolean CRPQ $\gamma$, computable in polynomial time from $T$, such that the number of minimal expansions of $\gamma$ is $O((g(|T|) + m)^{n+1})$ and $\Omega(m)$, for some doubly exponential function $g$.  %which does not depend on $T$. 
Further, $\gamma$ consists of a Boolean CRPQ of the form $\exists x,y \, \bigwedge_{0 \leq i \leq n} (x \xrightarrow{L_i} y)$, where each $L_i$ is given as a regular expression.}
 
\begin{lemma}\label{lem:tiling-encoding}
\lemTilingEncoding
\end{lemma}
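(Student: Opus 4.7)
The plan is to reduce from the $2^n$-tiling problem, adapting the classical CRPQ encoding of Calvanese et al.~\cite{CGLV00}. Given an instance $T$ with tile set $\mathcal{T}$, horizontal and vertical constraints, initial/final tiles, and a row count $r$ (in unary), I work over the alphabet $\A$ consisting of the tile symbols, of $2n$ bit-markers encoding the column index in binary, and of a row delimiter $\#$. Each tiling $\sigma$ is encoded as the word $w_\sigma$ obtained by listing cells row-by-row and column-by-column, with each cell-block recording both the tile and the $n$-bit binary column index, and with $\#$ separating consecutive rows; the encoding is injective. Setting $\gamma = \exists x,y \, \bigwedge_{0\leq i\leq n} (x \xrightarrow{L_i} y)$, the language $L_0$ enforces structural well-formedness (correct binary enumeration of columns, $r$ rows, proper boundary tiles) and all horizontal tile constraints, while for $1 \leq i \leq n$ each $L_i$ enforces the vertical constraint restricted to pairs of adjacent-row cells whose $i$-th column bit agrees. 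Each $L_i$ admits a polynomial-size regular expression, since with bit markers visible these verifications become local, so $\|\gamma\|$ is polynomial in $|T|$. By construction, $w_\sigma \in \bigcap_{i=0}^n L_i$ for every valid tiling $\sigma$.

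For the $\Omega(m)$ lower bound, for each valid $\sigma$ I consider the expansion $\lambda_\sigma$ in which each of the $n+1$ atoms of $\gamma$ is expanded into a path labelled $w_\sigma$. Because $\gamma$ has no inverse atoms, the edges of $\lambda_\sigma$ all point forward, and the $n+1$ paths share only the endpoints $x,y$; consequently every walk from $x$ to $y$ in $\lambda_\sigma$ must traverse exactly one of these paths in full. Hence any expansion $\lambda'$ of $\gamma$ admitting a homomorphism $\lambda' \to \lambda_\sigma$ must label each of its $n+1$ paths by $w_\sigma$, which forces $\|\lambda'\| \geq (n+1)|w_\sigma| = \|\lambda_\sigma\|$. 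Thus $\lambda_\sigma$ is minimal, and since the encoding is injective, distinct tilings give rise to distinct minimal expansions.

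For the $O((g(|T|)+m)^{n+1})$ upper bound, I observe that any minimal expansion is determined up to isomorphism by a tuple $\bar w = (w_0,\ldots,w_n) \in L_0 \times \cdots \times L_n$; moreover, the argument above shows that minimality is equivalent to the condition that, for every $i$, $w_i$ is the shortest element of $\{w_0,\ldots,w_n\} \cap L_i$. I then bound the universe $\mathcal{W}$ of words that can occur as some entry of a minimal tuple: $\mathcal{W}$ consists of the at most $m$ valid-tiling encodings, plus a collection of at most $g(|T|)$ ``short'' candidate words, where $g$ is a doubly exponential function. The latter bound comes from a pumping argument on the NFAs of the $L_i$'s: a non-encoding word $w \in \mathcal{W}$ cannot be pumped down within any $L_j$ to which it belongs while remaining in the other $L_i$'s under which it serves in its tuple, and a standard state-space argument then limits the length of such $w$ to be at most exponential in $|T|$, giving $g(|T|) = |\A|^{2^{O(|T|)}}$. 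The total count is therefore at most $|\mathcal{W}|^{n+1} = O((g(|T|)+m)^{n+1})$.

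The main obstacle is the universe bound on $\mathcal{W}$. The subtle point is that, under the minimality condition, shortening of an entry $w_i$ is only permitted by substituting it with another entry $w_j$ of the very same tuple lying in $L_i$; to rule out unbounded families of minimal tuples mixing encoding and non-encoding entries, one must exploit the fact that the $L_i$'s share a common structural backbone (the row/column scaffolding), so that pumping obstructions in one language translate into obstructions across the others, keeping the non-encoding candidate set of size at most $g(|T|)$.
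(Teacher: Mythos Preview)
Your characterization of minimality is incomplete, and this undermines the upper bound. Since $\gamma$ is \emph{Boolean}, a homomorphism $\lambda'\to\lambda$ may send the variables $x',y'$ of $\lambda'$ to \emph{any} two vertices $p,q$ of $\lambda$, not only to $x,y$. When $p,q$ are not both endpoints, every $x'\!\to\! y'$ walk in $\lambda$ has a single label, namely a factor of one of the $w_j$'s; hence $\lambda'=(u,\dots,u)$ maps into $\lambda$ whenever some $w_j$ has a factor $u\in\bigcap_i L_i$. So minimality of $(w_0,\dots,w_n)$ is \emph{not} equivalent to ``each $w_i$ is the shortest element of $\{w_0,\dots,w_n\}\cap L_i$''; one must also forbid short common factors. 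Your lower-bound argument likewise silently assumes $h(x')=x$ and $h(y')=y$.

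More importantly, your pumping argument does not interact with minimality at all: pumping a word $w_i$ inside $L_i$ produces a \emph{different} word, not a factor of $w_i$, and therefore says nothing about the existence of a smaller expansion with a homomorphism into $\lambda$. The paper's construction works by a different mechanism. There each language is a \emph{union} $L_i=E\cup F_C\cup F_H\cup G_i$, where $E$ is the candidate-solution language and $F_C,F_H,G_0,\dots,G_n$ are short \emph{error-detection} languages designed so that any $w\in E$ that does not encode a solution contains a factor $u$ of length $O(n2^n)$ lying in $\bigcap_i L_i$. Then $(u,\dots,u)$ homomorphically maps into any expansion containing $w$ as a path, which bounds the length of non-solution paths in minimal expansions. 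In your construction $\bigcap_i L_i$ is exactly the set of solution encodings, so there are no short common ``witnesses of failure'', and nothing prevents minimal tuples such as $(w_0,v,\dots,v)$ with $w_0\in L_0$ a long well-formed non-solution and $v$ a short word in $\bigcap_{i\geq 1}L_i\setminus L_0$. (Your side remark that the instance carries a row count $r$ in unary is also off: the $2^n$-tiling problem used here has an unbounded number of rows, which is what allows $m$ to be infinite and drives the \expspace lower bound.)
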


As a corollary, this yields an \expspace lower bound for the boundedness problem (part (1) of Theorem~\ref{theo:main}), as well as a triple-exponential lower bound for the size of the UCQ equivalent to any bounded CRPQ (part (3) of Theorem~\ref{theo:main}), since one can produce $2^n$-tiling problems having triply-exponentially many solutions.

\section{Better-behaved Classes of UC2RPQs}
	\label{sec:nice-c2rpq}

Here we present two restrictions of UC2RPQs that exhibit a better behavior 
 in terms of  the complexity of {\sc Boundedness} than the general case, namely, {\em acyclic UC2RPQs of bounded thickness} and {\em strongly connected UCRPQs}. 
 The improved bounds are \pspace and $\Pi_2^P$, respectively, which turn out to be optimal.

\smallskip
\noindent
\textbf{\em Acyclic UC2RPQs of Bounded Thickness.}
For any two distinct variables $x,y$ of a C2RPQ $\gamma$, we denote by $\text{Atoms}_\gamma(x,y)$ the set 
of atoms in $\gamma$ of the form $x \xrightarrow{L} y$ or $y \xrightarrow{L} x$. 
The \defstyle{thickness} of a C2RPQ $\gamma$ is the maximum cardinality of a set of the form $\text{Atoms}_\gamma(x,y)$, for $x,y$ variables of $\gamma$ with $x \neq y$. 
The thickness of a UC2RPQ $\Gamma$ is the maximum thickness over all the C2RPQs in $\Gamma$. 
The \emph{underlying undirected graph} of $\gamma$ has as vertex set the set of variables of $\gamma$ and contains an edge 
$\{x,y\}$ iff $x\neq y$ and $\text{Atoms}_\gamma(x,y) \neq \emptyset$.
A C2RPQ $\gamma$ is \emph{acyclic} if its underlying undirected graph is an acyclic graph (\ie, a forest). 
A UC2RPQ $\Gamma$ is acyclic if each C2RPQ in $\Gamma$ is. 

We show next that {\sc Boundedness} for acyclic UC2RPQs of bounded thickness is \pspace-complete. These classes of UC2RPQs have 
been previously studied in the literature~\cite{BRV14,BRV16}. 
In particular, it follows from~\cite[Theorem 4.2]{BRV16} that the containment problem for the acyclic UC2RPQs of bounded thickness is \pspace-complete, and hence 
Theorem~\ref{theo:strongly-acyclic} below 
shows that {\sc Boundedness} is not more costly than containment for these classes.  

\begin{theorem}
\label{theo:strongly-acyclic}
Fix $k\geq 1$. The problem {\sc Boundedness} is \pspace-complete for acyclic UC2RPQs of thickness at most $k$. 
\end{theorem}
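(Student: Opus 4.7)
My plan is to handle the lower and upper bounds separately. The lower bound will follow immediately from Corollary~\ref{coro:comp-rpq}, item~2, since CRPQs of the form $\exists y\,(x \xrightarrow{L} y)$ with $x \neq y$ are already acyclic of thickness~$1$, and hence instances of acyclic UC2RPQs of thickness at most $k$ for every fixed $k \geq 1$.

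For the upper bound, I will adapt the construction of Proposition~\ref{prop:bounded-2da} by using alternation to avoid the exponentially-sized variable-annotation alphabet used there. Concretely, I will build in polynomial time an A2DA$^\epsilon$ $\+A_\Gamma$ of polynomial size such that $\Gamma$ is bounded iff $\+A_\Gamma$ is limited; the desired \pspace bound will then follow from Theorem~\ref{thm:a2da-limitedness-pspace}. The automaton $\+A_\Gamma$ will run on encodings $w_\lambda \in \A_1^*$ of expansions $\lambda$ of $\Gamma$ (with validity checked by a polynomial-size factor NFA), existentially guess a disjunct $\gamma \in \Gamma$, and verify the existence of a homomorphism $h$ from some expansion $\lambda'$ of $\gamma$ into $\lambda$, aiming to minimize $\max_A \|w_A\|$ over the atoms $A$ of $\gamma$. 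Since $|\gamma| \leq \|\Gamma\|$, this maximum is within a factor $|\gamma|$ of $\|\lambda'\| = \sum_A \|w_A\|$, so by Proposition~\ref{prop:basic}, item~2, $\Gamma$ is bounded iff $\+A_\Gamma$ is limited.

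The core of the construction will exploit the acyclicity of $\gamma$ to verify $h$ by tree-shaped alternation. The automaton will existentially guess images in $w_\lambda$ for the roots of the underlying forest of $\gamma$, and then perform a depth-first traversal: at each variable $v$ of $\gamma$, with the head at the position $p$ representing $h(v)$, it will universally branch on the children $u$ of $v$; for each child $u$, it will existentially pick an ``anchor'' atom $A^* \in \text{Atoms}_\gamma(v,u)$ and trace an oriented path for $A^*$ over $w_\lambda$ at cost $\|w_{A^*}\|$, using two-way moves together with the jump transitions of Proposition~\ref{prop:bounded-2da} to handle $=^*_\lambda$-equivalent positions, landing at a position $p'$ representing $h(u)$; it will then universally branch into the at most $k-1$ remaining atoms of $\text{Atoms}_\gamma(v,u)$ (each verified by a two-way traversal between $p$ and $p'$) together with a recursion branch at $u$ from $p'$. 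The state space stays polynomial, indexed by the disjunct, the current variable of $\gamma$, and a few auxiliary bits for atom-verification, while the alphabet $\A_1$ is polynomial by construction.

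The main obstacle will be synchronizing the position $p'$ across the remaining atom-verifications and the recursion at $u$: once the anchor atom provides $p'$ via an existential choice, the universal branches must all operate from $p'$ without the automaton being able to ``store'' $p$ or $p'$ in its finite state. The intended workaround is to commit to $p'$ as the head position immediately before the universal split, exploiting that bounded thickness caps the number of atoms requiring synchronization per edge by a constant, and that acyclicity ensures each variable of $\gamma$ is visited along a unique path of the alternation tree; the two-way verifications of the remaining atoms can then anchor themselves at the current head position and traverse back to $p$, paying cost equal to the length of each chosen path. Together with Theorem~\ref{thm:a2da-limitedness-pspace}, this will close the \pspace upper bound.
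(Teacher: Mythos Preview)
Your lower bound is fine and coincides with the paper's. Your high-level plan for the upper bound---build, in polynomial time, a polynomial-size A2DA$^\epsilon$ whose limitedness coincides with boundedness of $\Gamma$, and then invoke Theorem~\ref{thm:a2da-limitedness-pspace}---is also the paper's strategy. The problem is the concrete construction.

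The gap is precisely the synchronization issue you flag, and your workaround does not close it. After you follow the anchor atom from $p$ (representing $h(v)$) to $p'$ (representing $h(u)$) and perform the universal split at $p'$, the branches that are supposed to verify the remaining atoms of $\text{Atoms}_\gamma(v,u)$ have \emph{no way to recognise} $p$, or any position representing the same variable of $\lambda$ as $p$. ``Traverse back to $p$'' is not an operation the automaton can perform: $p$ is an arbitrary position of $w_\lambda$, carrying no distinguishing mark, and the automaton's finite state cannot store it. Consequently, each of these branches can only certify an oriented path from $h(u)$ to \emph{some} variable of $\lambda$, not to $h(v)$; different branches may land on different variables, and then there is no single homomorphism $\lambda' \to \lambda$. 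Neither acyclicity of $\gamma$ nor the constant bound on the number of parallel atoms helps here: the difficulty is not the \emph{number} of things to synchronise but the fact that, once separated by a universal branch, they share no positional information at all.

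The paper resolves this by \emph{not} separating the parallel atoms. For an edge $\{x,y\}$ of the underlying tree of $\gamma$ with atoms $x \xrightarrow{L_1} y, \dotsc, x \xrightarrow{L_\ell} y$ (where $\ell \leq k$), the automaton keeps a single state component that is a tuple $(q_1,\dotsc,q_\ell)$ of current states of $\+A_{L_1},\dotsc,\+A_{L_\ell}$ (a ``cut''), and advances all $\ell$ automata in lockstep along one main sweep from the head position for $h(x)$ to the head position for $h(y)$. At each intermediate position any single $\+A_{L_i}$ may additionally perform an asynchronous two-way loop returning to that same position (handled by spawning a zig-zag subtree, as in the 2DA$\to$ADA reduction). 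Because the sweep is shared, all $\ell$ atoms necessarily agree on the endpoint $h(y)$; because $\ell \leq k$ is fixed, the set of cuts has size at most $\|\Gamma\|^k$, which is polynomial. This is exactly where bounded thickness enters the argument, and it is the idea your proposal is missing. The cost bookkeeping (the $[\cdot]_c$ subscripts) then gives a two-sided bound between the run cost and $\|\lambda_\rho\|$, as in the reduction~(3) of Section~\ref{sec:distance}, and Proposition~\ref{prop:basic} closes the equivalence with boundedness.
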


\begin{proof}[Proof (sketch)]
The lower bound follows directly from \pspace-hardness of {\sc Boundedness} for RPQs (see Corollary~\ref{coro:comp-rpq}). 
For the \pspace upper bound, we follow a similar strategy as in the case of arbitrary UC2RPQs (Section~\ref{sec:upper-bound}), \ie, 
we reduce boundedness of $\Gamma$ to DA limitedness. 
The main difference is that, since $\Gamma$ is acyclic, we can exploit the power of alternation and construct an A2DA$^\epsilon$ $\+B$ (instead of a 2DA, as in the proof 
of Proposition \ref{prop:bounded-2da}), 
such that $\Gamma$ is bounded iff $\+B$ is limited. The constant upper bound on the thickness of $\Gamma$ implies that $\+B$ is actually of polynomial size. 
The result follows then as limitedness of an A2DA$^\epsilon$ can be decided in \pspace in virtue of Theorem~\ref{thm:a2da-limitedness-pspace}. 
\end{proof} 

Both conditions in Theorem~\ref{theo:strongly-acyclic}, \ie, acyclicity and bounded thickness, are necessary. 
Indeed, it follows from Lemma~\ref{lem:tiling-encoding} that {\sc Boundedness} is \expspace-hard even for:
\begin{itemize} 
\item Boolean acyclic CRPQs. 
\item Boolean CRPQs of thickness one, whose underlying undirected graph is of {\em treewidth} two. Recall that the treewidth is a measure of how 
much a graph resembles a tree (cf., \cite{diestel}) ---acyclic graphs are precisely the graphs of treewidth one.  
\end{itemize} 
Indeed, the CRPQs of the form $\exists x,y \bigwedge_{i} (x \xrightarrow{L_i} y)$ used in Lemma ~\ref{lem:tiling-encoding} are Boolean and acyclic (but have unbounded thickness).  Replacing each $(x \xrightarrow{L_i} y)$ with $(x \xrightarrow{\epsilon} z_i) \land (z_i \xrightarrow{L_i} y)$, yields an equivalent CRPQ of thickness one whose 
underlying undirected graph has treewidth two. 

\smallskip
\noindent
\textbf{\em Strongly Connected UCRPQs.}
We conclude this section with an even better behaved class of CRPQs in terms of {\sc Boundedness}. Unlike the previous case, the definition of this 
class depends on the underlying {\em directed} graph of a CRPQ $\gamma$. This contains a directed edge from variable $x$ to $y$ iff 
there is an atom in $\gamma$ of the form $x \xrightarrow{L} y$. 
A CRPQ $\gamma$ is \defstyle{strongly connected} if its underlying directed graph is strongly connected, \ie, every pair of variables is connected by some directed path. 
A UCRPQ $\Gamma$ is strongly connected if every CRPQ in $\Gamma$ is. We can then establish the following. 

\begin{theorem}
\label{theo:strongly-connected}
 {\sc Boundedness} is $\Pi_2^P$-complete for strongly connected UCRPQs.
\end{theorem}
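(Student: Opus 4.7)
The plan is to prove both directions, with the upper bound $\Pi_2^P$ via a structural lemma tailored to strong connectivity, and the lower bound $\Pi_2^P$-hardness via a reduction from $\forall\exists$-3SAT.

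For the upper bound, I would start from the characterization in Proposition \ref{prop:basic}: $\Gamma$ is bounded iff there exists a uniform $k$ such that every expansion $\lambda$ of $\Gamma$ admits a covering expansion $\lambda'$ of $\Gamma$ with $\|\lambda'\| \leq k$ and $\lambda' \to \lambda$. The crucial additional step is a structural lemma for strongly connected UCRPQs: if a strongly connected $\Gamma$ is bounded, then the bound $k$ can be taken \emph{polynomial} in $\|\Gamma\|$, and moreover, it suffices to check coverage only for expansions $\lambda$ of polynomial size. The intuition is a pumping argument: in a strongly connected CRPQ, any expansion contains directed cycles realized by the query's cycle structure; if some minimal expansion were of super-polynomial size, then by iterating one of these cycles (choosing new words in the relevant $L_i$'s to lengthen the cycle while respecting the languages) one would produce an infinite family of arbitrarily large minimal expansions, contradicting boundedness via item (3) of Proposition \ref{prop:basic}.

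Given this polynomial bound, the $\Pi_2^P$ algorithm is: guess universally a polynomial-size expansion $\lambda$ of $\Gamma$ (encoded as a choice of C2RPQ $\gamma \in \Gamma$ together with polynomially-bounded words for each atom, checked using the NFAs for $L_i$), and then guess existentially a polynomial-size expansion $\lambda'$ together with a homomorphism $\lambda' \to \lambda$; the latter check is polynomial-time verifiable. Acceptance (boundedness) corresponds to: for every such $\lambda$, there exists such $\lambda'$. The polynomial-size restriction on both quantified objects yields a $\Pi_2^P$ procedure, and the analogous lemma ensures completeness.

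For the $\Pi_2^P$-hardness lower bound, I would reduce from the validity of $\forall X \exists Y\, \phi(X,Y)$ where $\phi$ is a 3-CNF formula. The construction builds a strongly connected UCRPQ $\Gamma_\phi$ over an alphabet encoding partial truth assignments, such that expansions of $\Gamma_\phi$ encode assignments to $X$, and the existence of a covering expansion encodes the existence of an assignment to $Y$ satisfying $\phi(X,Y)$. Strong connectivity is maintained by pairing every ``forward'' atom carrying the encoding with a ``backward'' atom closing the cycle, possibly over a simple auxiliary language; boundedness of $\Gamma_\phi$ then holds iff every $X$-assignment extends to a satisfying $(X,Y)$, matching the $\forall\exists$ structure.

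The main obstacle is establishing the polynomial-bound lemma for strongly connected UCRPQs. This requires a careful combinatorial analysis showing how strong connectivity in the underlying directed graph of each $\gamma \in \Gamma$ forces any ``large'' minimal expansion to contain a pumpable cycle based on one of polynomially many small cycle schemes in $\gamma$, so that unboundedness always manifests at polynomial scale. Achieving this without blowing up past polynomial size seems to require exploiting both the strong connectivity of $\gamma$ and the absence of inverses (so that cycles in the expansion come from directed cycles in $\gamma$).
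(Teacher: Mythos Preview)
Your upper-bound strategy has a real gap. The crucial step is the claim that ``it suffices to check coverage only for expansions $\lambda$ of polynomial size.'' As literally written, your $\Pi_2^P$ algorithm is vacuous: if both $\lambda$ and $\lambda'$ range over polynomial-size expansions with the same bound, then $\lambda' = \lambda$ always witnesses the existential part and the algorithm accepts every input. Even under the charitable reading that $\lambda'$ is bounded by a smaller polynomial $k_0$ and $\lambda$ by a larger polynomial $p > k_0$, you must still prove that if $\Gamma$ is unbounded then some $\lambda$ of size at most $p$ has no cover of size at most $k_0$. Your pumping sketch does not establish this: iterating a cycle in a minimal expansion produces \emph{larger} expansions, not small witnesses of unboundedness, and there is no reason the pumped expansions remain minimal.

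The paper takes a different and more direct route. It defines $\Gamma^{<\infty}$ by deleting from each disjunct every atom whose language is infinite (keeping only the finite-language atoms and the free variables), and proves that a strongly connected UCRPQ $\Gamma$ is bounded iff it is $\epsilon$-trivial or $\Gamma^{<\infty} \subseteq \Gamma$. The key combinatorial fact is a cycle argument: in an expansion of a strongly connected $\gamma$ where each infinite-language atom has been expanded to a word of length greater than $k$, every directed cycle passing through one of those long paths has length greater than $k$; hence any $\lambda'$ of size at most $k$ mapping into it must land entirely in the finite-language part. This reduces boundedness to a containment instance whose left-hand side has only finite languages, so its expansions are automatically polynomial-size and the check is immediately $\Pi_2^P$. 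Your pumping intuition is loosely related to this cycle argument, but you have not isolated the right invariant (short directed cycles in $\lambda'$ versus only long cycles through the infinite-language part of the target), and without the $\Gamma^{<\infty}$ reformulation there is no clean polynomial-size object to quantify over universally.

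For the lower bound, reducing from $\forall\exists$-3SAT is a reasonable alternative to the paper's reduction (from a $\forall$-coloring/$\exists$-monochromatic-clique problem), but your construction is too vague to assess; in particular you need to explain concretely how strong connectivity is maintained while ensuring that covering expansions correspond exactly to satisfying $Y$-assignments.
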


\section{Discussion and Future Work}
	\label{sec:discussion}

%\diego{Extender a ``Regular queries''}

%\smallskip

%\paragraph*{Acklowledgements}
%\textbf{\em Acklowledgements.} 

The main conclusion of our work is that techniques previously used in the study 
of containment of UC2RPQs can be naturally leveraged to pinpoint the complexity of {\sc Boundedness} 
by using DA instead of NFA. This, however, requires extending results on limitedness to alternating 
and two-way DA. For all the classes of UC2RPQs studied in the paper we show in fact that 
the complexity of {\sc Boundedness} coincides with that of the containment problem. We leave open what is the exact 
size of UCQ rewritings for the classes of acyclic UC2RPQs of bounded thickness and the strongly connected UCRPQs that are bounded.  

The most natural next step is to study {\sc Boundedness} for the class of {\em regular queries} (RQs), which are 
the closure of UC2RPQs under binary transitive closure. 
RQs are one of the most powerful recursive languages for which containment is decidable in elementary time. 
In fact, containment of 
RQs has been proved to be {\sc 2EXPSPACE}-complete 
by applying sophisticated techniques based on NFA \cite{RRV17}. 
We will study if it is possible to settle the complexity of 
{\sc Boundedness} for RQs with the help of DA techniques.

%% Bibliography
\bibliography{long,biblio}

\begin{thebibliography}{10}

\bibitem{AABHRV17}
Renzo Angles, Marcelo Arenas, Pablo Barcel{\'{o}}, Aidan Hogan, Juan~L.
  Reutter, and Domagoj Vrgo{\v c}.
\newblock Foundations of modern query languages for graph databases.
\newblock {\em {ACM} Comput. Surv.}, 50(5):68:1--68:40, 2017.

\bibitem{B13}
Pablo Barcel{\'{o}}.
\newblock Querying graph databases.
\newblock In {\em ACM Symposium on Principles of Database Systems (PODS)},
  pages 175--188, 2013.

\bibitem{BBLP18}
Pablo Barcel{\'{o}}, Gerald Berger, Carsten Lutz, and Andreas Pieris.
\newblock First-order rewritability of frontier-guarded ontology-mediated
  queries.
\newblock In {\em International Joint Conference on Artificial Intelligence
  (IJCAI)}, pages 1707--1713, 2018.

\bibitem{BRV14}
Pablo Barcel{\'{o}}, Miguel Romero, and Moshe~Y. Vardi.
\newblock Does query evaluation tractability help query containment?
\newblock In {\em ACM Symposium on Principles of Database Systems (PODS)},
  pages 188--199, 2014.

\bibitem{BRV16}
Pablo Barcel{\'{o}}, Miguel Romero, and Moshe~Y. Vardi.
\newblock Semantic acyclicity on graph databases.
\newblock {\em SIAM J. Comput.}, 45(4):1339--1376, 2016.

\bibitem{BBB16}
Michael Benedikt, Pierre Bourhis, and Michael {Vanden Boom}.
\newblock A step up in expressiveness of decidable fixpoint logics.
\newblock In {\em Annual IEEE Symposium on Logic in Computer Science (LICS)},
  pages 817--826, 2016.

\bibitem{BenediktCCB15}
Michael Benedikt, Balder ten Cate, Thomas Colcombet, and Michael {Vanden Boom}.
\newblock The complexity of boundedness for guarded logics.
\newblock In {\em Annual IEEE Symposium on Logic in Computer Science (LICS)},
  pages 293--304. IEEE Computer Society Press, 2015.
\newblock \href {http://dx.doi.org/10.1109/LICS.2015.36}
  {\path{doi:10.1109/LICS.2015.36}}.

\bibitem{BHLW16}
Meghyn Bienvenu, Peter Hansen, Carsten Lutz, and Frank Wolter.
\newblock First order-rewritability and containment of conjunctive queries in
  horn description logics.
\newblock In {\em International Joint Conference on Artificial Intelligence
  (IJCAI)}, pages 965--971, 2016.

\bibitem{birget1993state}
Jean-Camille Birget.
\newblock State-complexity of finite-state devices, state compressibility and
  incompressibility.
\newblock {\em Mathematical systems theory}, 26(3):237--269, 1993.

\bibitem{BlumensathCKPB14}
Achim Blumensath, Thomas Colcombet, Denis Kuperberg, Pawel Parys, and Michael
  {Vanden Boom}.
\newblock Two-way cost automata and cost logics over infinite trees.
\newblock In {\em Joint Meeting of the Twenty-Third {EACSL} Annual Conference
  on Computer Science Logic {(CSL)} and the Twenty-Ninth Annual {ACM/IEEE}
  Symposium on Logic in Computer Science (LICS), {CSL-LICS} '14}, pages
  16:1--16:9. ACM Press, 2014.
\newblock \href {http://dx.doi.org/10.1145/2603088.2603104}
  {\path{doi:10.1145/2603088.2603104}}.

\bibitem{BOW14}
Achim Blumensath, Martin Otto, and Mark Weyer.
\newblock Decidability results for the boundedness problem.
\newblock {\em Logical Methods in Computer Science (LMCS)}, 10(3), 2014.

\bibitem{BojanczykT09}
Miko{\l}aj Boja{\'{n}}czyk and Szymon Toru{\'{n}}czyk.
\newblock Deterministic automata and extensions of weak {MSO}.
\newblock In {\em IARCS Annual Conference on Foundations of Software Technology
  and Theoretical Computer Science (FST\&TCS)}, volume~4 of {\em Leibniz
  International Proceedings in Informatics (LIPIcs)}, pages 73--84. Schloss
  Dagstuhl -- Leibniz-Zentrum f{\"u}r Informatik, 2009.
\newblock \href {http://dx.doi.org/10.4230/LIPIcs.FSTTCS.2009.2308}
  {\path{doi:10.4230/LIPIcs.FSTTCS.2009.2308}}.

\bibitem{CGLV00}
Diego Calvanese, Giuseppe {De Giacomo}, Maurizio Lenzerini, and Moshe~Y. Vardi.
\newblock Containment of conjunctive regular path queries with inverse.
\newblock In {\em Principles of Knowledge Representation and Reasoning (KR)},
  pages 176--185, 2000.

\bibitem{CGLV02}
Diego Calvanese, Giuseppe {De Giacomo}, Maurizio Lenzerini, and Moshe~Y. Vardi.
\newblock Rewriting of regular expressions and regular path queries.
\newblock {\em Journal of Computer and System Sciences (JCSS)}, 64(3):443--465,
  2002.

\bibitem{CM77}
Ashok~K. Chandra and Philip~M. Merlin.
\newblock Optimal implementation of conjunctive queries in relational data
  bases.
\newblock In {\em Symposium on Theory of Computing (STOC)}, pages 77--90, 1977.

\bibitem{Colcombet09}
Thomas Colcombet.
\newblock The theory of stabilisation monoids and regular cost functions.
\newblock In {\em International Colloquium on Automata, Languages and
  Programming (ICALP)}, volume 5556 of {\em Lecture Notes in Computer Science},
  pages 139--150. Springer, 2009.
\newblock \href {http://dx.doi.org/10.1007/978-3-642-02930-1\_12}
  {\path{doi:10.1007/978-3-642-02930-1\_12}}.

\bibitem{CL08}
Thomas Colcombet and Christof L{\"{o}}ding.
\newblock The nesting-depth of disjunctive {\(\mathrm{\mu}\)}-calculus for tree
  languages and the limitedness problem.
\newblock In {\em EACSL Annual Conference on Computer Science Logic (CSL)},
  pages 416--430, 2008.

\bibitem{CGKV88}
Stavros~S. Cosmadakis, Haim Gaifman, Paris~C. Kanellakis, and Moshe~Y. Vardi.
\newblock Decidable optimization problems for database logic programs
  (preliminary report).
\newblock In {\em Symposium on Theory of Computing (STOC)}, pages 477--490,
  1988.

\bibitem{diestel}
Reinhard Diestel.
\newblock {\em Graph Theory, 4th Edition}, volume 173 of {\em Graduate texts in
  mathematics}.
\newblock Springer, 2012.

\bibitem{droste2009handbook}
Manfred Droste, Werner Kuich, and Heiko Vogler.
\newblock {\em Handbook of weighted automata}.
\newblock Springer Science \& Business Media, 2009.

\bibitem{GMSV93}
Haim Gaifman, Harry~G. Mairson, Yehoshua Sagiv, and Moshe~Y. Vardi.
\newblock Undecidable optimization problems for database logic programs.
\newblock {\em J. {ACM}}, 40(3):683--713, 1993.

\bibitem{LSW15}
Peter Hansen, Carsten Lutz, Inan{\c{c}} Seylan, and Frank Wolter.
\newblock Efficient query rewriting in the description logic {EL} and beyond.
\newblock In {\em IJCAI}, pages 3034--3040, 2015.

\bibitem{hashiguchi1982limitedness}
Kosaburo Hashiguchi.
\newblock Limitedness theorem on finite automata with distance functions.
\newblock {\em Journal of Computer and System Sciences (JCSS)}, 24(2):233--244,
  1982.

\bibitem{Var}
Gerd~G. Hillebrand, Paris~C. Kanellakis, Harry~G. Mairson, and Moshe~Y. Vardi.
\newblock Tools for datalog boundedness.
\newblock In {\em ACM Symposium on Principles of Database Systems (PODS)},
  pages 1--12, 1991.

\bibitem{Kirsten05}
Daniel Kirsten.
\newblock Distance desert automata and the star height problem.
\newblock {\em {ITA}}, 39(3):455--509, 2005.

\bibitem{Kozen77}
Dexter Kozen.
\newblock Lower bounds for natural proof systems.
\newblock In {\em Annual Symposium on Foundations of Computer Science (FOCS)},
  pages 254--266, 1977.

\bibitem{KuperbergB11}
Denis Kuperberg and Michael {Vanden Boom}.
\newblock Quasi-weak cost automata: {A} new variant of weakness.
\newblock In {\em IARCS Annual Conference on Foundations of Software Technology
  and Theoretical Computer Science (FST\&TCS)}, volume~13 of {\em Leibniz
  International Proceedings in Informatics (LIPIcs)}, pages 66--77. Schloss
  Dagstuhl -- Leibniz-Zentrum f{\"u}r Informatik, 2011.
\newblock \href {http://dx.doi.org/10.4230/LIPIcs.FSTTCS.2011.66}
  {\path{doi:10.4230/LIPIcs.FSTTCS.2011.66}}.

\bibitem{Leung91}
Hing Leung.
\newblock Limitedness theorem on finite automata with distance functions: An
  algebraic proof.
\newblock {\em Theoretical Computer Science}, 81(1):137--145, 1991.
\newblock \href {http://dx.doi.org/10.1016/0304-3975(91)90321-R}
  {\path{doi:10.1016/0304-3975(91)90321-R}}.

\bibitem{LeungP04}
Hing Leung and Viktor Podolskiy.
\newblock The limitedness problem on distance automata: Hashiguchi's method
  revisited.
\newblock {\em Theoretical Computer Science}, 310(1-3):147--158, 2004.
\newblock \href {http://dx.doi.org/10.1016/S0304-3975(03)00377-3}
  {\path{doi:10.1016/S0304-3975(03)00377-3}}.

\bibitem{Nau}
Jeffrey~F. Naughton.
\newblock Data independent recursion in deductive databases.
\newblock {\em J. Comput. Syst. Sci.}, 38(2):259--289, 1989.

\bibitem{PitermanV03}
Nir Piterman and Moshe~Y. Vardi.
\newblock From bidirectionality to alternation.
\newblock {\em Theoretical Computer Science}, 295:295--321, 2003.
\newblock \href {http://dx.doi.org/10.1016/S0304-3975(02)00410-3}
  {\path{doi:10.1016/S0304-3975(02)00410-3}}.

\bibitem{RRV17}
Juan~L. Reutter, Miguel Romero, and Moshe~Y. Vardi.
\newblock Regular queries on graph databases.
\newblock {\em Theoretical Computer Science}, 61(1):31--83, 2017.

\bibitem{shepherdson1959reduction}
John~C. Shepherdson.
\newblock The reduction of two-way automata to one-way automata.
\newblock {\em IBM Journal of Research and Development}, 3(2):198--200, 1959.

\bibitem{VB11}
Michael {Vanden Boom}.
\newblock Weak cost monadic logic over infinite trees.
\newblock In {\em International Symposium on Mathematical Foundations of
  Computer Science (MFCS)}, pages 580--591, 2011.

\bibitem{VB12}
Michael {Vanden Boom}.
\newblock {\em Weak cost automata over infinite trees}.
\newblock PhD thesis, University of Oxford, {UK}, 2012.

\end{thebibliography}

%% Appendix
\newpage
\appendix

% \begin{theorem}
%The problem of, given an NFA accepting the language $L$, checking whether $L_{\text{pf}}$ is finite is \pspace-complete. 
%The same holds if we replace $L_{\text{pf}}$ by $L_{\text{sf}}$ or $L_{\text{ff}}$. 
%\end{theorem}

\section{Appendix to Section \ref{sec:char}}

\begin{proof}[Proof of Lemma \ref{lemma:equiv-minimal}]
The result is a straightforward consequence of the following 
known result.

\begin{lemma} \label{lemma:cont-crpq} {\em \cite{CGLV00}} Let $\Gamma,\Gamma'$ be UC2RPQs. It is the case that 
$\Gamma \subseteq \Gamma'$ iff for each expansion $\lambda$ of $\Gamma$ there exists an expansion $\lambda'$ of 
$\Gamma'$ such that 
$\lambda \subseteq \lambda'$, or, equivalently, 
$\lambda' \to \lambda$.  
\end{lemma}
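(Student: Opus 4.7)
The plan is to prove the two directions separately, treating the equivalence $\lambda \subseteq \lambda' \Leftrightarrow \lambda' \to \lambda$ as the Chandra–Merlin theorem applied to CQs (adapted to handle the equality atoms arising from $\epsilon$-labeled paths, exactly as in the definition of homomorphism given in Section~\ref{sec:char}).

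For the $(\Leftarrow)$ direction, I would take any graph database $G$ and any tuple $\bar a \in \Gamma(G)$, and exhibit a witness that $\bar a \in \Gamma'(G)$. Unfolding the semantics of UC2RPQs, $\bar a \in \Gamma(G)$ means there is a disjunct $\gamma$ in $\Gamma$ and a mapping $h : \+V_\gamma \to V(G)$ with $h(\bar x) = \bar a$ such that for every atom $x \xrightarrow{L} y$ of $\gamma$ there is an oriented path in $G$ from $h(x)$ to $h(y)$ whose label lies in $L$. Choosing one such witnessing word $w_{xy}$ per atom defines an expansion $\lambda$ of $\gamma$ (hence of $\Gamma$), together with a canonical homomorphism $g : \lambda \to G$ extending $h$ by sending the fresh intermediate variables of each $\pi_{xy}$ to the corresponding vertices of the chosen path in $G$. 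By hypothesis there exists an expansion $\lambda'$ of $\Gamma'$ with $\lambda' \to \lambda$; composing this homomorphism with $g$ yields a homomorphism $\lambda' \to G$ preserving $\bar x$, which is exactly a witness for $\bar a \in \Gamma'(G)$.

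For the $(\Rightarrow)$ direction, the idea is to treat each expansion $\lambda$ of $\Gamma$ as a ``canonical database'' for $\Gamma$. Formally, $\lambda$ can be viewed as a graph database $G_\lambda$ (after quotienting by the equivalence $=_\lambda^*$, or by extending the definitions to handle equality atoms directly). The identity mapping on the free variables of the disjunct producing $\lambda$ witnesses $\bar x \in \Gamma(G_\lambda)$, because every atom $x \xrightarrow{L} y$ of the disjunct has by construction an oriented path $\pi_{xy}$ in $\lambda$ whose label lies in $L$. By the assumption $\Gamma \subseteq \Gamma'$, we get $\bar x \in \Gamma'(G_\lambda)$, meaning there is a disjunct $\gamma'$ of $\Gamma'$ and a mapping $h : \+V_{\gamma'} \to V(G_\lambda)$ with $h(\bar x) = \bar x$ that witnesses satisfaction of every atom $x' \xrightarrow{L'} y'$ of $\gamma'$ by some oriented path in $G_\lambda$ labeled by a word in $L'$. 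From these witnessing oriented paths one reads off the words $w_{x'y'} \in L'$ that define an expansion $\lambda'$ of $\gamma'$ (and hence of $\Gamma'$), and the mapping $h$ extended by sending the intermediate variables of each $\pi_{x'y'}^{\lambda'}$ to the corresponding intermediate vertices of its witnessing oriented path in $\lambda$ provides the required homomorphism $\lambda' \to \lambda$.

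The bookkeeping around equality atoms is the only subtle point: intermediate variables of oriented paths in $\lambda'$ must be mapped to $=_\lambda^*$-equivalence classes of variables of $\lambda$, and an oriented path $\pi_{xy}$ in $\gamma'$ may traverse a position in $\lambda$ coming from an $\epsilon$-expansion. This is handled exactly by conditions (i) and (iii) of the definition of homomorphism in Section~\ref{sec:char}, which permit collapsing variables along $=_\lambda^*$. The main obstacle is therefore not conceptual but notational: to make sure the oriented-path witness extracted in the $(\Rightarrow)$ direction genuinely yields an expansion $\lambda'$ (i.e.\ with pairwise disjoint fresh intermediate variables) and a homomorphism compatible with $=_\lambda^*$. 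Since this is essentially the argument of~\cite{CGLV00}, I would state the equivalence $\lambda \subseteq \lambda' \Leftrightarrow \lambda' \to \lambda$ once, by appealing to Chandra–Merlin for CQs with equality, and then carry out both implications above on that basis.
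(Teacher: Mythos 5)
Your argument is correct, but note that the paper itself gives no proof of this lemma: it is imported verbatim from~\cite{CGLV00} as a known result, so there is no in-paper proof to compare against. What you have written is the standard canonical-database argument that \cite{CGLV00} uses --- the $(\Leftarrow)$ direction by composing $\lambda' \to \lambda$ with the homomorphism from $\lambda$ into $G$ induced by the witnessing oriented paths, and the $(\Rightarrow)$ direction by evaluating $\Gamma'$ on the expansion $\lambda$ viewed as a graph database (quotiented by $=_\lambda^*$) and reading off an expansion $\lambda'$ of $\Gamma'$ together with a homomorphism $\lambda' \to \lambda$ --- with the equivalence $\lambda \subseteq \lambda' \Leftrightarrow \lambda' \to \lambda$ supplied by Chandra--Merlin~\cite{CM77} exactly as the paper invokes it in Section~\ref{sec:char}. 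The bookkeeping you flag around equality atoms and $=_\lambda^*$-classes is indeed the only delicate point, and conditions (i)--(iii) of the paper's homomorphism definition are designed precisely to absorb it, so the proof goes through.
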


\end{proof} 

\begin{proof}[Proof of Lemma \ref{prop:basic}]
For (1)$\Rightarrow$(2), suppose that $\Gamma$ is equivalent to a UCQ $\Phi = \bigvee_{1 \leq i \leq n} \phi_i$. 
By Lemma \ref{lemma:cont-crpq}, for every $1\leq i \leq n$, there is an expansion $\lambda_i$ of $\Gamma$ with $\lambda_i\to \phi_i$. 
We claim that (2) holds for $k=\max\{\|\lambda_i\|: 1\leq i \leq n\}$. 
Let $\lambda$ be an expansion of $\Gamma$. By Lemma \ref{lemma:cont-crpq}, we have $\phi_i\to \lambda$, 
for some $1\leq i \leq n$, and then $\lambda_i\to \lambda$. Since $\|\lambda_i\|\leq k$, we are done. 
For the implication (2)$\Rightarrow$(3), note that if (2) holds for some $k \geq 1$, then the size of any minimal expansion of $\Gamma$ is at most $k$. 
Finally, (3)$\Rightarrow$(1) follows directly from Lemma \ref{lemma:equiv-minimal}. 
\end{proof}

\section{Appendix to Section~\ref{sec:sec-rpq-bound}} 

\begin{proof}[Proof of Proposition \ref{prop:basic-rpq} ]
Note that for an RPQ or an existentially quantified RPQ $\gamma$ whose input regular language is $L$, 
there is a bijection from $L$ to the expansions of $\gamma$. 
For an RPQ $L$ every expansion is minimal. 
In the case of a CRPQ $\gamma(x)=\exists y (x \xrightarrow{L} y)$ \resp{$\gamma(y)=\exists x (x \xrightarrow{L} y)$}, where $x\neq y$, there is a bijection from $L_{\text{pf}}$ \resp{$L_{\text{sf}}$} to the minimal expansions of $\gamma$. 
Finally, for a Boolean CRPQ $\gamma=\exists x,y ( x \xrightarrow{L} y)$, with $x\neq y$,
we have a bijection from $L_{\text{ff}}$ to the minimal expansions. 
Then, the proposition follows directly from Proposition~\ref{prop:basic}, item (3). 
\end{proof}

 \begin{proof}[Proof of Theorem \ref{theo:finite-lang}]
We focus on the lower bounds. 
We reduce from the following well-known \pspace-complete problem: given a non-deterministic Turing machine $M$ and a natural number $n$ (given in unary) check whether $M$ accepts the empty tape using $n$ space. 
 As usual, we encode configurations of $M$ as words of length $n$ over the alphabet $\mathbb{P}:=\Sigma\cup (\Sigma\times S)$, where $\Sigma$ and $S$ are the tape alphabet and state set of $M$ respectively. 
A run of $M$ is then encoded by a word of the form $\# c_1\cent c_2 \cent \dotsb \cent c_\ell \#$, where each $c_i$ is an encoding of a configuration, $\cent$ is used as a separator of configurations, and $\#$ to delimit the beginning and end of the run. 
We can assume without loss of generality that either $M$ accepts the empty tape using $n$ space and all non-deterministic branches in the computation of $M$ accept before $|\mathbb{P}|^n$ steps;
or any non-deterministic branch of $M$ does not halt at all. 

Given $M$ and $n$ as above, we can define an NFA of polynomial size that accepts the language $R:=c_{\text{init}}\cdot (\cent\cdot C)^*$ over $\mathbb{P}$, 
where $c_{\text{init}}$ encodes the (unique) initial configuration of $M$ on the empty tape and $C$ accepts all the words of length $n$ over $\mathbb{P}$ that encode a configuration of $M$. 
We let $T$ be the finite language that contains all the words of the form $c\cent c'$, where $c$ and $c'$ encode configurations of $M$ and $c'$ \emph{cannot} be reached from $c$ in one step. 
Note that $T$ can be accepted by a polynomial-sized NFA. 
We claim that $M$ accepts the empty tape iff $L_{\text{pf}}$ is finite for $L:=\# R \# \cup \#(R\cent+\varepsilon) T$. 

Assume first that $M$ accepts the empty tape. Then every $w\in L_{\text{pf}}$ satisfies $|w|\leq 2+(n+1)(|\mathbb{P}|^n+1)$. 
Indeed, by contradiction, suppose that this is not the case for some $w\in L_{\text{pf}}$. 
If $w\in \# R \#$, then $w$ cannot encode a run of $M$ starting from the empty tape (as every such a run takes less than $|\mathbb{P}|^n$ steps). 
Then there exists $v\in \#(R\cent+\varepsilon) T$ with $|v|\leq 1+(n+1)(|\mathbb{P}|^n+1)$ that is a prefix of $w$. In particular, $|v|<|w|$ and then $w\not\in L_{\text{pf}}$; 
a contradiction. Similarly, if $w\in \#(R\cent+\varepsilon) T$, then $w\#$ cannot encode a run of $M$ starting from the empty tape, 
and hence $w\not\in L_{\text{pf}}$. 
Now suppose that $M$ does not accept the empty tape. Then there are infinitely many words $w\in \# R \#$ 
encoding a run of $M$ starting from the empty tape, which in particular belong to $L_{\text{pf}}$. 
Hence, $L_{\text{pf}}$ is infinite. 

Note that the same construction applies for the factor case, \ie, $M$ accepts the empty tape iff $L_{\text{ff}}$ is finite for $L=\# R \# \cup \#(R\cent+\varepsilon) T$ 
(a simpler construction that still applies is $L=\# R \# \cup T$). 
Finally, we can reduce the prefix to the suffix case. Given an NFA accepting $L$, we construct an NFA accepting $L^R=\{w^R: w\in L\}$, where $w^R$ is the \emph{reverse} of $w$. 
Hence, $L_{\text{pf}}$ is finite iff $L^{R}_{\text{sf}}$ is finite. 
 \end{proof}

\section{Appendix to Section~\ref{sec:distance}}

\begin{proof}[Proof of Proposition~\ref{prop:2da-da}]
We adapt the standard ``\textsl{crossing sequence}'' construction for translating two-way NFA to one-way NFA \cite{shepherdson1959reduction}. For simplicity, and without any loss of generality, we assume that the input 2DA $\+A$ is so that all accepting runs end with the head at the leftmost position, and that there is only one final state.

Given a 2DA $\+A = (\A, Q, \emptyset, q_0, \set{q_f}, \delta)$ consider the DA $\+B = (\A, Q', \emptyset, (q_0, q_f), F', \delta' )$ where  $Q'$ is the set of subsets of $Q \times Q$, and 
$F'$ is the set of subsets of $\set{(q,q) : q \in Q}$. The idea is that whenever a state contains a pair $(q,p)$, it verifies that there is a loop at the current position in the run, starting in $q$ and ending in $p$ and visiting only positions which are to the right. Formally, there is a transition $(S, a, c, S') \in \delta'$ if for every pair $(q,p) \in S$ with $q\neq p$ there are  $(r_1, r'_1), \dotsc, (r_n, r'_n) \in S'$ so that
\begin{itemize}
	\item there is a transition $(q,a,c',r_1) \in \delta$ and $(r'_n,a^{-1},c',p) \in \delta$;
	\item for every $r'_i$ with $i<n$ there are transitions $(r'_i,a^{-1},c_i,s_i), (s_i,a,c_{i+1},r_{i+1}) \in \delta$ for some $s_i \in Q$ and $c_i,c_{i+1} \in \set{0,1}$;
\end{itemize}
and the maximum of the costs of the considered transitions is $c$. The following figure exemplifies the relation between $(p,q)$ and the $(r_i,r'_i)$'s as seen in a run.
\begin{center}
\includegraphics[width=.17\textwidth]{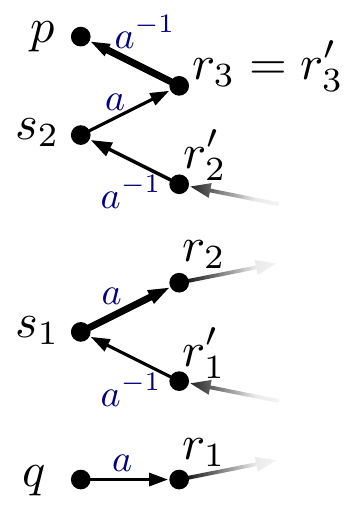}	
\end{center}
In this example, the loop $(q,p)$ is witnessed, at the next position after reading an $a$, as the existence of two loops $(r_1,r'_1)$ and $(r_2,r'_2)$ (and the trivial loop $(r_3,r'_3)$).

From this construction it follows that $\+B$ accepts the same language as $\+A$, and that $\cost_{\+B}(w) \leq \cost_{\+A}(w)$ for every $w \in \A^*$. Further, note that without any loss of generality we can consider only states having no to distinct pairs with the same state, and thus states with at most $|Q|$ pairs. From this, it follows that $\cost_{\+A}(w) \leq \cost_{\+B}(w) \cdot 2 \cdot |Q|^2$. In the picture above, we are simulating 2 costly transitions (depicted with thick strokes) with only one costly transitions for the pair $(q,p)$, and in general we could see $2|Q|$ transitions for each pair $(q,p)$ in the state, hence $2|Q| \cdot |Q|$ costly transitions could be simulated at once. In fact, a finer analysis can show that $\cost_{\+A}(w) \leq \cost_{\+B}(w) \cdot 2 \cdot |Q|$.
\end{proof}

\paragraph*{Proof of Theorem~\ref{thm:a2da-limitedness-pspace}}

Here we give the detailed proof of Theorem~\ref{thm:a2da-limitedness-pspace}, namely, that the limitedness problem for A2DA is in \pspace.

%\sidediego{repeat intro from body}

\paragraph*{(1) From A2DA$^\epsilon$ to A2DA}
We show a language-preserving polynomial reduction from A2DA$^\epsilon$ to A2DA, obtained by replacing $\epsilon$-transitions with a sequence of two transitions reading $a$ and $a^{-1}$ for some $a \in \A$. Formally, given a A2DA$^\epsilon$ $\+A = (\A, Q_\forall, Q_\exists, q_0, F, \delta)$, we produce a A2DA $\+B = (\A, Q'_\forall, Q'_\exists, q_0, F', \delta')$ where $Q'_\forall$ contains $Q_\forall$ \resp{$Q'_\exists$ contains $Q_\exists$}, plus fresh states $r_{q,p}$ and $r'_{q,p}$ for each $q \in Q_\forall$ \resp{$q \in Q_\exists$} and $p \in Q_\forall \cup Q_\exists$.  The transition relation $\delta'$ is obtained from $\delta$ by replacing each transition $(q,\epsilon,e,c,p)$ with all (polynomially many) transitions $(q,a,e',c,r_{q,p})$, $(q,a^{-1},e,0,r_{q,p})$, $(q,a^{-1},e',c,r'_{q,p})$, $(r'_{q,p},a,e,0,p)$ for every $a \in \A$ and $e' \in \set{0,1}$, where $r_{q,p}$ and $r'_{q,p}$ are fresh states of the same type as $q$ (\ie, $r_{q,p} \in Q'_\forall$ iff $q \in Q_\forall$ and likewise for $r'_{q,p}$). In other words, either we simulate the $\epsilon$-transition by reading $a \cdot a^{-1}$ for some $a \in \A$ through $r_{q,p}$, or we simulate it by reading $a^{-1}\cdot a$ through $r'_{q,p}$. Note that if we would now define $F'=F$ we may not accept the empty word because we need words of length at least 1 to simulate $\epsilon$-transitions. In order to fix this, we define $F' = F \cup \set{q_0}$ if some state of $F$ can be reached from $q_0$ through a sequence of $\epsilon$-transitions with $\yend$ flag, or we define $F' = F$ otherwise.

The above reduction, although it does preserve the language, it does not preserve the cost of words: while the cost of the empty word can only be $0$ for any A2DA, for an A2DA$^\epsilon$ automaton it can be any arbitrary $n \in \N$. However, since it is a faithful simulation on non-empty words, for all $w \in \A^+$ we have $\cost_{\+A}(w) = \cost_{\+B}(w)$, and thus the reduction preserves the limitedness property.

\paragraph*{(2) From A2DA to 2DA}

Given a A2DA $\+A = (\A, Q_\forall, Q_\exists, q_0, F, \delta)$, we build a 2DA $\+B$ over a larger alphabet $\B$, where we trade alternation for extra alphabet letters. The alphabet $\B$ consists of triples $(f^\rightarrow,a,f^\leftarrow)$ where $a \in \A$ and $f^\rightarrow,f^\leftarrow : Q_\forall \to \delta$. The idea is that $f^\rightarrow,f^\leftarrow$ are ``choice functions'' for the alternation: whenever we are to the left \resp{right}  of a position of the word labelled $(f^\rightarrow,a,f^\leftarrow)$ and we are in state $q \in Q_\forall$, instead of exploring all transitions departing from $q$ and taking the maximum cost over all such runs (this is what alternation does), we chose to just take transition $f^\rightarrow(q)$ \resp{$f^\leftarrow(q)$}. Note that $\B$ is exponential in the number of states but not in the size of $\A$.

One can then build a 2DA $\+B$ having the same set of states as $\+A$ but with a transition function which is essentially deterministic on the states of $Q_\forall$, as it follows the choice function given by the alphabet letters. In the end we obtain that
\newcommand{\extrategy}[1]{\widetilde{#1}}%
\begin{itemize}
	\item for every $w \in \B^*$, $\cost_{\+B}(w) \leq \cost_{\+A}(w_\A)$; and
	\item for every $w\in \A^*$ there is $\extrategy w\in \B^*$ so that $\extrategy w_\A=w$ and $\cost_{\+A}(w) = \cost_{\+B}(\extrategy w)$,\footnote{This can be alternatively seen as the existence of a positional strategy for the universal player for obtaining the cost $\cost_{\+A}(w)$ when the automata is seen as a two-player game.}
\end{itemize}
where $w_\A$ is the projection of $w$ onto $\A$. This shows that the limitedness problem is preserved.

For simplicity we assume that any state of $Q_\forall$ determines whether the head of the automaton moves rightwards or leftwards ---it is easy to see that this is without loss of generality. Formally, we have that $Q_\forall$ is partitioned into two sets $Q_\forall = Q_\forall^\rightarrow \dcup Q_\forall^\leftarrow$ so that there is no transition $(r,a,c,p) \in \delta^\rightarrow$ with $r \in Q_\forall^\leftarrow$, and no transition $(r,a,c,p) \in \delta^\leftarrow$ with $r \in Q_\forall^\rightarrow$, where $\delta^\rightarrow$ \resp{$\delta^\leftarrow$} is the set of all transitions from $\delta$ reading a letter from $\A$ \resp{from $\A^{-1}$}.

More concretely, consider the alphabet $\B = \set{f^\rightarrow : Q^\rightarrow_\forall \to \delta} \times \A \times \set{f^\leftarrow : Q^\leftarrow_\forall \to \delta}$, and notice that
\begin{align}
	|\B| &\leq |\A| \cdot |\delta|^{2\cdot|Q_\forall|}.\label{eq:alphabet-blowup-redux}\tag{$\star$}
\end{align}

We now define a 2DA $\+B$ so that $\+A$ (over $\A$) is limited if, and only if, $\+B$ (over $\B$) is limited. $\+B$ has the same set of states as $\+A$ but its transition function is essentially deterministic on the states of $Q_\forall$.

Concretely, let $\+B$ be the 2DA defined as $(\B, \emptyset,Q_\forall \cup Q_\exists, q_0, F, \delta')$, where $\delta'$ is the union of
\begin{itemize}
	\item $\set{(r,(f^\rightarrow,a,f^\leftarrow),e,c,p) : r \in Q_\exists \land (r,a,e,c,p) \in \delta \land (f^\rightarrow,a,f^\leftarrow) \in \B }$, that is, all transitions from $\delta$ starting from an existential state moving rightwards;
	\item $\set{(r,(f^\rightarrow,a,f^\leftarrow),e,c,p) : r \in Q^\rightarrow_\forall \land f^\rightarrow(r) = (r,a,e,c,p) \land (f^\rightarrow,a,f^\leftarrow) \in \B}$, that is, for any universal state, the transition defined by $f^\rightarrow$ moving rightwards if it can be applied;
	\item $\set{(r,(f^\rightarrow,a,f^\leftarrow),e,c,p) : r \in Q^\rightarrow_\forall \land f^\rightarrow(r) = (r',a',e',c',p') \land (r,a,e) \neq (r',a',e') \land (r,a,e,c,p) \in \delta \land (f^\rightarrow,a,f^\leftarrow) \in \B}$, that is, if for a universal state $r$, $f^\rightarrow(r)$ gives an inconsistent transition, disregard $f^\rightarrow$ and take any (consistent) transition from $\delta$;
\end{itemize}
and similar sets for the case of the transitions moving leftwards:
\begin{itemize}
	\item $\set{(r,(f^\rightarrow,a,f^\leftarrow)^{-1},e,c,p) : r \in Q_\exists \land (r,a^{-1},e,c,p) \in \delta \land (a,f^\rightarrow,f^\leftarrow)^{-1} \in \B^{-1}}$;
	\item $\set{(r,(f^\rightarrow,a,f^\leftarrow)^{-1},e,c,p) : r \in Q^\leftarrow_\forall \land f^\leftarrow(r) = (r,a^{-1},e,c,p) \land (f^\rightarrow,a,f^\leftarrow)^{-1} \in \B^{-1}}$;
	\item $\set{(r,(f^\rightarrow,a,f^\leftarrow)^{-1},e,c,p) : r \in Q^\leftarrow_\forall \land f^\leftarrow(r) = (r',a',e',c',p') \land (r,a^{-1},e) \neq (r',a',e') \land (r,a^{-1},e,c,p) \in \delta \land (f^\rightarrow,a,f^\leftarrow)^{-1} \in \B^{-1}}$.
\end{itemize}
For any word $w \in \B^*$ we denote by $w_\A \in \A^*$ its projection onto $\A$.

\begin{lemma}\label{lem:a2da-2da:easylemma}
For every $w \in \B^*$, $\cost_{\+B}(w) \leq \cost_{\+A}(w_\A)$.
\end{lemma}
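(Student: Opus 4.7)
The plan is to take a minimum-cost accepting tree run $t$ of $\+A$ on $w_\A$ — assuming $w_\A \in L(\+A)$; otherwise $\cost_{\+A}(w_\A)$ is treated as $\infty$ and the inequality is vacuous — and extract from $t$ a single branch that forms an accepting linear run of $\+B$ on $w$ whose cost is at most $\cost(t)$. Since $\cost(t)$ is defined as the maximum cost over the branches of $t$, any such extraction yields the desired $\cost_{\+B}(w) \leq \cost(t) = \cost_{\+A}(w_\A)$. The subtlety lies in navigating the universal choices in $t$ in a way compatible with the transition relation $\delta'$ of $\+B$, and this is exactly the role of the choice functions $f^\rightarrow, f^\leftarrow$ encoded in the letters of $w$.

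We would proceed inductively from the root $(q_0, 0)$ of $t$. At a node labelled $(r, i)$ with $r \in Q_\exists$, the tree has at most one child, which is followed directly; the first clause of $\delta'$ ensures that the corresponding transition is valid in $\+B$ when reading $w[i+1]$ (or $w[i]$ in the leftward case). At a node labelled $(r, i)$ with $r \in Q_\forall^\rightarrow$, the children of this node in $t$ correspond to all applicable transitions $(r, w_\A[i+1], e, c, p) \in \delta$ (with $e$ determined by whether $i+1$ is an end position); the choice among them is guided by the letter $w[i+1] = (f^\rightarrow, w_\A[i+1], f^\leftarrow)$. If $f^\rightarrow(r)$ has the form $(r, w_\A[i+1], e, c, p) \in \delta$, then by the second clause of $\delta'$ this is the only transition $\+B$ may take, and we select the child of $t$ along it. Otherwise $f^\rightarrow(r)$ is inconsistent with the current state, letter, or end flag, and by the third clause $\+B$ can take any applicable transition of $\+A$, so we pick an arbitrary child of $t$. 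The case $r \in Q_\forall^\leftarrow$ is handled symmetrically, using $f^\leftarrow$ from the letter $w[i]$ read when moving leftwards across position $i$.

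By construction, each transition chosen lies in $\delta'$, and the procedure terminates at a leaf of $t$, whose label is a final state by acceptance of $t$, so the result is a valid accepting run $\rho$ of $\+B$ on $w$. Its cost equals the sum of costs along the extracted branch of $t$, and is therefore bounded by $\cost(t)$, yielding the lemma. The main obstacle will be to verify rigorously that the three clauses defining $\delta'$ between them cover every choice made along the extraction; the fallback clause is precisely what is needed when the universal choice function is inconsistent with the current context. A minor edge case to mention is that a universal node of $t$ with no applicable $\+A$-transitions is necessarily a leaf (and thus final), so the extraction terminates there without any further choice.
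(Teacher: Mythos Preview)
Your proposal is correct and follows essentially the same route as the paper: take an accepting run tree $t$ of $\+A$ on $w_\A$, and use the choice functions $f^\rightarrow,f^\leftarrow$ embedded in the letters of $w$ to select one branch of $t$, yielding an accepting run of $\+B$ on $w$ whose cost is at most $\cost(t)$. One small correction: under the paper's convention the cost of a non-accepted word is $0$, not $\infty$, so when $w_\A \notin L(\+A)$ you should argue (as the paper does) that $w \notin L(\+B)$ either---this is immediate since every $\+B$-transition projects to an $\+A$-transition---rather than declare the inequality vacuous.
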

\begin{proof}
Let $w$ be an arbitrary word over $\B$, $w=(f^\rightarrow_{1},a_1,f^\leftarrow_{1}) \dotsb (f^\rightarrow_{n},a_n,f^\leftarrow_{n}) \in \B^*$. If there is no accepting run of $\+A$ on $w_\A$, then there is no accepting run of $\+B$ on $w$. \sidediego{explain}
Otherwise, suppose there is an accepting run $t$ of $\+A$ on $w_\A = a_1 \dotsb a_n$ of cost $N$. We show that the functions $f^\rightarrow_{i}$'s and $f^\leftarrow_{i}$'s allow us to select a branch of $t$ whose labelling $c_1 \dotsb c_m \in \delta^*$ can be extended to an accepting run $c'_1 \dotsb c'_m \in \delta'^*$ of $\+B$ on $w$ of cost $\leq N$. In this way, it follows that $\cost_{\+B}(w) \leq N \leq \cost_{\+A}(w_\A)$.

Let us see how to obtain such a branch. Let us fix any homomorphism $h: t \to G_{\+A,w_\A}$, given by the fact that $t$ is an accepting run.
% Let us call universal nodes (at state $r$) to those nodes whose every children is labelled with some transition starting from a universal state $r \in Q_\forall$, and existential nodes (at state $r$) to the remaining nodes.\sidediego{improve `at state'} We can assume that every existential node of $t$ has at most one child and that every universal node has only transitions starting from universal states as children\sidediego{it's better directly to improve the definition...}.
%We say that a node $x$ of $t$ is at state $q$ and at position $i$ if $h(x)=(q,i)$.
%at depth $\ell$ is at position $i$ if the branch from the root to the node reads letters $b_1 \dotsb b_\ell \in (\A \cup \A^{-1})^*$ so that $|b_1 \dotsb b_\ell|=i$. 
Consider the following traversal of $t$, starting at the root. Whenever we are at a node $x$ with $h(x) \in Q_\exists \times \set{0, \dotsc, |w|}$, we go to the only child (unless it is the leaf, in which case the traversal ends); and whenever we are at a node $x$ with $h(x) = (r,i) \in Q_\forall^\rightarrow \times \set{0, \dotsc, |w|}$ \resp{$(r,i) \in Q_\forall^\leftarrow \times \set{0, \dotsc, |w|}$}, we go to the child obtained when taking the edge labelled $f^\rightarrow_{i}(r)$ \resp{$f^\leftarrow_{i}(r)$} if there is one, or to any child otherwise.
Consider now the labeling $c_1 \dotsb c_m \in \delta^*$ corresponding to the branch of $t$ just described, and let $b_1 \dotsb b_m \in (\A \cup \A^{-1})^*$ be the letters read by the transitions. For each $i \in \set{1, \dotsc, m}$ with $b_i \in \A$ \resp{$b_i = a^{-1}$ with $a \in \A$}, we define $c'_i \in \delta'$ as the result of replacing $b_i$ with $(f^\rightarrow_{\ell},b_i,f^\leftarrow_{\ell})$ \resp{replacing $a^{-1}$ with $(f^\rightarrow_{\ell},a,f^\leftarrow_{\ell})^{-1}$} in $c_i$, for $\ell = |b_1 \dotsb b_i|$ . It follows that $c'_1 \dotsb c'_m$ is an accepting run of $\+B$ on $w$ with cost the same cost as $c_1 \dotsb c_m$, which must be at most $N$.
\end{proof}

\begin{lemma}\label{lem:a2da-2da:strategy}
For every $w\in \A^*$ there is $\extrategy w\in \B^*$ so that $\extrategy w_\A=w$ and $\cost_{\+A}(w) = \cost_{\+B}(\extrategy w)$.
\end{lemma}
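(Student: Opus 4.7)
The plan is to construct $\extrategy{w}$ by encoding, in the $f^\rightarrow_i, f^\leftarrow_i$ components of each alphabet letter, an optimal positional strategy for the universal player in the min--max game implicit in the definition of $\cost_{\+A}(w)$. Fix $w \in \A^*$ and, for this proof, write $V(q,i) := \cost_{\+A}(w,q,i)$, adopting the convention that $V(q,i) = \infty$ when no accepting run of $\+A$ on $w$ from $(q,i)$ exists. Unfolding the definition of accepting tree yields the recurrences
\begin{align*}
V(q,i) &= 0 \text{ whenever } q \in F; \\
V(q,i) &= \min\{c + V(q',i') \mid (q,a,e,c,q') \in \delta \text{ applicable at position } i\} \text{ for } q \in Q_\exists \setminus F; \\
V(q,i) &= \max\{c + V(q',i') \mid (q,a,e,c,q') \in \delta \text{ applicable at position } i\} \text{ for } q \in Q_\forall.
\end{align*}
Define a positional strategy $\sigma^*$ by letting $\sigma^*(r,i)$ be an applicable transition from $r$ at position $i$ attaining the max above, choosing one with a $V = \infty$ successor when $V(r,i) = \infty$. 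Set $f^\rightarrow_{i+1}(r) := \sigma^*(r,i)$ for $r \in Q_\forall^\rightarrow$ and $f^\leftarrow_{i}(r) := \sigma^*(r,i)$ for $r \in Q_\forall^\leftarrow$ (other slots filled arbitrarily), and put $\extrategy{w} := (f^\rightarrow_1, w[1], f^\leftarrow_1) \dotsb (f^\rightarrow_n, w[n], f^\leftarrow_n)$, so $\extrategy{w}_\A = w$ by construction.

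The inequality $\cost_{\+B}(\extrategy{w}) \leq \cost_{\+A}(w)$ comes for free from Lemma~\ref{lem:a2da-2da:easylemma}, so I only need to prove $\cost_{\+B}(\extrategy{w}) \geq V(q_0,0)$. Take any accepting run $\rho$ of $\+B$ on $\extrategy{w}$, traversing the state-position pairs $(p_0,i_0) = (q_0,0), (p_1,i_1), \dotsc, (p_m,i_m)$ with $p_m \in F$ and using transitions of costs $c_0, \dotsc, c_{m-1}$. Inspecting the construction of $\delta'$: if $p_k \in Q_\exists$ then the transition taken lies in $\delta$, so the min recurrence gives $V(p_k,i_k) \leq c_k + V(p_{k+1},i_{k+1})$; if $p_k \in Q_\forall$ then the $f^\rightarrow$- or $f^\leftarrow$-component of the relevant letter forces the transition to be $\sigma^*(p_k,i_k)$, and by choice of $\sigma^*$ the same inequality (in fact equality) holds. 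Telescoping and using $V(p_m,i_m) = 0$ yields $V(q_0,0) \leq \sum_k c_k = \cost(\rho)$, and the claim follows.

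The only subtle point is ruling out intermediate states $(p_k,i_k)$ along $\rho$ with $V(p_k,i_k) = \infty$, which would render $V(p_k,i_k) \leq c_k + V(p_{k+1},i_{k+1})$ vacuous and spoil the telescoping. A short induction shows that once $V = \infty$ is encountered, it persists along $\rho$: for existential $p_k$, the min being $\infty$ forces every applicable transition to land in a $V = \infty$ state; for universal $p_k$, $\sigma^*$ was rigged to pick a $V = \infty$ successor. Since $V(p_m,i_m) = 0$, all $V(p_k,i_k)$ along $\rho$ are finite and the argument goes through. Conceptually, the crux is a positional-determinacy statement for the universal player in a min--max game on the finite graph of state-position pairs; this would be nontrivial in general two-way settings where runs can cycle, but it applies cleanly here because accepting runs of $\+A$ are finite trees and $V$ is the value of a reachability-style game. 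The edge case $\cost_{\+A}(w) = 0$ arising from an unaccepted $w$ is handled by Lemma~\ref{lem:a2da-2da:easylemma} and the paper's convention on non-accepted words.
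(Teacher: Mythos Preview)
Your proof is correct and follows the same approach as the paper: you build $\extrategy w$ by encoding, at each position, an optimal positional choice for the universal player (your $\sigma^*$ is exactly the paper's $f_{\+A,w}$), and your indexing $f^\rightarrow_{i+1}(r)=\sigma^*(r,i)$, $f^\leftarrow_i(r)=\sigma^*(r,i)$ matches the paper's definition of $\extrategy u[i]$ verbatim. The paper's own proof stops at ``by definition of $f_{\+A,u}$ and $\+B$, it follows that $\cost_{\+A}(w)=\cost_{\+B}(\extrategy w)$''; your telescoping argument along an arbitrary accepting run of $\+B$, together with the observation that $\sigma^*$ forces the consistent branch of $\delta'$ at universal states, is precisely the unpacking of that sentence, and your treatment of the $V=\infty$ case is a clean way to rule out spurious accepting runs.

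One small point worth making explicit: your inequality $\cost_{\+B}(\extrategy w)\geq V(q_0,0)$ is argued only over accepting runs of $\+B$, so you implicitly use that $\+B$ accepts $\extrategy w$ whenever $\+A$ accepts $w$ (otherwise the convention $\cost_{\+B}(\extrategy w)=0$ could undercut the bound when $V(q_0,0)>0$). This language-preservation fact is established inside the proof of Lemma~\ref{lem:a2da-2da:easylemma} (an accepting tree of $\+A$ yields a branch that is an accepting run of $\+B$), so it is available, but it is not the statement of that lemma; a one-line remark to this effect would close the argument cleanly.
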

\begin{proof}
Given a word $w \in \A^*$, we define a function $f_{\+A,w} : \set{0, \dotsc, |w|} \to Q_\forall \to \delta$ that maximizes the cost for $w$. 
That is, $f_{\+A,w}(i)(q)$ is the transition that should be followed whenever we are in state $q$ at position $i$ in order to obtain $\cost_{\+A}(w,q,i)$ ---\ie, to maximize the cost. 
(This function can be regarded as a positional strategy for the universal player obtaining the maximum cost in the two-player game associated with the A2DA automaton.) 
Formally, for any $q \in Q_\forall$, we define $f_{\+A,w}(i)(q)$ as any transition $t \in \delta$ so that $t \in \arg \max g_{i,q}$ for $	g_{i,q} : \delta \to \N \cup \set{\infty,-1}$ defined as
\begin{align*}
	g_{i,q}(r,a,c,p)&=
\begin{cases}
	c + \cost_{\+A}(w,p,i+1) & \text{ if } i<|w|, a = w[i+1] \in \A, r=q;\\
    c + \cost_{\+A}(w,p,i-1) & \text{ if } i>0, a = w[i]^{-1} \in \A^{-1}, r=q;\\
	-1                       & \text{ otherwise.}
\end{cases}
\end{align*}

% Formally, for any $q \in Q_\forall^\rightarrow$ \resp{$q \in Q_\forall^\leftarrow$}, we define $f_{\+A,w}(i)(q)$ as any transition $t \in \delta$ so that $t \in \arg \max g^\rightarrow_{i,q}$ \resp{$t \in \arg \max g^\leftarrow_{i,q}$}, for
% \begin{align*}
% 	g^\rightarrow_{i,q} &: \delta^{\rightarrow} \to \N \cup \set{\infty,-1}\\
% g^\rightarrow_{i,q}(r,a,c,p) &=
% \begin{cases}
% 	c + \cost_{\+A}(w,p,i+1) &\text{ if } i<|w|, a = w[i+1], r=q;\\
% 	-1                     &\text{ otherwise; and}
% \end{cases}\\
% 	g^\leftarrow_{i,q} &: \delta^{\leftarrow} \to \N \cup \set{\infty,-1}\\
% g^\leftarrow_{i,q}(r,a,c,p) &=
% \begin{cases}
%     c + \cost_{\+A}(w,p,i-1) &\text{ if } i>0, a = w[i]^{-1}, r=q;\\
% 	-1                     &\text{ otherwise.}
% \end{cases}
% \end{align*}
For any  word $u \in \A^*$, let $\extrategy u \in \B^*$ be so that $|\extrategy u|=|u|$ and  for each $1 \leq i \leq |u|$ we define $\extrategy u[i] = (f^\rightarrow, u[i], f^\leftarrow)$, \sidediego{check notation $w[i]$} where $f^\rightarrow$ \resp{$f^\leftarrow$} is the restriction of $f_{\+A,u}(i-1)$ on the subdomain $Q^\rightarrow_\forall$ \resp{of $f_{\+A,u}(i)$ on $Q^\leftarrow_\forall$}. By definition of $f_{\+A,u}$ and $\+B$, it follows that $\cost_{\+A}(w) = \cost_{\+B}(\extrategy w)$.	
\end{proof}

\begin{lemma}\label{lem:}
$\+A$ is limited if, and only if, $\+B$ is limited.
\end{lemma}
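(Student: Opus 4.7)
The plan is to derive this equivalence directly from the two preceding lemmas, which together establish a tight two-sided relationship between the cost functions of $\+A$ and $\+B$. The structure of the argument is symmetric, with each direction handled by one of the lemmas.

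For the forward direction, I would assume $\+A$ is limited, so there exists some $N \in \N$ with $\cost_{\+A}(v) \leq N$ for every $v \in \A^*$. Given any $w \in \B^*$, Lemma~\ref{lem:a2da-2da:easylemma} gives $\cost_{\+B}(w) \leq \cost_{\+A}(w_\A) \leq N$, so $N$ is also a uniform bound for $\+B$ and hence $\+B$ is limited.

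For the converse, I would assume $\+B$ is limited by some $N \in \N$. Given any $w \in \A^*$, Lemma~\ref{lem:a2da-2da:strategy} provides some $\widetilde w \in \B^*$ with $\widetilde w_\A = w$ and $\cost_{\+A}(w) = \cost_{\+B}(\widetilde w) \leq N$. Hence $N$ uniformly bounds $\cost_{\+A}$ and $\+A$ is limited as well.

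Since the two lemmas have already done all the real work, there is essentially no obstacle here; the entire statement is a short bookkeeping step combining the two inequalities, which is why the proof fits in a few lines.
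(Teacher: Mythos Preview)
Your proposal is correct and follows essentially the same approach as the paper: both directions are derived immediately from Lemma~\ref{lem:a2da-2da:easylemma} and Lemma~\ref{lem:a2da-2da:strategy}, exactly as you describe.
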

\begin{proof}
If $\+B$ is limited, there is some $N \in \N$ so that $\cost_{\+B}(w) \leq N$ for every $w \in \B^*$. By Lemma~\ref{lem:a2da-2da:strategy}, for every $w \in \A^*$ there is some $\extrategy w \in \B^*$ so that $\cost_{\+A}(w) = \cost_{\+B}(\extrategy w) \leq N$ and thus $\+A$ is limited by $N$.

If $\+A$ is limited, then there is some $N \in \N$ so that $\cost_{\+A}(w) \leq N$ for every $w \in \A^*$. Then, for every word $u \in \B^*$ we have $\cost_{\+B}(u) \leq \cost_{\+A}(u_\A) \leq N$ by Lemma~\ref{lem:a2da-2da:easylemma}.
\end{proof}

\paragraph*{(3) From 2DA to ADA${}^\epsilon$}
We show a polynomial-time translation from 2DA to ADA extended with $\epsilon$-transitions, which preserves limitedness. In the case of finite automata, there exist language-preserving reductions from 2-way NFA to alternating 1-way NFA with a quadratic blowup \cite{birget1993state,PitermanV03}. However, these translations, when applied blindly to reduce from 2DA to ADA, do not preserve the cost semantics nor limitedness of languages. On the other hand, \cite{BlumensathCKPB14} shows an involved construction that results in a reduction from 2DA to ADA on \textsl{infinite trees} (a more general and challenging setup), which preserves limitedness but it is not polynomial in the number of states. Here we give a self-contained translation from 2DA to ADA which serves our purpose: it preserves limitedness and it is polynomial time computable.

The translation is close to the language-preserving reduction from 2NFA to alternating 1NFA of \cite{PitermanV03}, upgraded to take into account the cost of different alternation branches, somewhat in the same spirit as the \textsl{history summaries} from \cite{BlumensathCKPB14}.
The reduction exploits the structure of the runs of 2-way finite automata, which can be described as ``a tree  of zig-zags'', borrowing the wording of \cite{PitermanV03}. That is, every  run of a two-way finite automaton on $w$ can be seen as a tree of height at most $|w|$ whose nodes are labelled by $Q \cup Q^2$, in such a way that for each letter $a \in \A$, each pair of consecutive transitions reading $a, a^{-1}$ induce a leaf and each pair of consecutive transitions reading $a^{-1}, a$ induce a branching (\cf~Figure~\ref{fig:2way-as-tree}). The idea is then to explore the tree top-down by spawning new threads at every branching and using only a statespace of $Q \cup Q^2$. Let us call a \defstyle{zig-zag tree} to any such tree resulting from an accepting run of a 2DA.
\begin{figure}[t]
	\includegraphics[width=\textwidth]{}
	\caption{A run of 2DA seen as a tree of zig-zags. Tree nodes are depicted as circles, and there is an (implicit) edge from any circle to its right neighbor(s). Thick strokes represent costly transitions.}\label{fig:2way-as-tree}
\end{figure}

Concretely, given a 2DA $\+A = (\A, Q, \emptyset, q_0, F, \delta)$ we construct a ADA${}^\epsilon$ \[\+B = (\A, Q_\exists\cup\{(q_0,\yend)\}, Q_\forall, (q_0,\yend), F', \delta')\] where $Q_\exists = (Q \cup Q^2) \times \set{0,1} \times \set{\yend,\nend}$ %\sidediego{me estaba faltando tratar bien los end-flags, lo que me obligó a extender el statespace} 
and $Q_\forall$, $F'$ and $\delta'$ are of polynomial size. The idea is that during the run, a state $(q,p,c,e) \in Q \times Q \times \set{0,1} \times \set{\yend,\nend}$ ---which we henceforth note as $[q,p]^e_c$--- at position $i$ verifies the presence of a `right loop', that is, a partial run of $\+A$ that starts in state $q$ at $i$ and ends in state $p$ at $i$, visiting only positions $j\geq i$ to the right of $i$. The subscript $c$ states whether the looping run contains at least one costly transition ($c=1$) or no costly transitions ($c=0$). The superscript $e$ is simply a flag with the information of whether the current position is and end position or else, which we need for technical reasons. A state $\alpha \land \beta$ is understood as the alternation of states $\alpha$ and $\beta$. We will build a ADA with $\epsilon$-transitions (\ie, a ADA${}^\epsilon$), and for this reason we allow to have transitions $(q,a,e,2,p)$ as short for $(q,a,e,1,p'), (p',\epsilon,e,1,p)$ for a fresh state $p'$. Formally, $\+B$ is defined as follows:
\begin{itemize}
	\item $Q_\forall = \set{ \alpha \land \beta : \alpha,\beta \in Q_\exists}$;
	\item $F' = \set{[q]^e_0 : q \in F, e \in \set{\yend,\nend}} \cup \set{[q,q]^e_0 : q \in Q, e \in \set{\yend,\nend}}$;
	\item $\delta'$ is defined as the smallest set verifying
	\begin{itemize}
		\item $((q_0,\yend),\epsilon, \yend, 0, [q_0]_c^{\yend})\in \delta'$, for every $c\in \{0,1\}$;
		\item $\set{([q]^{e'}_1,a,e,1,[p]^e_c) : c \in \set{0,1}, a \in \A, e,e' \in \set{\yend,\nend}, (q,a,e,1,p) \in \delta} \subseteq \delta'$ ---\ie, every costly rightward transition of $\delta$ is in $\delta'$;
		\item $\set{([q]^{e'}_c,a,e,0,[p]^e_c) : c \in \set{0,1}, a \in \A, e,e' \in \set{\yend,\nend}, (q,a,e,0,p) \in \delta} \subseteq \delta'$ ---\ie, every non-costly rightward transition of $\delta$ is in $\delta'$; %\sidemiguel{Revisar estas transiciones}\sidediego{fijate si te parece bien ahora, dividí en costly/noncostly}
		\item for every $q,p \in Q$, $c_1, c_2 \in \set{0,1}$, and $e \in \set{\yend,\nend}$ we have $([q]^e_{\max(c_1,c_2)},\epsilon,e,0,[p]^e_{c_1} \land [q,p]^e_{c_2})\in\delta'$ ---\ie, we can change the state from $q$ to $p$ provided there is a right loop from $q$ to $p$;
		\item for every $q,p,r \in Q$, $c_1,c_2 \in \set{0,1}$, and $e \in \set{\yend,\nend}$, we have $([q,r]^e_{\max(c_1,c_2)},\epsilon,e,0,[q,p]^e_{c_1} \land [p,r]^e_{c_2})\in\delta'$ ---\ie, there is a right loop from $q$ to $r$ if there are from $q$ to $p$ and from $p$ to $r$;
		\item for every $a \in \A$, $q,p \in Q$, $c \in \set{0,1}$, $e_1,e_2 \in \set{\yend,\nend}$ and $(q,a,e_1,c_1,q'), (p',a^{-1},e_2,c_2,p) \in \delta$ we have
		\begin{itemize}
			\item $([q,p]^{e_2}_c,a,e_1,0,[q',p']^{e_1}_c)\in\delta'$ if $\max(c_1,c_2)=0$,
			\item $([q,p]^{e_2}_1,a,e_1,c_1 + c_2,[q',p']^{e_1}_c)\in\delta'$ if $\max(c_1,c_2)=1$;%\sidediego{acá es donde me hizo falta usar el end flag del estado...}
		\end{itemize}
---\ie, a right loop on a position can be witnessed by a right loop on the next position;
		\item for every  $[\alpha_1]^{e}_{c_1} \land [\alpha_2]^{e}_{c_2} \in Q_\forall$ we have $([\alpha_1]^{e}_{c_1} \land [\alpha_2]^{e}_{c_2}, \epsilon, e, c_2, [\alpha_1]^{e}_{c_1}) \in\delta'$ and $([\alpha_1]^{e}_{c_1} \land [\alpha_2]^{e}_{c_2}, \epsilon, e, c_1, [\alpha_2]^{e}_{c_2}) \in \delta'$ ---\ie, $\alpha_1 \land \alpha_2$ is the alternation of states $\alpha_1$ and $\alpha_2$.
	\end{itemize}
\end{itemize}

\begin{lemma}\label{lem:}
	$\+A$ is limited if, and only if $\+B$ is limited.
\end{lemma}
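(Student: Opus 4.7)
The idea is to show a cost-sensitive refinement of the standard translation from two-way to alternating one-way finite automata, in the spirit of \cite{PitermanV03} and the history summaries of \cite{BlumensathCKPB14}. The key observation is that any accepting run of a two-way automaton on a word $w$ naturally decomposes as a \emph{zig-zag tree} (Figure~\ref{fig:2way-as-tree}): internal nodes labelled $[q,p]$ represent \emph{right-loops}, \ie, partial runs starting in state $q$ at some position $i$ and returning to $i$ in state $p$, visiting only positions $\geq i$; leaves labelled $[q]$ witness that $q$ is final. The ADA$^\epsilon$ $\+B$ simulates this tree top-down, using the cost bit $c$ of its states $[q]_c^e$ and $[q,p]_c^e$ to record whether the corresponding right-loop contains at least one costly transition.

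For the forward direction (if $\+A$ is limited then $\+B$ is limited), the plan is as follows. Given an accepting run $\rho$ of $\+A$ on $w$ with cost $\cost_{\+A}(w) = N$, I would build an accepting tree $T_\rho$ for $\+B$ on $w$ by recursion on the zig-zag decomposition of $\rho$: the root of $T_\rho$ is labelled $[q_0]_c^{\yend}$ with $c$ indicating whether $\rho$ has any costly transition; each internal zig-zag branching corresponds to applying either the rule $[q]_{\max(c_1,c_2)} \to [p]_{c_1} \land [q,p]_{c_2}$ or $[q,r]_{\max(c_1,c_2)} \to [q,p]_{c_1} \land [p,r]_{c_2}$, where $p$ is the state at the \emph{deepest position} of the zig-zag; each zig-zag turn at a letter corresponds to the rightward/leftward pairing transition that charges $c_1 + c_2$ (\ie, the actual costly transitions of that pair); leaves of $T_\rho$ correspond to $[q,q]_0^e$ or $[q]_0^e$ with $q \in F$. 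The crucial point is the accounting on the alternation transitions $[\alpha_1]_{c_1} \land [\alpha_2]_{c_2} \to [\alpha_i]_{c_i}$, which pay $c_j$ for $j\neq i$: this charges the branch that is \emph{kept} for costly transitions occurring in the branch that is \emph{abandoned}. A routine induction on the zig-zag tree then establishes the invariant that the cost of every root-to-leaf branch of $T_\rho$ equals the total number of costly transitions of $\rho$, so $\cost_{\+B}(w) \leq N$.

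The reverse direction proceeds by decoding: given an accepting tree $t$ for $\+B$ on $w$ with cost $M$, I would show, by induction on the subtree structure of $t$, that every node labelled $[q,p]_c^e$ at position $i$ faithfully witnesses an actual right-loop of $\+A$ starting at $i$ in state $q$ and ending at $i$ in state $p$, containing a costly transition iff $c=1$; similarly every node $[q]_c^e$ witnesses an accepting subrun. Stitching these witnesses together at the root yields an accepting run of $\+A$ on $w$, and the costs along some branch of $t$ upper-bound the total number of costly transitions in the reconstructed run, giving $\cost_{\+A}(w) \leq M$. Together with the forward direction this yields the claim.

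The main obstacle is the cost bookkeeping on alternation. In the language-preserving reduction of \cite{PitermanV03} the alternating automaton may freely split threads without any cost, but for limitedness one must make sure that the cost charged along the chosen branch of $\+B$'s tree faithfully reflects \emph{all} costly transitions of the simulated $\+A$-run, including those occurring in the abandoned branch. The $c_j$-charge on the alternation rule and the use of $\max(c_1,c_2)$ in the decomposition rules are precisely the mechanism that enforces this; the technical core of the proof is verifying, by induction on the zig-zag structure, that this accounting is neither over- nor under-charging.
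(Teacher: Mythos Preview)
Your proposal has a genuine gap in the reverse direction, and the accounting claim in the forward direction is also incorrect (though there the conclusion survives).

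\smallskip

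\textbf{The forward direction.} You claim that in the tree $T_\rho$ you build, ``the cost of every root-to-leaf branch of $T_\rho$ equals the total number of costly transitions of $\rho$''. This is false. The alternation transition $[\alpha_1]_{c_1} \land [\alpha_2]_{c_2} \to [\alpha_i]_{c_i}$ charges only the \emph{bit} $c_j \in \set{0,1}$ for the abandoned side, not the actual number of costly transitions in that subtree. So a branch of $T_\rho$ is charged (i)~the costly transitions of $\rho$ lying \emph{on} that branch of the zig-zag tree, plus (ii)~one unit for each attached subtree containing at least one costly transition. In the paper's terminology this is (cost of branch) $+$ (number of heavy branchings), whose maximum over branches the paper calls $f(\rho)$. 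One has $f(\rho) \leq \cost(\rho)$, so your conclusion $\cost_{\+B}(w) \leq \cost_{\+A}(w)$ still holds, but the stated invariant does not.

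\smallskip

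\textbf{The reverse direction.} Here the gap is fatal. You claim that ``the costs along some branch of $t$ upper-bound the total number of costly transitions in the reconstructed run'', yielding $\cost_{\+A}(w) \leq M$. This is wrong: consider a 2DA run whose zig-zag tree is a balanced binary tree of depth $d$, with exactly one costly transition in each of the $2^d$ leaf-loops. Then $\cost(\rho) = 2^d$, but along any single branch of the corresponding $\+B$-tree you see only the $\approx d$ heavy-branching charges plus one costly leaf, so $\cost(t)$ is $O(d)$. In general $\cost_{\+B}(w)$ can be logarithmic in $\cost_{\+A}(w)$, so the inequality $\cost_{\+A}(w) \leq \cost_{\+B}(w)$ fails.

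\smallskip

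\textbf{What the paper does instead.} The paper never claims cost preservation. It proves that $\cost_{\+B}(w)$ computes (up to a factor $k = |Q|^2+|Q|$) the quantity $f(\rho_t)$ for the reconstructed run $\rho_t$, and then shows the combinatorial bound $\cost(\rho) \leq 3k^{f(\rho)}$ by following the heaviest branch of the zig-zag tree and observing that each heavy-branching edge divides the remaining subtree cost by at most $k$. This exponential relationship $\cost_{\+A}(w) \leq 3k^{k\cdot \cost_{\+B}(w)}$ is what actually yields the reverse implication for limitedness. Your plan is missing precisely this step: the $c$-bit mechanism does \emph{not} make each branch account for the whole cost; it only records \emph{whether} the abandoned subtree is costly, and you need a separate argument bounding $\cost(\rho)$ in terms of this coarser information.
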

\begin{proof}
% For every accepting run of $\+A$ on $w$ with cost $c$ there is an accepting run of $\+B$ on $w$ with cost at most $c$. This is because $\+B$ simulates faithfully all accepting runs of $\+B$, and every cost introduced in $\+B$ is witnessed by a costly transition on $\+A$ (although perhaps witnessed in a different branch of the zig-zag tree). Thus, $\cost_{\+B}(w) \leq \cost_{\+A}(w)$.
First, note that the translation is language-preserving (\ie, the set of words with accepting runs in $\+A$ and in $\+B$ coincide). Further, the accepting runs of $\+B$ are essentially the accepting runs of $\+A$ seen as zig-zag trees.

For any accepting run of $\+A$ represented as a zig-zag tree, and any given branch thereof (\ie, a path from the root to a leaf), let us define its \defstyle{cost} as the number of costly transitions it contains. For example, in Figure~\ref{fig:2way-as-tree} the branch indicated with $\star$ has cost $2$. We also define the \defstyle{number of heavy branchings} of a branch as the number of subtrees attached to the branch that have at least one costly transition. In Figure~\ref{fig:2way-as-tree} there are 3 subtrees attached to the $\star$-branch (which, in this particular case, they all look like words rather than trees), and all of them have costly transitions; hence the number of heavy branchings is 3. Finally, for any accepting run $\rho$ of $\+A$, let $f(\rho)$ be the maximum, over all its branches, of its cost plus its number of heavy branchings. Notice that $f(\rho) \leq \cost(\rho)$. 
Observe also that for a word $w$, every accepting run $t$ of $\+B$ determines a zig-zag tree and then an accepting run $\rho_t$ of $\+A$. 
Conversely, for every accepting run $\rho$ of $\+A$, there is an accepting run $t$ of $\+B$ such that $\rho_t=\rho$. 
Further, the cost computed by $\+B$ is closely related with $f$ in the sense that $\frac{1}{k}f(\rho_t)\leq \cost(t)\leq f(\rho_t)$\, ($\dagger$), for every accepting run $t$ of $\+B$, where
$k := |Q|^2 + |Q|$ is the maximum arity of a zig-zag tree. Then we have the following:
%
% \begin{claim}
	\begin{align*}
	\cost_{\+B}(w) &\leq  \min \set{f(\rho) : \text{$\rho$ is an accepting run of $\+A$ on $w$}} \\
	&\leq  \min \set{\cost(\rho) : \text{$\rho$ is an accepting run of $\+A$ on $w$}} = \cost_{\+A}(w),
	\end{align*}
% \end{claim}
where $\min \emptyset = 0$.
Therefore, we have that if $\+A$ is limited, so is $\+B$. 

\smallskip

For the other direction, we claim that $\cost(\rho)\leq 3k^{f(\rho)}$, for every accepting run $\rho$ of $\+A$. 
To see this, consider the heaviest branch $B$ of $\rho$ (\ie, the result of traversing the tree from the root by always choosing a child whose subtree has maximal number of costly transitions). 
We can partition the edges of $B$ into $E_1$ and $E_2$ such that $E_1$ are the edges that do not decrease the cost of the current subtree and $E_2$ the ones that do. 
Since the initial cost is $n=\cost(\rho)$, 
and each edge in $E_2$ decreases the cost of the current subtree from $n'$ to no less than $\frac{n'}{k} -1$, 
we have $|E_2|\geq\max\{\ell\in \N: \frac{n}{k^\ell} - \sum_{i=0}^{\ell-1} \frac{1}{k^i} \geq 1\}\geq \max\{\ell\in \N: \frac{n}{k^\ell} - 2 \geq 1\}\geq \log_k{n/3}$. 
The claim follows since $f(\rho)\geq |E_2|$ (as each edge in $E_2$ is either costly or has a heavy branching). 

From the bound above, we can obtain that $\cost_{\+A}(w)\leq 3k^{k\cdot\cost_{\+B}(w)}$, for every word $w$, and hence if $\+B$ is limited, so is $\+A$. 
Indeed, take an accepting run $t$ of $\+B$ with $\cost(t)=\cost_{\+B}(w)$, 
and consider the associated accepting run $\rho_t$ of $\+A$. 
By ($\dagger$), we have $f(\rho_t)\leq k\cdot \cost(t)$. 
Summing up, we obtain $\cost_{\+A}(w)$ $\leq \cost(\rho_t)$ $\leq 3k^{f(\rho_t)}$ $\leq 3k^{k\cdot\cost_{\+B}(w)}$, as required. 
\ignore{
For the other direction, note that for any sequence of accepting runs $(\rho_n)_{n \in \N}$ of $\+A$ so that $\cost(\rho_n) > n$ for every $n$, we have $\lim_{n \to \infty}f(\rho_n) = \infty$. Concretely, the number of heavy branchings plus the cost of the heaviest branch (\ie, the result of traversing the tree from the root by always choosing a child whose subtree has maximal number of costly transitions) tends to $\infty$. We now give some more details on why this holds. %If the cost of the heaviest branches is unbounded, then of course $\lim_{n \to \infty}f(\rho_n) = \infty$. If the cost of the heaviest branches is bounded, it means that all branches of all considered trees have cost $\leq N$ for some $N \in \N$. Now, since zig-zag trees of accepting runs are ranked (by $k = |Q|^2 + |Q|$) and since every branch contains at most $N$ costly transitions, we obtain that for every $m \in \N$ we show that there is a sufficiently costly run whose heaviest branch has at least $m$ heavy branchings, meaning that $\lim_{n \to \infty}f(\rho_n) = \infty$. Indeed, notice that when we traverse a $k$-ranked tree of cost $n$ following the heaviest branch, at each step we move from a subtree with cost $n'$ to a child subtree so that: it has cost $n'$ and the number of heavy branchings is not incremented, or it has cost smaller than $n'$ but greater than $\frac{n'}{k}$ and we increment the number of costly branchings. Then, for a zig-zag tree of cost $n$, we have that the heaviest branch contains at least $m$ heavy branchings, for $m = \max \set{ \ell \in \N : \frac{n-N}{k^\ell} \geq 1}$. Thus, $\lim_{n \to \infty}f(\rho_n) = \infty$. \sidediego{agregué bound explícito de Miguel} 
%explicit bound
Concretely, we claim that $\cost(\rho)\leq 3k^{f(\rho)}$, where $k = |Q|^2 + |Q|$. To see this, consider the heaviest branch $b$ of $\rho$ (obtained by choosing always the heaviest subtree). We can partition the edges of $b$ into $E_1$ and $E_2$ such that $E_1$ are the edges that do not decrease the cost of the current subtree and $E_2$ the ones that do. Since the initial cost is $n=\cost(\rho)$, 
and each edge in $E_2$ decreases the cost of the current subtree from $n'$ to no less than $\frac{n'}{k} -1$, we have $|E_2|\geq\max\{\ell\in \N: \frac{n}{k^\ell} - \sum_{i=0}^{\ell-1} \frac{1}{k^i} \geq 1\}\geq \max\{\ell\in \N: \frac{n}{k^\ell} - 2 \geq 1\}\geq \log_k{n/3}$. The claim follows since $f(\rho)\geq |E_2|$ (as each edge in $E_2$ is either costly or has a heavy branching). 
%we are using \sum_{i=0}^{\ell-1} \frac{1}{k^i}\leq sum_{i=0}^{\infty} \frac{1}{k^i}=k\(k-1)\leq 2 (for k>1), and the fact that \log_k{n/3} participates in the max. 

If $\+A$ is not limited, then there is a sequence of words $(w_n)_{n \in \N}$ so that for every accepting run $\rho_n$ on $w_n$ we have $\cost(\rho_n)>n$. In light of the observation above, if we take each $\rho_n$ to be so that $f(\rho_n) = \cost_{\+B}(w_n)$, we obtain that $\lim_{n \to \infty} \cost_{\+B}(w_n) = \lim_{n \to \infty} f(\rho_n) = \infty$, and thus that $\+B$ is not limited.}
\end{proof}

\paragraph*{(3) From ADA$^\epsilon$ to ADA}
This is a straightforward polynomial time reduction. This reduction ---as opposed to reduction (1)--- does not preserve the language: we need to add an extra letter $a_\epsilon$ to the alphabet in order to make the reduction work in polynomial time. In fact, even for alternating finite automata (AFA) there is no known polynomial time language preserving translation from AFA with epsilon transitions into AFA (to the best of our knowledge). \sidediego{alguien conoce algún resultado al respecto?}
Given a ADA$^\epsilon$ $\+A = (\A, Q_{\exists}, Q_{\forall}, q_{0}, F, \delta)$, we can assume, without any loss of generality, that the state determines whether we are in a leftmost position, a rightmost position, or an internal position. That is, the statespace is partitioned into $Q_\exists = Q_{\exists,1} \dcup Q_{\exists,2} \dcup Q_{\exists,3}$ and $Q_\forall = Q_{\forall,1} \dcup Q_{\forall,2} \dcup Q_{\forall,3}$ so that $q_0  \in Q_{\exists,1}$ and every transition $(q,\alpha,e,c,p) \in \delta$ with $\alpha \in \A \cup \set\epsilon$, $q \in Q_{\exists,i} \cup Q_{\forall,i}$ and $p \in Q_{\exists,j} \cup Q_{\forall,j}$ is so that: (i) $i \leq j$, (ii) $e = \yend$ iff $j\in \set{1,3}$, (iii) if $\alpha=\epsilon$ then $i=j$.

We obtain $\+B = (\B, Q_{\exists}, Q_{\forall}, q_{0}, F, \delta')$ by extending the alphabet $\A$ with a new letter $\B = \A \cup \set{a_\epsilon}$, and obtaining the $\epsilon$-free transition relation $\delta'$ from $\delta$ by
\begin{itemize}
	\item replacing each transition $(q,\epsilon,e,c,p)$ with $(q,a_\epsilon,e,c,p)$, and
	\item adding self-loops $(q,a_\epsilon,\yend,0,q)$ for each state $q \in \bigcup_{\dag \in \set{\exists,\forall}, i \in \set{1,3}} Q_{\dag,i}$ and $(q,a_\epsilon,\nend,0,q)$ for each state $q \in Q_{\exists,2} \cup Q_{\forall,2}$.
\end{itemize}
It is easy to see that this reduction preserves limitedness. 

\paragraph*{(5) From ADA to DA}
Finally, the last reduction is exactly the same as the reduction A2DA to 2DA, observing that when applied to a ADA it yields a DA. It is worth noting that a limitedness preserving reduction in the context of \textsl{infinite words} has been proposed in \cite[Lemma~6]{KuperbergB11}, but it produces an automaton with an exponential set of states.\sidediego{check again}

\paragraph*{The resulting composition (1) + (2) + (3) + (4) + (5)}
Let $\+A_i = (\A_i, Q_{i,\exists}, Q_{i,\forall}, q_{i,0}, F_i, \delta_i)$ be a A2DA, for each $i \in \set{1,2,3,4}$ so that, starting with $\+A_1$, we get $\+A_2, \+A_3, \+A_4$ as the result from the reductions $\+A_1 \xrightarrow{(1)+(2)} \+A_2 \xrightarrow{(3)+(4)} \+A_3 \xrightarrow{(5)} \+A_4$ described before. We obtain the following properties. 
\begin{itemize}
	\item $\+A_4$ has a polynomial number of states. More precisely,
	$|Q_{4,\forall}| = |Q_{2,\forall}| = 0$, and $|Q_{4,\exists}| = |Q_{3,\exists} \cup Q_{3,\forall}| \leq poly(|Q_{2,\exists}|) = poly(|Q_{1,\exists} \cup Q_{1,\forall}|)$.
	\item $\A_4$ is (singly) exponential. Due to \eqref{eq:alphabet-blowup-redux}, $|\A_2| = |\A_1| \cdot |\delta_1|^{2 \cdot |Q_{1,\forall}|}$, $|\A_3| = |\A_2|+1$, and $|\A_4| = |\A_3| \cdot |\delta_3|^{2 \cdot |Q_{3,\exists} \cup Q_{3,\forall}|}$ again due to \eqref{eq:alphabet-blowup-redux}. Since $\delta_3$ is singly exponential in $\+A_1$ (since it is polynomial in $\A_3$ and $Q_{3,\exists}$), and $Q_{3,\exists}$ is polynomial in $\+A_1$, $\A$ is singly exponential in $\+A_1$.
\end{itemize}
As explained before, thanks to the fact that the bound of Theorem \ref{thm:da-pspace} depends only on the number of states and not on the size of the alphabet nor the transition set, this enables a \pspace procedure for testing for limitedness of A2DA, which concludes the proof of Theorem~\ref{thm:a2da-limitedness-pspace}.

 \section{Appendix to Section~\ref{sec:c2rpq-bound}}
 %\smallskip
 %\noindent
 %{\bf {\em Tiling encoding.}}
 \renewcommand{\P}{\mathbb{P}}%
 We encode an instance of the {\em tiling problem}  
 following the ideas used for showing \expspace-hardness for CRPQ-containment \cite{CGLV00}. We reduce from the following $2^n$-tiling problem, which is \expspace-complete. An input instance consists of a number $n \in \N$ written in unary, a finite set $\Delta$ of tiles, two relations $H,V \subseteq \Delta \times \Delta$ specifying constraints on how tiles should be placed horizontally and vertically, and the starting and final tiles $t_S, t_F \in \Delta$. A solution to the input instance 
 is a ``consistent'' assignment of tiles to a finite rectangle having $2^n$ columns. Concretely, a solution is a function $f: \set{1, \dotsc, 2^n} \times \set{1, \dotsc, k} \to \Delta$, for some $k \in \N$, such that $f(1,1) = t_S$, $f(2^n,k)=t_F$, and 
 $f((i,j),f(i+1,j)) \in H$ and $f((i,j),f(i,j+1)) \in V$ for every $i,j$ in range.
 We can then obtain the following. 

 \begin{lemma*}[restatement of Lemma~\ref{lem:tiling-encoding}]
\lemTilingEncoding
 \end{lemma*}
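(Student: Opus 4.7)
The plan is to adapt the classical encoding of $2^n$-tilings used in \cite{CGLV00} for the \expspace-hardness of CRPQ containment. Given an instance $T = (n, \Delta, H, V, t_S, t_F)$, every solution $f : \set{0, \dotsc, 2^n - 1} \times \set{1, \dotsc, k} \to \Delta$ will be represented by a word over $\Sigma = \Delta \cup \set{0, 1, \$, \#}$ of the form $w_f = \$\, R_1 \# \cdots \# R_k\, \$$, where the row $R_j$ concatenates the cell encodings $C_{i,j} = \text{bin}_n(i)\cdot f(i,j)$ for $i = 0, \dotsc, 2^n - 1$. The target CRPQ is $\gamma = \exists x, y \, \bigwedge_{0 \leq i \leq n} (x \xrightarrow{L_i} y)$ over $\Sigma$, and the heart of the construction is to design polynomial-size regular expressions $L_0, \dotsc, L_n$ whose intersection is exactly the set $\set{w_f : f \text{ is a solution of } T}$.

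The key step is a per-bit decomposition of the ``valid tiling'' check across the $n+1$ conjuncts. The language $L_0$ is a polynomial-size regex verifying the global format (delimiters, cell shape, starting and final tiles, horizontal $H$-constraints between consecutive cells of a row, and the boundary conditions that each row starts with counter $0^n$ and ends with $1^n$). Each $L_i$ for $1 \leq i \leq n$ focuses on the $i$-th bit of the binary column counter: collectively they force both that consecutive cells within a row have correctly incremented counters and that cells sharing a column in two consecutive rows satisfy the vertical constraint $V$. No single $L_i$ can perform the full check on its own, since vertical compatibility compares cells $2^n$ positions apart; however, the conjunction of these bit-level checks pins down precisely the encodings $w_f$, exploiting the standard ``exponential power of intersection of polynomial NFA'' trick from~\cite{CGLV00}.

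We then count the minimal expansions of $\gamma$. An expansion picks a tuple $(w_0, \dotsc, w_n) \in L_0 \times \cdots \times L_n$ and produces $n+1$ internally-disjoint paths from $x$ to $y$ labelled by the $w_i$. For each solution $f$, the ``collapsed'' expansion $\lambda_f$ obtained by choosing $w_i = w_f$ for every $i$ is shown to be minimal: any strictly smaller expansion mapping homomorphically to $\lambda_f$ would have to fold one of the parallel $w_f$-paths into a shorter one, which by the rigidity of the encoding would produce a shorter word in $\bigcap_i L_i$, contradicting the fact that $w_f$ has no proper factor encoding a solution. Moreover, distinct solutions yield pairwise incomparable $\lambda_f$, providing the $\Omega(m)$ lower bound. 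For the upper bound, we argue that in any minimal expansion each $w_i$ must lie in a small ``irredundant'' subset of $L_i$: either (a) a solution encoding $w_f$ (at most $m$ choices), or (b) one of the short minimal accepting words of $L_i$, whose number is bounded by a doubly exponential function $g(|T|)$, since $L_i$ is accepted by a polynomial-size NFA and its minimal words have length at most doubly exponential. Combining the $n+1$ independent coordinates yields the desired $O((g(|T|) + m)^{n+1})$ bound.

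The main obstacle is the per-bit decomposition in the first step: each $L_i$ must be simultaneously small, selective enough so that $\bigcap_i L_i$ does not leak beyond $\set{w_f : f\text{ solution of }T}$, and restrictive enough that its irredundant accepting words remain few. Orchestrating this polynomial-size design while honestly enforcing the $2^n$-column vertical constraints is the technical crux; once $\bigcap_i L_i$ has been pinned down, the minimality analysis for both pure and mixed expansions is reasonably routine, though some care is needed to ensure that mixed configurations (where some coordinates are solution encodings and others are short $L_i$-words) do not slip past the $(g(|T|) + m)^{n+1}$ estimate.
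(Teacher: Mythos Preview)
Your approach diverges from the paper's in a way that creates a genuine gap, and the gap sits precisely at the step you flag as the ``technical crux''.

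You aim to build polynomial-size $L_0,\dotsc,L_n$ with $\bigcap_i L_i$ equal to \emph{exactly} the set of solution encodings. This is not attainable. The vertical constraint compares cells that are $2^n$ blocks apart, and recognising the language of valid $2^n$-tilings requires remembering an entire previous row, i.e.\ $|\Delta|^{2^n}$ states. An intersection of $n{+}1$ NFAs of size $\text{poly}(|T|)$ has at most $\text{poly}(|T|)^{n+1}$ product states, which is only singly exponential; it cannot in general carve out the doubly-exponential-state solution language as an exact intersection. The ``exponential power of intersection'' trick from \cite{CGLV00} does not give you a positive characterisation of solutions; what it gives is a small \emph{error} language per bit. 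The paper exploits exactly this: each $L_i = E \cup G_i \cup F_C \cup F_H$, where $E$ contains \emph{all} candidate encodings (solutions and non-solutions alike) and the error languages $G_i,F_C,F_H$ contain short witnesses that occur as \emph{factors} of any non-solution word in $E$.

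Even granting your intersection, the upper bound on minimal expansions would not follow. With your $\$\cdots\$$ delimiters, no proper factor of a word in $L_i$ lies in any $L_j$, so for an expansion $(w_0,\dotsc,w_n)$ any homomorphic image must send $x,y$ to $x,y$, and minimality reduces to: each $w_j$ is shortest in $L_j\cap\{w_0,\dotsc,w_n\}$. Now take $w_i \in L_i \setminus \bigcup_{j\neq i} L_j$ for every $i$ (such words exist in abundance in your construction, since each $L_i$ only enforces a per-bit condition). Then $L_j\cap\{w_0,\dotsc,w_n\}=\{w_j\}$ for every $j$, so the expansion is automatically minimal, regardless of how long the $w_i$'s are. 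This yields infinitely many minimal expansions that are not solution encodings, breaking the $O((g(|T|)+m)^{n+1})$ bound. Your clause~(b) (``short minimal accepting words of $L_i$'') does not save this: with the delimiters, \emph{every} word of $L_i$ is factor-minimal.

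The mechanism you are missing, and which the paper's construction supplies, is that the error languages are designed to appear as \emph{factors inside a single path}. Because $x,y$ are existentially quantified, a small expansion $(u,\dotsc,u)$ with $u$ a common short factor in $\bigcap_j(G_j\cup F_C\cup F_H)$ can be mapped into the interior of any long non-solution $E$-path, and this is what forces minimal expansions with a non-solution $E$-word to be short.
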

 \begin{proof}%[Proof of Lemma \ref{lem:tiling-encoding}]
 For any tiling instance as above, we show how to define a CRPQ over the alphabet $\A := \Delta \cup \set{0,1,\#}$ so that it has at least $m$ and at most  $(g(|T|) + m)^{n+1}$ minimal expansions for some doubly-exponential function $g$, where $m$ is the number of solutions of the instance. We will encode a solution of a tiling as a word of $\#((0+1)^n \cdot \Delta)^*\#$, where the rectangle of tiles is read left-to-right and top-to-bottom, and each block (\ie, each element of $(0+1)^n ~ \Delta$) represents the column number (in binary) and the tile. The symbols $\#$ at the beginning and end of the word are used for technical reasons. 
 
 For enforcing this encoding, we define regular languages $E$, $F_C$, $F_H$ and $G_i$ for each $i \leq n$ over $\A$.

 The language $E$ gives the general shape of the encoding of solutions: 
 \[E= \# 0^n ~ t_S ~ ((0+1)^n  \Delta)^* ~ 1^n ~ t_F \#,\]
 in particular that it starts and ends with the correct tiles.
 The language $F_C$ detects adjacent blocks with an error in the column number bit, which can be easily defined with a polynomial NFA. The language $F_H$ checks that there are adjacent blocks in which the tiles do not respect the horizontal adjacency relation $H$, 
 \[F_H = \bigcup_{(t_1,t_2) \in \Delta^2 \setminus H} t_1 ~ \overline{0^n}  ~ t_2,  \]
  where $\overline{0^n} = (0+1)^n \setminus \set{ 0^n }$. 
 Finally, $G_0, \dotsc, G_n$ are used to check that there are two blocks at distance $2^n$ which do not respect the vertical adjacency relation $V$; in other words, there is a factor of the word whose first and last blocks have the same column number, it contains not more than one block with column number $1^n$ (otherwise we would be skipping a row), and its first and last tiles are not $V$-related.
 First, $G_0$ checks that the first and last blocks of the factor we are interested in do not conform to $V$, and furthermore that there is exactly one column number $1^n$ in between
 \[
 G_0 = \bigcup_{(t_1,t_2) \in \Delta^2 \setminus V} (0+1)^n t_1 (\overline{1^n} ~ \Delta)^* 1^n (\Delta \cdot \overline{1^n})^* t_2,
 \]
 where $\overline{1^n} = (0+1)^n \setminus \set{ 1^n }$. For each $b \in \set{0,1}$ and $i \in \set{1, \dotsc, n}$ we define $G_i^b$ to check that the $i$-th bit of the address of both the first and last tile is set to $b$,
 \[
   G_i^b = (0+1)^{i-1} \cdot b \cdot (0+1)^{n-i} \cdot \Delta \cdot ((0+1)^n \cdot \Delta)^* \cdot (0+1)^{i-1} b (0+1)^{n-i} \Delta,
 \]
 % \[
 %   G_i^b = (0+1)^{i-1} \cdot b \cdot (0+1)^{n-i} \cdot \Delta \cdot (\overline{1^n} \cdot \Delta)^* \cdot
 %   (1^n \cdot \Delta (\overline{1^n} \Delta)^* + \epsilon) \cdot (0+1)^{i-1} b (0+1)^{n-i} \Delta,
 % \]
 and we define $G_i$ as $G_i^0 + G_i^1$. For each one of these languages one can produce a regular expression recognizing the language in polynomial time.
 Finally, the Boolean CRPQ is
 \[
   \gamma = \exists x,y ~~ \bigwedge_{0 \leq i \leq n} x \xrightarrow{E \cup G_i \cup F_C \cup F_H} y.
 \]
 Let us analyse the bounds on the number of minimal expansions of $\gamma$ with respect to the number $m$ of solutions of the tiling problem.
 First, note that for any expansion of $\gamma$ containing a word $w \in E$ which \emph{does not} encode a solution to the tiling problem either: (i) it has a problem with the column encodings, in which case it contains a (polynomial) word from $F_C$ as a factor; %, and thus it is not minimal; 
 (ii) the encoding is correct, but the horizontal relation is not respected, in which case it contains a (polynomial) word from $F_H$ as a factor;  %and thus the expansion is not minimal; 
 (iii) it violates the vertical relation, and thus there are words $w_i \in G_i$ for each $i$ so that $|w_i| \in O(n2^n)$ and the expansion corresponding to $w_0, \dotsc, w_n$ maps to the expansion.  %, and thus it is not minimal. 
 Therefore, every path of a minimal expansion of $\gamma$ which is in $E$ and is not a solution cannot have size bigger than $O(n2^n)$, which means that the number of minimal expansions is at most $(|\A|^{O(n2^n)} + m)^{n+1}$. On the other hand, it follows by construction that every word encoding a solution is in $E$, that the expansion consisting of only solutions is minimal, and hence that there are at least $m$ minimal expansions of $\gamma$. 
\end{proof}

 %\paragraph*{Lower bound on boundedness problem}
 Without loss of generality we can assume that the tiling instance satisfies that if there is a tiling solution, there are infinitely many. This fact, coupled with Lemma~\ref{lem:tiling-encoding} and \expspace-completeness of the $2^n$-tiling problem, yields the lower bound in part (1) of Theorem
 \ref{theo:main}.

 \begin{proposition}\label{prop:bounded-crpq-expspace-hard}
 {\sc Boundedness} for CRPQs is \expspace-hard. This holds even for Boolean CRPQs of the form $\exists x,y \, \bigwedge_i (x \xrightarrow{L_i} y)$, whose languages $L_i$ are given as regular expressions.
 \end{proposition}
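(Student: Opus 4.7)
The plan is to reduce the complement of the $2^n$-tiling problem to \textsc{Boundedness} for Boolean CRPQs of the desired shape. Since the $2^n$-tiling problem is \expspace-complete and \expspace is closed under complement, this suffices to establish \expspace-hardness. The two main ingredients are Lemma~\ref{lem:tiling-encoding}, which links the number of minimal expansions of a CRPQ to the number of tiling solutions, and Proposition~\ref{prop:basic}(3), which characterises boundedness as having finitely many minimal expansions.

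First, I would preprocess the tiling instance so that it has either zero or infinitely many solutions. This is a standard trick: modify $\Delta$, $H$, $V$ and the final tile $t_F$ so that any valid tiling can be extended by appending further rows below its last one (\eg, by introducing a fresh ``idle'' tile only reachable from $t_F$, self-consistent both horizontally and vertically). The preprocessed instance remains polynomial in the original, the corresponding $2^n$-tiling problem stays \expspace-hard, and we obtain the dichotomy $m \in \set{0, \infty}$ for the number $m$ of tiling solutions.

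Next, I would apply Lemma~\ref{lem:tiling-encoding} to the preprocessed instance $T$ to produce in polynomial time a Boolean CRPQ $\gamma$ of the required form $\exists x,y \, \bigwedge_{0 \leq i \leq n} (x \xrightarrow{L_i} y)$. By the lemma, the number of minimal expansions of $\gamma$ is $\Omega(m)$ and $O((g(|T|)+m)^{n+1})$ for a doubly exponential $g$. Combined with the dichotomy: if $T$ has no solution then $\gamma$ has at most $O(g(|T|)^{n+1})$ minimal expansions, hence finitely many; whereas if $T$ has a solution, then $\gamma$ has infinitely many minimal expansions. Proposition~\ref{prop:basic}(3) then implies that $\gamma$ is bounded if and only if $T$ has no solution, completing the reduction.

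The main obstacle is the preprocessing step: one must choose a modification of the tiling problem that both preserves \expspace-hardness of tiling existence and remains compatible with the encoding of Lemma~\ref{lem:tiling-encoding}. In particular, the ``padding'' behaviour must neither create spurious accepting tilings for the original problem nor break the $O(n 2^n)$ length bound on non-solution factors used in the proof of that lemma. Once this is arranged, polynomial-time computability of $\gamma$ and correctness of the reduction follow directly.
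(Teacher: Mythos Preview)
Your proposal is correct and follows essentially the same approach as the paper: the paper also assumes without loss of generality that the tiling instance has either zero or infinitely many solutions, and then invokes Lemma~\ref{lem:tiling-encoding} together with the characterisation of boundedness via minimal expansions. Your worry about the preprocessing step breaking the internals of Lemma~\ref{lem:tiling-encoding} is unnecessary, since that lemma is stated as a black box for \emph{any} $2^n$-tiling instance; you may simply apply it to the preprocessed instance directly.
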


 \smallskip
 \noindent 
 {\bf {\em Lower bound on size of equivalent UCQ.}} 
 It is not hard to produce $2^n$-tiling instances $T_n$ having triple-exponentially many solutions. Indeed, it suffices to (1) enforce that each solution has exactly $2^{2^n}$ rows (and hence there are only finitely many solutions), which can be done by encoding the binary representation of $i$ at each row $i$, and (2) encode at each row an arbitrary symbol from the alphabet $\set{a,b}$. In this way, each solution encodes a function $f : R \to \set{a,b}$, where 
 %$N$ are the rows, 
 $R = \set{0, \dotsc, 2^{2^n}-1}$, and conversely, for each such a function there is a distinct solution. It then follows that $T_n$ has $2^{2^{2^n}}$ solutions. 
 In particular, the Boolean CRPQ $\gamma_n$ from Lemma~\ref{lem:tiling-encoding} is bounded and has at least $2^{2^{2^n}}$ minimal expansions. 
Recall that these minimal expansions are produced by expanding each atom of $\gamma_n$ into a word $w\in E$ corresponding to a solution of $T_n$. 
Hence, if $\lambda$ and $\lambda'$ are two of these minimal expansions, we have that $\lambda\not\to \lambda'$, \ie, $\gamma_n$ has at least $2^{2^{2^n}}$ homomorphically incomparable minimal expansions. 
By Lemma~\ref{lemma:cont-crpq}, it follows that every UCQ equivalent to $\gamma_n$ must have at least $2^{2^{2^n}}$ disjuncts.  
 This yields part (3) of Theorem \ref{theo:main}.

\section{Appendix to Section~\ref{sec:nice-c2rpq}}

\paragraph*{Proof of Theorem \ref{theo:strongly-acyclic}}

The \pspace lower bound follows from Corollary~\ref{coro:comp-rpq}, so we focus on the upper bound. 
Given an acyclic UC2RPQ $\Gamma$ of thickness $\leq k$, we shall construct in polynomial time an A2DA$^\epsilon$ $\+A$ 
of polynomial size in $\|\Gamma\|$ such that $\Gamma$ is bounded iff $\+A$ is limited. The result will follow from Theorem~\ref{thm:a2da-limitedness-pspace}. 

As in the proof of the \expspace upper bound in Theorem~\ref{theo:main}, the A2DA$^\epsilon$ $\+A$ will run over encodings of expansions of $\Gamma$. 
So if $\A$ is the alphabet of $\Gamma$, then the alphabet of $\+A$ is $\A_1:=\A^{\pm} \cup {\cal V} \cup \{\$\}$, where ${\cal V}$ is the set of variables of $\Gamma$ and $\$$ is a fresh symbol. 
Again, if $\lambda$ is the expansion of a disjunct $\gamma=\exists \bar z \bigwedge_{1 \leq i \leq m} (x_i \xrightarrow{L_i} y_i)$ of $\Gamma$ obtained by expanding
$x_i \xrightarrow{L_i} y_i$ into an oriented path $\pi_i$ from $x_i$ to $y_i$ with label $w_i\in L_i$, then we encode $\lambda$ as the word over $\A_1$
$$w_{\lambda} = \$ x_1 w_1 y_1 \$ x_2 w_2 y_2 \$ \, \dotsb \, \$ x_m w_m y_m \$ $$

Note how the subword $x_iw_iy_i$ represents the oriented path $\pi_i$. 
Every position $j\in \{1,\dots,|w_\lambda|\}$ with $w_\lambda[j]\neq \$$ represents a variable in $\lambda$: either $x_i$ or $y_i$ if $w_\lambda[j]=x_i$ or $w_\lambda[j]=y_i$, respectively; 
or the $(\ell+1)$-th variable in the oriented path $\pi_i$ if $w_\lambda[j]$ is the $\ell$-th symbol in the subword $w_i$. 
Hence different positions could represent the same variable in $\lambda$: \eg, in the encoding $\$xabcy\$$, the 5\emph{th} position containing a `$c$' and 6\emph{th} position containing a `$y$', represent the same variable, namely, the last vertex $y$ of the oriented path.

It follows from the definition of $\+B_2^\gamma$ and $\+D_2^\gamma$ in Section~\ref{sec:upper-bound} that for every regular language $L$ appearing in $\Gamma$, 
there is a 2DA $\+A_L$ over $\A_1$, computable in polynomial time, such that for every expansion $\lambda$ of $\Gamma$, and head positions $i,j\in \{1,\dots,|w_\lambda|\}$, 
where  $x_i$ and $x_j$ are the variables in $\lambda$ 
represented by $i$ and $j$, respectively, we have:

\begin{itemize}
\item Every accepting run $\rho$ of $\+A_L$ over $w_\lambda$ from position $i$ to position $j$, 
determines an oriented path $\pi_\rho$ in $\lambda$ from $x_i$ to $x_j$ whose label is in $L$.  Moreover, $\cost(\rho)$ is precisely the number of atoms of $\pi_\rho$. 
\item For every oriented path $\pi$ in $\lambda$ from $x_i$ to $x_j$ with label in $L$, there is an accepting run $\rho$ of $\+A_L$ over $w_\lambda$ from position $i$ to position $j$, 
such that $\pi_\rho=\pi$. 
\end{itemize}

%We also have the following:

%\begin{lemma}
%\label{lemma:AL}
%For each atom $x \xrightarrow{L} y$ appearing in $\Gamma$,  there is a 2DA $\+A_L$

%\end{lemma}

Recall also from Section~\ref{sec:upper-bound} that there is an NFA $\+A_1$ that accepts precisely those words over $\A_1$ that encode some 
expansion of $\Gamma$.  The size of $\+A_1$ is polynomial in $\|\Gamma\|$. 
From $\+A_1$ we can obtain a 2NFA $\+C_1$ that, 
over a word $w$, starts by executing $\+A_1$ until we reach the position $|w|$ and if we reach a final state of $\+A_1$ then we move to the position $0$ and accept, 
\ie, we enter the final state of $\+C_1$. 
Let $\+D_1$ be the DA obtained from $\+C_1$ by setting the cost of all transitions to be $0$. %\sidemiguel{End flag?}
Our A2DA$^\epsilon$ $\+A$ is the concatenation of $\+D_1$ and $\+B$ (defined below), \ie, we add non-costly $\epsilon$-transitions from the unique final state of $\+D_1$ to the initial state of $\+B$.
Note that a word that is not of the form $w_\lambda$ for some expansion $\lambda$ of $\Gamma$ has no accepting run of $\+A$. 
Hence $\+A$ is limited iff it is limited over the words $w_\lambda$'s. 
Moreover, we have $\cost_{\+A}(w_\lambda)=\cost_{\+B}(w_\lambda)$. 
Thus, $\+A$ is limited iff $\+B$ is limited over all words of the form $w_\lambda$. 
We shall construct $\+B$ so the latter condition is equivalent to $\Gamma$ being bounded. 
In particular, for an expansion $\lambda$ of $\Gamma$, let $\lambda_{min}$ be a minimal size expansion of $\Gamma$ 
such that $\lambda_{min}\to \lambda$. We will show that 
\begin{align}
\cost_{\+B}(w_\lambda) \leq \|\lambda_{min}\| \leq g(\cost_{\+B}(w_\lambda)), \, \text{ for every expansion $\lambda$}, \label{eq:goal} 
\end{align}
where $g$ is some non-decreasing function. 
Hence, $\+B$ is limited over all words of the form $w_\lambda$ iff $\|\lambda_{min}\|$ is bounded over all expansions $\lambda$ of $\Gamma$. 
The latter condition is equivalent to boundedness of $\Gamma$ by our characterization of Proposition~\ref{prop:basic}, item (2).

Before defining $\+B$ we need to introduce some notation. 
Let $\gamma$ be a disjunct of $\Gamma$. 
A \emph{connected component} of $\gamma$ is a maximal subquery whose underlying graph is connected. 
Since $\gamma$ is acyclic, 
we can assume that (the underlying graph of) every connected component of $\gamma$ is a rooted tree, 
and we use the usual terminology of trees (parent, children, leaves, $\dots$) over the variables of $\gamma$.
For variables $x,y$ in $\gamma$, we define $\text{Atoms}_\gamma(x,y)$ to be 
the set of atoms of $\gamma$ of the form $x \xrightarrow{L} y$ or $y \xrightarrow{L} x$. 
Suppose $x$ is the parent of $y$ in $\gamma$. 
Without loss of generality, we shall assume that each atom in $\text{Atoms}_\gamma(x,y)$ is of the form $x \xrightarrow{L} y$ (otherwise, we simply ``reverse'' $L$). 
We also assume a fixed enumeration $L_1,\dots, L_\ell$, where $\ell=|\text{Atoms}_\gamma(x,y)|$, of the regular languages labelling the atoms of $\text{Atoms}_\gamma(x,y)$. 
We define $\text{Cuts}_\gamma(x,y)$ to be the set of \emph{cuts} from $x$ to $y$, that is, the set of tuples $(q_1,\dots,q_\ell)$, where 
each $q_i$ is a state of $\+A_{L_i}$. We say that $(q_1,\dots,q_\ell)$ is an \emph{initial cut} if each $q_i$ is the initial state of $\+A_{L_i}$. 
Similarly, we say that $(q_1,\dots,q_\ell)$ is a \emph{final cut} if each $q_i$ is a final state of $\+A_{L_i}$. 
We also define $\text{Cuts}_\gamma=\bigcup\{\text{Cuts}_\gamma(x,y): \text{$x$ parent of $y$ in $\gamma$}\}$ and $\text{Triples}_\gamma=\{(L,q,q'): \text{$L$ appears in $\gamma$ and $q,q'$ states in $\+A_L$}\}$. 

The main idea of the A2DA$^\epsilon$ $\+B$ is similar to the idea behind Section~\ref{sec:upper-bound}: 
for an encoding $w_\lambda$ of an expansion $\lambda$ of $\Gamma$, the automaton $\+B$ tries to map some expansion $\lambda'$ of $\Gamma$ into (the encoding of) $\lambda$. 
In particular, every accepting run $\rho$ of $\+B$ over $w_\lambda$ determines an expansion $\lambda_\rho$ such that $\lambda_\rho\to \lambda$. 
On the other hand, for every expansion $\lambda'$ such that $\lambda'\to \lambda$ there is an accepting run $\rho$ of $\+B$ over $w_\lambda$ with $\lambda_\rho=\lambda'$. 
We will prove that 
\begin{align}
\cost(\rho) \leq \|\lambda_{\rho}\| \leq g(\cost(\rho)), \, \text{ for every accepting run $\rho$ of $\+B$ over $w_\lambda$}, \label{eq:goal2} 
\end{align}
where $g$ is some non-decreasing function. Note then that (\ref{eq:goal2}) implies (\ref{eq:goal}). 

The main difference from the construction of Section~\ref{sec:upper-bound} is that we do not need to annotate first the encoding $w_\lambda$ and then project the annotations. 
Instead, we can exploit the acyclicity of $\Gamma$ to try to map directly an expansion of $\Gamma$ to the input expansion $\lambda$. 
To do this, $\+B$ starts by choosing a disjunct $\gamma$ of $\Gamma$. Then $\+B$ applies universal transitions to map all the connected components of $\gamma$. 
Each component is mapped in a top-down fashion starting from the root to the leaves. 
Once some variable $x$ is already mapped to some variable $h(x)$ of $\lambda$, or more precisely, 
to some head position $j_x\in \{0,\dots,|w_\lambda|\}$ such that $j_x>0$ and $w_\lambda[j_x]$ represents the variable $h(x)$,
 then $\+B$ applies universal transitions to what we call the \emph{axes} of $x$. 
An axis of $x$ is either an atom $x\xrightarrow{L}x\in \text{Atoms}_\gamma(x,x)$ or a child of $x$ in $\gamma$. 
We need to extend our mapping to every axis of $x$. 

By definition of the 2DA $\+A_L$, the mappings of an axis $x\xrightarrow{L}x\in \text{Atoms}_\gamma(x,x)$ correspond to the accepting runs of $\+A_L$ over $w_\lambda$ 
starting and ending at position $j_x$. In order to find these looping accepting runs of $\+A_L$, we use a similar idea as in the reduction (3) in Theorem~\ref{thm:a2da-limitedness-pspace} 
from 2DA to ADA$^{\epsilon}$. We have that every accepting run  of $\+A_L$ over $w_\lambda$ 
starting and ending at position $j_x$, can be divided into a set of rightward looping partial runs and a set of leftward looping partial runs. 
Following the terminology of reduction (3), we can represent rightward and leftward looping partial runs by \emph{rightward} and \emph{leftward zig-zag trees}, respectively, 
which are rooted trees of height at most $|w_\lambda|$ where each edge has a cost in $\{0,1,2\}$ and each node is labeled with $(L,q,q')$ for a pair of states $q,q'$ in $\+A_L$. 
In a rightward \resp{leftward} zig-zag tree, each level $\ell$ from the root to the leaves corresponds to the position $j_x+\ell$ in $\{0,\dots,|w_\lambda|\}$ \resp{$j_x-\ell$}; \cf, Figure~\ref{fig:2way-as-tree}. 
If $u$ has only one child $v$ in a rightward zig-zag tree (the leftward case is analogous) and their labels are $(L,q,p)$ and $(L,q',p')$, respectively, 
then there must be transitions of the form $(q,a,c_1,q')$ and $(p',a^{-1},c_2,p)$ in $\+A_L$ such that $a=w_\lambda[j_x+\ell_v]$, where $\ell_v$ is the level of $v$ (\ie, its distance to the root). 
The cost of the edge $\{u,v\}$ is then $c_1+c_2$. 
If $u$ has children $v_1,\dots,v_r$, with $r\geq 2$, and the label of $u$ is $(L,q,p)$ then the labels of $v_1,\dots,v_r$ must be $(L,q_0,q_1), (L,q_1,q_2),\dots,(L,q_{r-1},q_r)$, respectively, 
and there must be transitions $(q,a,c_1,q_0)$ and $(q_r,a^{-1},c_2,p)$ of $\+A_L$ such that $a=w_\lambda[j_x+\ell]$, where $\ell$ is the level of the $v_i$'s. 
The cost of $\{u,v_i\}$ is $c_1$ for $i=1$, $c_2$ for $i=r$, and $0$ for $1<i<r$. 
Finally, every leaf in a rightward or leftward zig-zag tree has a label of the form $(L,q,q)$, for some state $q$ in $\+A_L$.

%\sidemiguel{Mas detalles de zig-zag trees?}
%If $v$ is the only child of $u$ in a rightward (resp. leftward) zig-zag tree and the labels of $u$ and $v$ are $(L,q,p)$ and $(L,q',p')$, respectively, then $(q,a,c_1,q')$ and $()$
%Note that reduction (3) only deals with rightward loops (see Figure~\ref{fig:2way-as-tree}). 
In order to map $x\xrightarrow{L}x$, the automaton $\+B$ chooses a final state $q'$ of $\+A_L$ and enters a state $(L,q,q')^\epsilon$, where $q$ is the initial state of $\+A_L$. 
From there, $\+B$ can spawn threads starting from states $(L,q,p,s)$  and $(L,p,q')^\epsilon$, where $s\in \{\text{right},\text{left}\}$ and the state $(L,q,p,s)$ indicates 
that we are looking for a $s$-ward partial run $\rho$ from $q$ to $p$ starting and ending at $j_x$. As in the reduction (3), 
this is done by exploiting alternation to guess the zig-zag tree of $\rho$ in a top-down manner. 
In order to compute the cost of an accepting run correctly, as in reduction (3), the states of $\+B$ of the form $(L,q,p,s)$ are enhanced with a number $c\in \{0,1\}$ (and denoted by $[(L,q,p,s)]_c$), which 
indicates whether there is a costly transition in the looping partial run from $q$ to $p$ (or equivalently, in the zig-zag subtree rooted at $(L,q,p)$). 

For an axis corresponding to a child $y$ of $x$, we need to map simultaneously all the atoms $x\xrightarrow{L_1}y$, $\dots$, $x\xrightarrow{L_\ell}y$ $\in\text{Atoms}_\gamma(x,y)$. 
The idea is first to choose whether $y$ is mapped to the right or to the left of $j_x$, and then moving in the chosen direction guessing a sequence of cuts $D_0,\dots,D_n\in \text{Cuts}_\gamma(x,y)$, 
where $D_0$ is an initial cut and $D_n$ is final. These are represented by $\+B$ via states $(D_i,s)$, where $s\in \{\text{right},\text{left}\}$. 
A transition from $D_i$ to $D_{i+1}$ can either consume a symbol from $w_\lambda$ or be an $\epsilon$-transition 
that modifies only one coordinate of $D_i$, say $j$, and produces $D_{i+1}$. When the latter happen, $\+B$ spawn threads starting in states $(D_{i+1},s)$ and $(L_j,q,q',s')$, 
where $q$ and $q'$ are the $j$-th coordinates of $D_i$ and $D_{i+1}$, respectively, and $s'\in \{\text{right},\text{left}\}$ is some direction. 
The intuition is that we choose to do an asynchronous mapping only for $\+A_{L_j}$ in the form of a $s'$-ward looping partial run from $q$ to $q'$. 
This is needed as the mappings of the atoms $x\xrightarrow{L_j}y$ into $w_\lambda$ can be very different from each other. 

Observe that we can represent mappings of $\{x\xrightarrow{L_1}y$, $\dots$, $x\xrightarrow{L_\ell}y\}$ $=\text{Atoms}_\gamma(x,y)$ into $w_\lambda$, where $x$ and $y$ are mapped to $j_x$ and $j_y$ 
(we assume $j_y\geq j_x$; the case $j_y\leq j_x$ is analogous), respectively, 
 as \emph{rightward special trees} from $j_x$ to $j_y$ (or leftward in the case $j_y\leq j_x$), 
which are rooted trees with costs in the edges obtained from a
special rooted path $B$ with $j_y-j_x+1$ nodes, each of which is labeled with a pair of cuts $(D,D')$ in $\text{Cuts}_\gamma(x,y)$, by attaching some zig-zag trees to each node as explained below. 
If $(D,D')$ is the label of the root, \ie, the first node of $B$, then $D$ is an initial cut, 
and if $(D,D')$ is the label of the last node of $B$ then $D'$ is a final cut. If $\{u,v\}$ is the $r$-th edge of $B$ with $r\in \{1,\dots,j_y-j_x\}$, and $(D,D')$ and $(E,E')$ are the labels of $u$ and $v$ respectively, 
then for every coordinate $i\in \{1,\dots,\ell\}$, there is a transition in $\+A_{L_i}$ of the form $(q,a,c_i,q')$, where $q$ and $q'$ are the $i$-th coordinates of $D'$ and $E$, respectively, 
and $a=w_\lambda[j_x+r]$. The cost of the edge $\{u,v\}$ is $c_1+\dots+c_\ell\in \{0,\dots,\ell\}$. 
In a rightward special tree we also have that each node $u$ in the special rooted path $B$ is associated with a set of disjoint rightward or leftward zig-zag trees whose roots have been identified with $u$. 
In particular, if $u$ has label $(D,D')$ and $q_i$ and $q'_i$ are the $i$-th coordinates of $D$ and $D'$, respectively, then for every $i\in \{1,\dots, \ell\}$ there is a sequence 
$p_i^0, p_i^1,\dots,p_i^{n_i}$ such that $p_i^0=q_i$, $p_i^{n_i}=q'_i$ and the set of labels of the roots of the rightward or leftward zig-zag tree associated with $u$ (before being identified with $u$) 
is precisely $\{(L_i,p_i^j, p_i^{j+1}): i\in \{1,\dots,\ell\}, j\in \{0,\dots,n_i-1\}\}$. 
Observe that the working of $\+B$ explained in the previous paragraph for mapping $x\xrightarrow{L_1}y$, $\dots$, $x\xrightarrow{L_\ell}y$ $\in\text{Atoms}_\gamma(x,y)$ 
into $w_\lambda$ can be seen as using the power of alternation to guess a rightward or leftward special tree from the current position $j_x$ to some position $j_y$. 
Again, in order to compute the cost correctly, we need to consider states of the form $[(D,s)]_c$, for $c\in\{0,1\}$, instead of $(D,s)$. 

Now we are ready to formally define the A2DA$^\epsilon$ $\+B$. We shall use transitions of the form $(q,a,c,q')$ where $c\in \{0,\dots, k\}$.
Note that this is not a problem as they can be simulated using $\epsilon$-transitions. 
Moreover, the automaton $\+B$ will not need to move its head to the leftmost and rightmost positions (as these never represent a variable of the input expansion), so almost all of its transitions will be $\nend$-flagged. 
Hence, for a transition, we will write $(q,a,c,q')$ instead of $(q,a,\nend,c,q')$. 
We define $\+B=(\A_1, Q_\exists, Q_\forall, q_0,F,\delta)$, where %\sidemiguel{End flag? it is necessary? Faltan las variables libres}
\begin{itemize}
\item $Q_\exists=\{q_0, q_f\}\,\cup\,  \bigcup\{Q_\exists^\gamma: \text{$\gamma$ disjunct of $\Gamma$}\}$. 
For a disjunct $\gamma$ of $\Gamma$, we define 
$$Q_\exists^\gamma=((\text{Cuts}_\gamma\cup \text{Triples}_\gamma)\times\{\text{right,left}\} \cup \text{Triples}_\gamma^{\epsilon})\times\{0,1\}\, \cup {\text{Axes}}_\gamma \cup \text{Roots}_\gamma,$$ 
where $\text{Triples}_\gamma^{\epsilon}:=\{t^\epsilon: t\in \text{Triples}_\gamma\}$, 
$${\text{Axes}}_\gamma:=\{x_y: \text{$x$ is the parent of $y$ in $\gamma$}\}\cup\{x_{A}: \text{$x$ is in $\gamma$, $A\in \text{Atoms}_{\gamma}(x,x)$}\},$$
and $\text{Roots}_\gamma:=\{r_{\text{init}}: \text{$r$ is the root of some connected component of $\gamma$}\}$.  
We shall write $[\alpha]_c$ for $(\alpha,c)\in Q_\exists^\gamma\setminus ({\text{Axes}}_\gamma \cup \text{Roots}_\gamma)$. 
\item $Q_\forall=\bigcup\{\{q_0^\gamma\}\, \cup\, Q_\forall^\gamma: \text{$\gamma$ disjunct of $\Gamma$}\}\,\cup\,\{q_{\text{nf}}\}$. 
For a disjunct $\gamma$ of $\Gamma$, we define 
$Q_\forall^\gamma={\cal V}_\gamma \cup \{[\alpha_1]_{c_1}\land [\alpha_2]_{c_2}: [\alpha_1]_{c_1}\in Q_\exists^\gamma\setminus ({\text{Axes}}_\gamma \cup \text{Roots}_\gamma), [\alpha_2]_{c_2}\in \text{Triples}_\gamma\times \{\text{right},\text{left}\}\times\{0,1\} \}$. 
%\begin{align*}
%Q_\forall^\gamma= \, &\{[\alpha_1]_{c_1}\land [\alpha_2]_{c_2}: [\alpha_1]_{c_1}\in Q_\exists^\gamma, [\alpha_2]_{c_2}\in \text{Triples}_\gamma\times\{0,1\} \}  \\
%&\cup\, \{[D_1]_{c_1}\land\cdots\land[D_r]_{c_r}: \text{$x$ has children $x_1,\dots, x_r$ in $\gamma$}, \\
%&\qquad\qquad\qquad \qquad \qquad\text{$[D_i]_{c_i}\in \text{Cuts}_\gamma\times\{0,1\}$, $D_i\in \text{Cuts}_\gamma(x,x_i)$ is initial cut}\}.
%\end{align*}
\item $F=\{q_f\}\cup \bigcup\{[(L,q,q,s)]_0, [(L,q,q)^\epsilon]_0 : \text{$\gamma$ disjunct of $\Gamma$}, (L,q,q)\in \text{Triples}_\gamma, s\in\{\text{right},\text{left}\}\}$. 
\item $\delta$ is the smallest set verifying:
\begin{enumerate}
\item $(q_0,\epsilon, \yend,0, q_0^\gamma)\in \delta$, for every disjunct $\gamma$ of $\Gamma$, so we can choose the disjunct of $\Gamma$ to be mapped into the input expansion. 
\item $(q_0^\gamma, \epsilon, \yend,0, r_{\text{init}})\in \delta$, for every disjunct $\gamma$ of $\Gamma$ and every root $r$ of a connected component of $\gamma$. 
For the disjunct $\gamma$ to be mapped, we need all of its connected components to be mapped. 
\item For every disjunct $\gamma$ of $\Gamma$, every root $r$ of a connected component of $\gamma$, and every $a\in \A_1^{\pm}$, we have
 $\{(r_{\text{init}}, a, 0, r_{\text{init}}), (r_{\text{init}}, \epsilon, 0, r)\}\subseteq \delta$. With these transitions we can choose the position where the root $r$ should be mapped. 
\item $(x,\epsilon, 0, x_t)\in \delta$, for every disjunct $\gamma$ of $\Gamma$, every $x$ in $\gamma$ and every $x_t\in \text{Axes}_\gamma$. Once we mapped $x$ into the expansion, 
we need to map all of its subtrees and all atoms in $\text{Atoms}_\gamma(x,x)$. 
\item For every $\gamma$ in $\Gamma$, and every free variable $x$ in $\gamma$, we have $(x,x^{-1}, 0, q_f)\in \delta$ and $(x,b^{-1}, 0, q_{\text{nf}})\in \delta$, for every $b\in \A_1\setminus \{x\}$. This ensures that $x$ is always mapped to itself. 
\item For every $\gamma$ in $\Gamma$ and $x_t\in \text{Axes}_\gamma$, 
\begin{itemize}
\item If $t=y$ (and hence $x$ is the parent of $y$ in $\gamma$), then $(x_t, \epsilon, 0, [(D,s)]_c)\in \delta$, for every $c\in \{0,1\}$, c and every initial cut $D\in \text{Cuts}_\gamma(x,y)$.
\item If $t=x \xrightarrow{L} x$, then $(x_t, \epsilon, 0, [(L,q,q')^\epsilon]_c)\in \delta$, where $q$ is the initial state of $\+A_L$, every final state $q'$ of $\+A_L$, and every $c\in \{0,1\}$. 
\end{itemize}
These transitions allow us to start looking for a mapping of each axis of $x$ into the expansion.   
\item For every $\gamma$ in $\Gamma$, 
\begin{itemize}
\item $([(D,s)]_{\max(c_1,c_2)}, \epsilon, 0, [(D',s)]_{c_1}\land [(L,q,q',s')]_{c_2})\in \delta$, for every $c_1,c_2\in \{0,1\}$, $s,s'\in \{\text{right},\text{left}\}$ and every $D,D'\in \text{Cuts}_\gamma(x,y)$, for some $x,y$, 
such that $D=(q_1,\dots,q_\ell)$, $D'=(q'_1,\dots,q'_\ell)$, there is $j$ such that $q_i=q'_i$, for all $i\in \{1,\dots, \ell\}\setminus\{j\}$, and 
$(L,q,q')=(L_j,q_j,q_j')$, where $L_j$ is the $j$-th language mentioned in $\text{Atoms}_\gamma(x,y)$.  With these transitions we guess that a looping $s'$-ward partial run of $\+A_L$ from $q$ to $q'$ should be mapped to the input expansion. 
In terms of special trees, these transitions allow us to add new subtrees to a node in the special path of the $s$-ward special tree. 
\item  $([(L,q,q')^\epsilon]_{\max(c_1,c_2)}, \epsilon, 0, [(L,p,q')^\epsilon]_{c_1}\land [(L,q,p,s)]_{c_2})\in \delta$, for every $c_1,c_2\in \{0,1\}$, $s\in\{\text{right},\text{left}\}$ 
and $q,q',p$ states in $\+A_L$. With these transitions we can guess that a looping $s$-ward partial run of $\+A_L$ from $q$ to $p$ should be mapped to the input.   
\item  $([(L,q,q',s)]_{\max(c_1,c_2)}, \epsilon, 0, [(L,q,p,s)]_{c_1}\land [(L,p,q',s)]_{c_2})\in \delta$, for every $c_1,c_2\in \{0,1\}$, $s\in\{\text{right},\text{left}\}$ and $q,q',p$ states in $\+A_L$. 
We reduce the search for the looping $s$-ward partial run of $\+A_L$ from $q$ to $q'$, to look for loopings $s$-ward partial runs from $q$ to $p$ and from $p$ to $q'$. 
In terms of zig-zag trees, these transitions allow us to add a new subtree to the $s$-ward zig zag tree. 
\end{itemize}
\item For every $\gamma$ in $\Gamma$, we have $\{([\alpha_1]_{c_1}\land [\alpha_2]_{c_2}, \epsilon, c_2, [\alpha_1]_{c_1}), ([\alpha_1]_{c_1}\land [\alpha_2]_{c_2}, \epsilon, c_1, [\alpha_2]_{c_2})\}\subseteq \delta$, 
for every $[\alpha_1]_{c_1}\in Q_\exists^\gamma\setminus ({\text{Axes}}_\gamma \cup \text{Roots}_\gamma)$ and $[\alpha_2]_{c_2}\in \text{Triples}_\gamma\times \{\text{right},\text{left}\}\times\{0,1\}$. 
\item For every $\gamma$ in $\Gamma$, $a\in \A_1^\pm=\A_1\dcup\A_1^{-1}$, $c\in \{0,1\}$ and $s\in\{\text{right},\text{left}\}$, 
\begin{itemize}
\item $([(L,q,p,s)]_{\max\{c,d\}}, a, d, [(L,q',p',s)]_c)\in \delta$, if there exist transitions $(q,a,c_1,q')$ and $(p',a^{-1},c_2,p)$ in $\+A_L$ such that $d=\max\{c_1,c_2\}$; and $s=\text{right} \Leftrightarrow a\in \A_1$.  
We reduce the search for the looping $s$-ward partial run of $\+A_L$ from $q$ to $p$, to look for a looping $s$-ward partial run from $q'$ to $p'$. 
%In terms of zig-zag trees, these transitions correspond to traverse one edge of the $s$-ward zig zag tree. 
\item For every $D=(q_1,\dots,q_\ell)$ and $D'=(q'_1,\dots,q'_\ell)$ in $\text{Cuts}_\gamma(x,y)$, for some variables $x,y$, 
 we have $([(D,s)]_{\max\{c,d\}}, a, d, [(D',s)]_c)\in \delta$, if there are transitions $(q_1,a,c_1,q_1')$, $\dots$, $(q_\ell,a,c_\ell,q'_\ell)$ in $\+A_{L_1}, \dots, \+A_{L_\ell}$, respectively, 
 where $L_i$ is the $i$-th language mentioned in $\text{Atoms}_\gamma(x,y)$, such that $d=\max\{c_1,\dots,c_\ell\}$; and $s=\text{right} \Leftrightarrow a\in \A_1$.  
 These transitions correspond to a simultaneous mapping of all the atoms in $\text{Atoms}_\gamma(x,y)$ to the input expansion. 
 In terms of special trees, these correspond to traverse one edge of the special path. 
\end{itemize}
\item For $\gamma$ in $\Gamma$ and every final cut $D\in \text{Cuts}_\gamma(x,y)$, for some variables $x,y$ in $\gamma$, and $s\in\{\text{right},\text{left}\}$, 
we have $([(D,s)]_0, \epsilon, 0, y)\in \delta$. 
\item Finally, $(x,\epsilon, 0, q_f)\in \delta$, for every $\gamma$ in $\Gamma$ and $x$ an existentially quantified variable that is a leaf in $\gamma$ with $\text{Atoms}_\gamma(x,x)=\emptyset$.   
\end{enumerate}
\end{itemize}

\begin{lemma}
$\Gamma$ is bounded iff $\+B$ is limited.
\end{lemma}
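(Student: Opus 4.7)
The plan is to establish the chain of implications sketched in the main text, namely that for every expansion $\lambda$ of $\Gamma$ and every accepting run $\rho$ of $\+B$ on $w_\lambda$, there is an associated expansion $\lambda_\rho$ of $\Gamma$ with $\lambda_\rho \to \lambda$, and conversely every such expansion $\lambda'$ arises as $\lambda_\rho$ for some accepting run $\rho$. From this we would derive the bounds
\[
\cost(\rho)\ \leq\ \|\lambda_\rho\|\ \leq\ g(\cost(\rho))
\]
for some non-decreasing $g$ depending only on $\Gamma$, which yields \eqref{eq:goal} and hence the desired equivalence via Proposition~\ref{prop:basic}(2).

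First I would define, for a given accepting run $\rho$ of $\+B$ on some $w_\lambda$, the associated expansion $\lambda_\rho$. This is done by walking top-down through $\rho$: the transition from $q_0$ chooses a disjunct $\gamma$ (item~1 of $\delta$), then $\rho$ branches through universal transitions over the roots (items 2--3), which fix a mapping $h$ of each root of $\gamma$ to a head-position in $w_\lambda$ representing a variable of $\lambda$. The further tree structure of $\rho$ processes each axis $x_y$ or $x_A \in \text{Axes}_\gamma$ (item~4); the subruns from states of the form $[(D,s)]_c$ reconstruct, by induction on the $\nend$-flagged letter-reading transitions of item~9, a \emph{special tree} witnessing a simultaneous mapping into $w_\lambda$ of all atoms in $\text{Atoms}_\gamma(x,y)$; and those from states $[(L,q,p,s)]_c$ or $[(L,q,q')^\epsilon]_c$ reconstruct zig-zag trees as in reduction~(3) of Theorem~\ref{thm:a2da-limitedness-pspace}, yielding accepting runs of each $\+A_L$ over $w_\lambda$ on the appropriate head positions. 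By the defining property of the $\+A_L$'s, each such accepting run yields an oriented path in $\lambda$ with label in $L$. Gluing these pieces gives the sought expansion $\lambda_\rho$ of $\gamma$ together with a homomorphism $\lambda_\rho \to \lambda$.

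Next I would prove the cost bounds. For the lower bound $\cost(\rho) \leq \|\lambda_\rho\|$, observe that costly transitions of $\+B$ occur only in item~9 (the transitions that consume letters while crossing the zig-zags or the special path), and each such transition faithfully reflects at most $k$ atoms being laid out in $\lambda_\rho$ at that step (where $k$ is the thickness bound). A careful accounting along each branch of $\rho$, mirroring the argument used in reduction (3) of Theorem~\ref{thm:a2da-limitedness-pspace} for zig-zag trees but upgraded to special trees, shows that $\cost(\rho)$ is bounded by the total number of atoms of $\lambda_\rho$. For the upper bound $\|\lambda_\rho\| \leq g(\cost(\rho))$, the analogue of the claim ``$\cost(\rho) \leq 3k^{f(\rho)}$'' from the 2DA-to-ADA reduction applies to both zig-zag trees and special trees: the maximum arity in such trees is bounded in terms of $|Q_{\+A_L}|$ and the thickness $k$ of $\Gamma$, and an analysis of the heaviest branch yields an exponential upper bound on the size in terms of the cost plus the number of heavy branchings. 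Here is where acyclicity and bounded thickness are essential: acyclicity ensures that the trees-of-trees structure of the run is genuinely tree-shaped (no cycles would require unbounded recursion), and bounded thickness keeps both the arity and the ``width'' of the cuts polynomial so that the resulting $g$ depends only on $\|\Gamma\|$.

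For the converse direction, I would show that for every expansion $\lambda'$ of $\Gamma$ with $\lambda' \to \lambda$ (say via a homomorphism $h$), one can synthesise an accepting run $\rho$ of $\+B$ on $w_\lambda$ with $\lambda_\rho = \lambda'$ by inverting the above construction: $h$ determines where to send each root and how to route each atom $x \xrightarrow{L} y$ of the underlying disjunct through $w_\lambda$, and these routings directly produce the required zig-zag trees (for self-loop axes) and special trees (for child axes). Combining both directions gives $\cost_{\+B}(w_\lambda) \leq \|\lambda_{\min}\| \leq g(\cost_{\+B}(w_\lambda))$ as in \eqref{eq:goal}. Hence $\+B$ is limited iff $\|\lambda_{\min}\|$ is uniformly bounded over all expansions $\lambda$ of $\Gamma$, which by Proposition~\ref{prop:basic}(2) is equivalent to boundedness of $\Gamma$. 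The main obstacle is the correctness of the special-tree/zig-zag-tree decomposition together with the cost accounting: one must check very carefully that the alternations in items~6--8 of $\delta$ combined with the letter-reading transitions of item~9 reproduce \emph{exactly} the runs of each $\+A_L$ that witness oriented paths in $\lambda$, and that the cost tally in $\+B$ matches the number of atoms added to $\lambda_\rho$ up to the stated polynomial/exponential slack.
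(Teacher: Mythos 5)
Your proposal follows essentially the same route as the paper's proof: the bijective correspondence between accepting runs of $\+B$ and homomorphic preimages $\lambda'\to\lambda$, the two-sided inequality $\cost(\rho)\leq\|\lambda_\rho\|\leq g(\cost(\rho))$ established by decomposing the run into zig-zag and special trees, the cost-plus-heavy-branchings function and heaviest-branch analysis imported from reduction (3), and the conclusion via Proposition~\ref{prop:basic}(2). The argument is correct, though stated at the level of a plan; the paper fills in the same steps with explicit constants (arity bound $r$ and the bound $\cost(t)\leq(2k+3)r^{f(t)}$).
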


\begin{proof}
Note first that every accepting run $\rho$ of $\+B$ on $w_\lambda$, for an expansion $\lambda$ of $\Gamma$, determines a disjunct $\gamma_\rho$ of $\Gamma$, an expansion $\lambda_\rho$ of $\gamma_\rho$ 
and a homomorphism $h_\rho$ witnessing $\lambda_\rho\to \lambda$. 
Conversely, for every disjunct $\gamma'$ of $\Gamma$, every expansion $\lambda'$ of $\gamma'$ and homomorphism $h'$
witnessing $\lambda'\to \lambda$, there is an accepting run $\rho$ of $\+B$ on $w_\lambda$ with $\gamma_\rho=\gamma'$, $\lambda_\rho=\lambda'$ and $h_\rho=h'$. 
Hence, it suffices to show the above-mentioned condition (\ref{eq:goal2}), \ie, 
$$\cost(\rho) \leq \|\lambda_{\rho}\| \leq g(\cost(\rho)), \, \text{ for every accepting run $\rho$ of $\+B$ over $w_\lambda$},\qquad\qquad (\ref{eq:goal2})$$
for some non-decreasing function $g$.
In order to show this, we follow an argument similar to the one of Lemma~\ref{lem:}. 
Note that every accepting run $\rho$ of $\+B$ on $w_\lambda$, determines a collection $C_\rho=Z_\rho\cup S_\rho$, 
such that $Z_\rho=\bigcup\{A_\rho: \text{$A\in \text{Atoms}_{\gamma_\rho}(x,x)$ for some variable $x$ in $\gamma_\rho$}\}$, 
where $A_\rho$ is a collection of rightward and leftward zig-zag trees; and $S_\rho=\{t_\rho^{x,y}: \text{$x$ is the parent of $y$ in $\gamma_\rho$}\}$, 
where $t_\rho^{x,y}$ is either a rightward or leftward special tree.

Let $t$ be a (rightward or leftward) zig-zag or special tree. We write $\cost(t)$ for the sum of all the costs, over all edges of $t$.  
For a branch $B$ of $t$, a \emph{heavy branching} of $B$ is a subtree attached to $B$ that has at least one edge with cost $>0$. 
We define $f(t)$ to be the maximum over all branches $B$ of $t$, of the cost of $B$ (\ie, the sum of the costs of the edges of $B$) plus the number of heavy branchings of $B$. 
Note that $f(t)\leq \cost(t)$. 
Now let $\rho$ be an accepting run of $\+B$ over $w_\lambda$. 
We have that $\|\lambda_\rho\|=\sum_{t\in C_\rho} \cost(t)$. 
By construction of $\+B$, we have that $\cost(\rho)\leq \sum_{t\in C_\rho} f(t)$. Then, 
$$ \cost(\rho)\leq \sum_{t\in C_\rho} f(t)\leq \sum_{t\in C_\rho} \cost(t)=\|\lambda_\rho\|,$$
which proves one of the directions of condition (\ref{eq:goal2}). 

For the other direction, 
for every $t\in C_\rho$, we have that $f(t)\leq r\cdot \cost(\rho)$, where $r:=1+\sum\{|Q_L|^4:\text{$L$ appearing in $\Gamma$}\}$, where $Q_L$ is the statespace of $\+A_L$ and $r$ is an upper bound for the maximum arity of 
any zig-zag or special tree. Also, for every $t\in C_\rho$, we have $\cost(t)\leq (2k+3)r^{f(t)}$. (Recall that $k$ is the upper bound on the thickness of $\Gamma$.)
Indeed, consider the heaviest branch $B$ of $t$ (\ie, the result of traversing $t$ from the root by always choosing a child whose subtree has maximal total cost). 
We can partition the edges of $B$ into $E_1$ and $E_2$ such that $E_1$ are the edges that do not decrease the total cost of the current subtree and $E_2$ the ones that do. 
Let $n:=\cost(t)$ be the initial total cost of $t$. 
Note that each edge in $E_2$ decreases the total cost of the current subtree from $n'$ to no less than $\frac{n'}{r}-(k+1)$ (note that $k+1$ is an upper bound for the cost of any edge in any zig-zag or special tree).  
 We have that $|E_2|\geq\max\{\ell\in \N: \frac{n}{r^\ell} - (k+1)\sum_{i=0}^{\ell-1} \frac{1}{r^i} \geq 1\}\geq \max\{\ell\in \N: \frac{n}{r^\ell} - 2(k+1) \geq 1\}\geq \log_r{\frac{n}{2(k+1)+1}}$. 
The claim follows since $f(t)\geq |E_2|$ (as each edge in $E_2$ either has cost $>0$ or has a heavy branching). 
Summing up, we have that 
$$\|\lambda_\rho\|=\sum_{t\in C_\rho} \cost(t)\leq \sum_{t\in C_\rho} (2k+3)r^{f(t)}\leq \sum_{t\in C_\rho} (2k+3)r^{r\cdot \cost(\rho)}\leq r\cdot N_\Gamma (2k+3)r^{r\cdot \cost(\rho)},$$
where $N_\Gamma$ is the number of atoms of $\Gamma$ (note that $r\cdot N_\Gamma$ is then an upper bound to $|C_\rho|$). 
This shows the remaining direction of condition (\ref{eq:goal2}), and hence the lemma. 
\end{proof}

Finally, note that the number of states of $\+B$ is polynomial in $\|\Gamma\|$, and hence $\+B$ can be constructed in polynomial time. 
Indeed, the crucial part is to bound $|\text{Cuts}_\gamma(x,y)|$ for any disjunct $\gamma$ and variables $x,y$ in $\gamma$. 
Since the thickness of $\Gamma$ is $\leq k$, we have that $|\text{Cuts}_\gamma(x,y)|\leq \|\Gamma\|^k$. This finishes the proof of Theorem~\ref{theo:strongly-acyclic}.

\paragraph*{Proof of Theorem~\ref{theo:strongly-connected}}

We start with the $\Pi_2^P$ upper bound. Let $\Gamma$ be a strongly connected UCRPQ. 
Let $\gamma(\bar x)$ be a disjunct of $\Gamma$. We define $\gamma^{<\infty}(\bar x)$ 
to be the CRPQ obtained from $\gamma(\bar x)$ by adding an atom $x\xrightarrow{\epsilon} x$, for each free variable $x$ in $\bar x$, 
and removing all atoms $y\xrightarrow{L} z$ such that $L$ is infinite. 
Note that $\gamma^{<\infty}(\bar x)$ could be not well-defined. (This happens precisely when $\gamma$ is Boolean and all of its RPQs are infinite.)
We define $\Gamma^{<\infty}:=\bigvee\{\gamma^{<\infty}: \text{$\gamma$ in $\Gamma$, and $\gamma^{<\infty}$ is well-defined}\}$. 
If every $\gamma^{<\infty}$ is not well-defined, then $\Gamma^{<\infty}$ is not well-defined neither. 
We say that a UCRPQ $\Gamma$ is \emph{$\epsilon$-trivial} if it has at most one free variable, and 
there is a CRPQ $\gamma$ in $\Gamma$ such that all of its RPQs contain the empty word $\epsilon$. 
Note that an $\epsilon$-trivial UCRPQ is always bounded. 
We have the following:

\begin{lemma}
\label{lemma:char-connected}
A strongly connected UCRPQ $\Gamma$ is bounded iff $\Gamma$ is $\epsilon$-trivial or,  $\Gamma^{<\infty}$ is well-defined and $\Gamma^{<\infty}\subseteq \Gamma$. 
\end{lemma}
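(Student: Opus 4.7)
The plan is to prove the biconditional in two directions, both rooted in the syntactic characterization of boundedness from Proposition~\ref{prop:basic}.

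For the $(\Leftarrow)$ direction, the $\epsilon$-trivial case is handled by an explicit collapsing construction: let $\gamma_0\in\Gamma$ be the witnessing disjunct and let $\lambda_0$ be the expansion of $\gamma_0$ obtained by choosing the empty word for every atom. Modulo the induced equalities, $\lambda_0$ collapses to a single equivalence class of variables (since $\gamma_0$ has at most one free variable), and hence maps homomorphically into every non-empty graph database, in particular into every expansion of $\Gamma$; by Proposition~\ref{prop:basic}(2) this gives boundedness. For the second case, one observes that $\gamma\subseteq\gamma^{<\infty}$ for each disjunct $\gamma$ of $\Gamma$ whose $\gamma^{<\infty}$ is well-defined, since $\gamma^{<\infty}$ has a subset of $\gamma$'s atoms and the added self-loops $x\xrightarrow{\epsilon}x$ are vacuous; hence $\Gamma\subseteq\Gamma^{<\infty}$, which together with the hypothesis $\Gamma^{<\infty}\subseteq\Gamma$ yields $\Gamma\equiv\Gamma^{<\infty}$. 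Since $\Gamma^{<\infty}$ contains only finite-language RPQs, it unrolls to a finite UCQ and $\Gamma$ is bounded.

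For the $(\Rightarrow)$ direction, I argue the contrapositive: assuming $\Gamma$ is not $\epsilon$-trivial and either $\Gamma^{<\infty}$ is not well-defined or $\Gamma^{<\infty}\not\subseteq\Gamma$, I exhibit an infinite family of pairwise homomorphically incomparable minimal expansions of $\Gamma$ (whence unboundedness via Proposition~\ref{prop:basic}(3)). When $\Gamma^{<\infty}$ is not well-defined, some disjunct $\gamma$ is Boolean with every RPQ infinite and not all containing $\epsilon$; strong connectivity of $\gamma$ forces each expansion to realise a cyclic blow-up of the schema, and pumping the infinite languages yields cyclic expansions whose total lengths can be chosen to be arbitrarily large primes, which are pairwise incomparable by divisibility obstructions analogous to those underlying Corollary~\ref{coro:comp-rpq}. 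When instead $\Gamma^{<\infty}\not\subseteq\Gamma$, Lemma~\ref{lemma:cont-crpq} provides an expansion $\lambda^{<\infty}$ of $\Gamma^{<\infty}$ not homomorphically covered by any expansion of $\Gamma$; substituting longer and longer pumped words along the removed infinite-language atoms yields a sequence $\lambda_n$ of expansions of $\Gamma$ in which any cover must realise the full pumped length, so that their minimum cover-sizes grow without bound.

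The main obstacle will be the $\Gamma^{<\infty}\not\subseteq\Gamma$ subcase of necessity: one has to lift a purely semantic non-containment into a structural minimality statement. Strong connectivity is essential here, since it forces every expansion of $\Gamma$ to realise the full cyclic skeleton of its disjunct; pumping along an infinite atom thus inflates the whole cyclic structure, and length-divisibility obstructions inherited from the RPQ analysis preclude any smaller expansion from covering the pumped witness. Correctly formalising this obstruction and combining it with the $\lambda^{<\infty}$ counterexample is the principal technical content of the proof.
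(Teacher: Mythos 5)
Your overall architecture matches the paper's: the backward direction via $\Gamma\equiv\Gamma^{<\infty}$ (plus the trivial $\epsilon$-trivial case) is fine, and for the forward direction the idea of pumping the infinite-language atoms to words longer than the purported bound is exactly the right move. But the step you yourself flag as ``the principal technical content'' --- why no small expansion can cover the pumped witness --- is where your proposal goes wrong. The obstruction is \emph{not} a length-divisibility or primality phenomenon. Divisibility governs homomorphisms between single directed cycles, but an expansion of a strongly connected CRPQ is in general a union of fresh oriented paths glued at the variables of $\gamma$, not a cycle, so choosing ``arbitrarily large prime'' total lengths neither is generally possible (the lengths are constrained sums of word lengths from the various $L_i$) nor yields any incomparability. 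The argument that actually works, and that the paper uses in both subcases, is this: if $\lambda'$ is an expansion of a strongly connected CRPQ with $\|\lambda'\|\leq k$, then every non-equality atom of $\lambda'$ lies on a directed cycle of $\lambda'$ of length at most $k$ (close the expansion path of its atom using a directed path of $\gamma'$ back to its source and expand); a homomorphism maps this cycle to a nontrivial closed directed walk of length at most $k$ in the target; and in the pumped expansion $\lambda_{>k}$ every nontrivial closed directed walk that touches a pumped atom must traverse an entire fresh path of length $>k$, since intermediate variables of expansion paths are distinct. This single observation is what you are missing, and without it neither of your two cases closes.

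Two smaller points. In the subcase $\Gamma^{<\infty}\not\subseteq\Gamma$ your claim that ``any cover must realise the full pumped length'' conflates two alternatives that must be separated: either the homomorphic image of the small cover $\lambda'$ avoids all pumped atoms, in which case $\lambda'\to\lambda^{<\infty}$ and you contradict the choice of $\lambda^{<\infty}$ as an uncovered witness (via Lemma~\ref{lemma:cont-crpq}); or it hits a pumped atom, in which case the short-cycle argument above gives the contradiction --- the cover does not ``realise the pumped length'', it simply cannot exist. Finally, routing unboundedness through Proposition~\ref{prop:basic}(3) (infinitely many pairwise incomparable \emph{minimal} expansions) adds an unnecessary minimality obligation; item (2) of that proposition --- for every $k$ some expansion admits no cover of size $\leq k$ --- is what the construction directly delivers and is how the paper argues.
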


\begin{proof}
From right to left, if $\Gamma$ is $\epsilon$-trivial then it is bounded. Otherwise, $\Gamma^{<\infty}\subseteq \Gamma$ and then $\Gamma^{<\infty}$ is equivalent to $\Gamma$ (as $\Gamma\subseteq \Gamma^{<\infty}$ always holds). Since $\Gamma^{<\infty}$ is bounded, then $\Gamma$ is also bounded. 
From left to right, suppose $\Gamma$ is bounded and not $\epsilon$-trivial. 
By Proposition~\ref{prop:basic}, there is $k\geq 1$ such that for every expansion $\lambda$ of $\Gamma$ 
there is an expansion $\lambda'$ of $\Gamma$ such that $\|\lambda'\|\leq k$ and $\lambda'\to\lambda$ \, ($\dagger$). 
We show first that  $\Gamma^{<\infty}$ is well-defined. 
By contradiction suppose this is not the case. In particular, all the RPQs in $\Gamma$ are infinite. 
We pick an arbitrary disjunct $\gamma$ of $\Gamma$ and 
an expansion $\lambda_{>k}$ of $\gamma$ obtained from choosing a word $w\in L$ with $|w|>k$, for every atom $x\xrightarrow{L} y$ of $\gamma$. 
By ($\dagger$), there is an expansion $\lambda'$ such that $\|\lambda'\|\leq k$ and $\lambda'\to \lambda_{>k}$. 
Since $\Gamma$ is not $\epsilon$-trivial, it follows that there at least one (non-equality) atom $x\xrightarrow{a} y$ in $\lambda'$.  
Since $\gamma$ is strongly connected, $\lambda'$ has a (labeled) directed cycle containing $x\xrightarrow{a} y$ (\ie, number of edges) at most $k$. 
Since every directed cycle in $\lambda_{>k}$ has length greater than $k$, we have a contradiction with the fact that $\lambda'\to \lambda_{>k}$. 

Now we show $\Gamma^{<\infty}\subseteq \Gamma$ using Lemma~\ref{lemma:cont-crpq}.
Let $\lambda$ be any expansion of $\gamma^{<\infty}$ in $\Gamma^{<\infty}$. If $\gamma=\gamma^{<\infty}$, then we are done. 
Otherwise, consider the expansion $\lambda_{>k}$ of $\gamma$ obtained by (1) choosing the same word as in $\lambda$ for atoms 
$x\xrightarrow{L} y$ with $L$ finite, and (1) choosing a word $w\in L$ such that $|w|>k$ for the atoms $x\xrightarrow{L} y$ with $L$ infinite.  
Note that we can partition  the (non-equality) atoms of $\lambda_{>k}$ into those generated in case (1), denoted by $A^{<\infty}$ and those generated in case (2), denoted by $A^{\infty}$. 
By ($\dagger$), there is an expansion $\lambda'$ of $\Gamma$ with $\|\lambda'\|\leq k$ such that $\lambda'\to\lambda_{>k}$ via a homomorphism $h$. 
%Since $\Gamma$ is not $\epsilon$-trivial, we have at least one (non-equality) atom $x\xrightarrow{a} y$ in $\lambda'$.  
We claim that the image via $h$ of every atom $x\xrightarrow{a} y$ in $\lambda'$ belongs to $A^{<\infty}$. By contradiction, 
suppose $h(x)\xrightarrow{a} h(y) \in A^{\infty}$. 
Since $\gamma$ is strongly connected, $\lambda'$ has a (labeled) directed cycle containing $x\xrightarrow{a} y$ of length $\leq k$. 
This is a contradiction as $\lambda'\to\lambda_{>k}$ and every directed cycle in $\lambda_{>k}$ has length $>k$. 
Hence, $\lambda'\to \lambda$. By Lemma~\ref{lemma:cont-crpq}, we obtain that $\Gamma^{<\infty}\subseteq \Gamma$. 
\end{proof}

For the lower bound, we reduce from the following well-known $\Pi_2^P$-complete problem: 
Given a connected (undirected) graph $G=(V,E)$ and $k\geq 1$ (given in unary), check whether for every function $c:V\to \{0,1\}$, there is a 
clique $K$ in $G$ of size $k$ such that $c(u)=c(v)$ for all nodes $u,v$ in $K$.
(Recall that a clique is a graph with an edge between each pair of distinct nodes.)

Given $G=(V,E)$ and $k\geq 1$, we define a Boolean strongly connected UCRPQ $\Gamma$ over the alphabet $\A:=\{a,b,0,1\}$ as follows. 
Let $\gamma_G$ be the Boolean CRPQ with variable set $\{x_u: u\in V\}$ where we have atoms $x_u\xrightarrow{a}x_v, x_u\xrightarrow{b}x_v, x_v\xrightarrow{a}x_u, x_v\xrightarrow{b}x_u$, for each edge $\{u,v\}\in E$, 
and an atom $x_u\xrightarrow{0+1}x_u$, for each node $u\in V$. 
For $\ell\in \{0,1\}$, we define the Boolean CRPQ $\gamma_k^\ell$ to have variable set $\{z_0,\dots,z_{k-1}\}$, atoms $z_i\xrightarrow{a}z_j, z_j\xrightarrow{a}z_i$, 
for each pair $i\neq j\in \{0,\dots,k-1\}$, and an atom $z_i\xrightarrow{\ell}z_i$, for each $i\in \{0,\dots,k-1\}$.
We pick an arbitrary node $u_0$ from $G$ and for every $\ell\in \{0,1\}$, 
we define $\gamma_G^\ell$ to be the Boolean CRPQ obtained from the (disjoint) conjunction of $\gamma_G$ and $\gamma_k^\ell$ by 
adding the atoms $x_{u_0}\xrightarrow{b^*}z_0, z_0\xrightarrow{b^*}x_{u_0}$. 
Then we let $\Gamma:=\gamma_G^0 \lor \gamma_G^1$. Note that $\Gamma$ is actually strongly connected. 
Also observe that $\Gamma^{<\infty}=(\gamma_G^0)^{<\infty} \lor (\gamma_G^1)^{<\infty}$, 
where $(\gamma_G^\ell)^{<\infty}$ is the (disjoint) conjunction of $\gamma_G$ and $\gamma_k^\ell$, for $\ell\in \{0,1\}$. 

We claim that $G,k$ is a positive instance iff $\Gamma$ is bounded. 
We show that $G,k$ is a positive instance iff $\Gamma^{<\infty}\subseteq \Gamma$, a hence the claim follows from Lemma~\ref{lemma:char-connected}. 
Suppose that $G,k$ is a positive instance. We prove $\Gamma^{<\infty}\subseteq \Gamma$ using Lemma~\ref{lemma:cont-crpq}. 
Let $\lambda$ be an any expansion of $\Gamma^{<\infty}$. 
Then there is $\ell\in\{0,1\}$ and a function $c:V\to\{0,1\}$ such that $\lambda$ is the disjoint conjunction of $\gamma_G^c$ and $\gamma_k^\ell$, 
where $\gamma_G^c$ is obtained from $\gamma_G$ by replacing $x_u\xrightarrow{0+1}x_u$ with $x_u\xrightarrow{c(u)}x_u$, for each $u\in V$. 
By hypothesis, $G$ contains a clique $K$ of size $k$ with $c(z)=\ell'$, for each $z$ in $K$ and some fixed $\ell'\in \{0,1\}$. Pick an arbitrary node $z^*$ of $K$ and 
consider the expansion $\lambda'$ of $\gamma_G^{\ell'}$ given by $\lambda_G^c$ $\land \gamma_k^{\ell'}$ $\land x_{u_0}\xrightarrow{b^d}z_0$ $\land z_0\xrightarrow{b^d}x_{u_0}$, 
where $d\geq 0$ is the distance in $G$ from $u_0$ to $z^*$. 
Then $\lambda'\to \lambda_G^c\to \lambda$ via the homomorphism $h$ that is the identity over $\lambda_G^c$ 
and maps $\gamma_k^{\ell'}$ to $\{x_z:\text{$z$ in $K$}\}$ with $h(z_0)=x_{z^*}$. Hence, $\Gamma^{<\infty}\subseteq \Gamma$. 

Suppose now that $\Gamma^{<\infty}\subseteq \Gamma$ and let $c$ be any function $c:V\to\{0,1\}$. 
Consider the expansion $\lambda_c$ of $\gamma_G^{0}$ given by $\gamma_G^c\land \gamma_k^0$, where $\gamma_G^c$ 
is defined as above. By Lemma~\ref{lemma:cont-crpq}, there is an expansion $\lambda_\ell$ of some $\gamma_G^\ell$ with $\ell\in \{0,1\}$, 
such that $\lambda_\ell\to \lambda_c$. Since $\lambda_\ell$ is connected, and contains symbols $b\in \A$, 
it is the case that $\lambda_\ell\to \gamma_G^c$ via $h$. 
Hence $\{h(z_i):i\in \{0,\dots,k-1\}\}$ must correspond to a clique $K$ of $G$ with $c(z)=\ell$, for all $z$ in $K$.

\end{document}